\def\Q{\mathbb Q}
\def\N{\mathbb N}
\def\A{\mathcal A}
\def\B{\mathcal B}
\def\C{\mathcal C}
\def\E{\mathcal E}
\def\LL{\mathcal L}
\def\R{\mathcal R}
\def\S{\mathcal S}
\def\W{\mathcal W}
\def\uu{\mathbf u}
\def\vv{\mathbf v}
\def\xx{\mathbf x}
\def\yy{\mathbf y}
\def\dd{\mathbf d}
\def\ff{\mathbf f}
\def\gg{\mathbf g}
\def\bs{\beta}
\def\Parickh{\vec{\Psi}}
\newcommand{\gap}[2]{\textrm{gap}(#1,#2)}
\def\ind{\text{ind}}
\def\CR{E}
\def\lcm{\text{lcm}}
\def\barva{{\rm colour}}
\newtheorem{thm}{Theorem}
\newtheorem{Observation}[thm]{Observation}
\begin{document}
	
	\title{On balanced sequences and their critical exponent\thanks{The research received funding from  the Ministry of Education, Youth and Sports of the Czech Republic through the project  CZ.02.1.01/0.0/0.0/16\_019/0000765 and CZ.02.1.01/0.0/0.0/16\_019/0000778.}}
	
	\author{Francesco Dolce\inst{1} \and L\!'ubom\'{i}ra Dvo\v{r}\'{a}kov\'{a}\inst{2} \and Edita Pelantov\'{a}\inst{2}}
	
	\authorrunning{F. Dolce et al.}
	
	\institute{FIT, Czech Technical University in Prague, Czech Republic \and FNSPE, Czech Technical University in Prague, Czech Republic \\ \email{dolce.fra@fit.cvut.cz, \\ \{lubomira.dvorakova, edita.pelantova\}@fjfi.cvut.cz}}
	
	\maketitle

	\begin{abstract}
		We study aperiodic balanced sequences over finite alphabets.
		A sequence $\vv$ of this type is fully characterised by a Sturmian sequence $\uu$ and two constant gap sequences $\yy$ and $\yy'$.
		We show that the language of $\vv$ is eventually dendric and we focus on return words to its factors.
		We develop a method for computing the critical exponent and asymptotic critical exponent of balanced sequences, provided the associated Sturmian sequence $\uu$ has a quadratic slope.
		The method is based on looking for the shortest return words to bispecial factors in $\vv$.
		We illustrate our method on several examples; in particular we confirm a conjecture of Rampersad, Shallit and Vandomme that two specific sequences have the least critical exponent among all balanced sequences over $9$-letter (resp., $10$-letter) alphabets.
	\end{abstract}
	
	\keywords{balanced sequences \and critical exponent \and Sturmian sequences \and return words \and bispecial factors}

	\section{Introduction}
	An infinite sequence over a finite alphabet is balanced if, for any two of its factors $u$ and $v$ of the same length, the number of occurrences of each letter in $u$ and $v$ differs by at most 1.
	Over a binary alphabet aperiodic balanced sequences coincide with Sturmian sequences, as shown by Hedlund and Morse~\cite{HeMo}.
	Hubert~\cite{Hu} provided a construction of balanced sequences on a $d$-letter alphabet (see also~\cite{Gr}).
	It consists in colouring the letters of a Sturmian sequence $\uu$ by two constant gap sequences $\yy$ and $\yy'$.
	In this paper we study combinatorial properties of balanced sequences.
	We first show that such sequences belong to the class of eventually dendric sequences introduced by Berthé et al.~\cite{BertheDeFeliceDolceLeroyPerrinReutenauerRindone2015}.
	We give formul\ae \ for the factor complexity and the number of return words to each factor.
	The main goal of this paper is to develop a method for computing the critical exponent and asymptotic critical exponent of a given balanced sequence.
	To help achieve this goal we deduce, in Section~\ref{Section_CriticalExponent}, new formul\ae \ expressing the critical exponent and asymptotic critical exponent of a general uniformly recurrent sequence.
	They exploit notions of bispecial factor and return word.   
	
	Our work can be understood as a continuation of research on balanced sequences with the least critical exponent initiated by Rampersad, Shallit and Vandomme~\cite{RaShVa}.
	
	Finding the best lower bound on the critical exponent of sequences
	over an alphabet of size $d$	is a classical problem.
	The answer is the well-known Dejean’s conjecture~\cite{Dejean}, that despite the name is not a conjecture anymore, since it was proved step by step by several people.
	The least critical exponent was determined also for some particular classes of sequences: by Carpi and de Luca~\cite{Cade} for Sturmian sequences, and by Currie, Mol and Rampersad~\cite{CuMoRa} for binary rich sequences.
	
	Recently, Rampersad, Shallit and Vandomme~\cite{RaShVa} found balanced sequences with the least critical exponent over alphabets of size 3 and 4 and also conjectured that the least critical exponent of balanced sequences over a $d$-letter alphabet with $d \geq 5$ is $\frac{d-2}{d-3}$.
	Their conjecture was confirmed for $d \leq 8$ by Baranwal and Shallit~\cite{BaranThesis,BaShBalanced}.
	
	Here we first focus on the asymptotic critical exponent, which reflects repetitions of factors of length growing to infinity.
	We show that the asymptotic critical exponent depends on the slope of the associated Sturmian sequence and, unlike the critical exponent, on the length of the minimal periods of $\yy$ and $\yy'$, but not on $\yy$ and $\yy'$ themselves.
	We also give a general lower bound on the asymptotic critical exponent.
	We provide an algorithm computing the exact value of the asymptotic critical exponent for balanced sequences originating from Sturmian sequences with a quadratic slope.
	The algorithm ignores the behaviour of short bispecial factors.
	
	Secondly, we refine our approach to all bispecial factors, not only the sufficiently long ones.
	It enables us to extend our algorithm and compute the critical exponent of the balanced sequences as well.
	For $d=9$ and $d=10$ we confirm, using our algorithm, the conjecture that the least critical exponent of balanced sequences over a $d$-letter alphabet is $\frac{d-2}{d-3}$.
	However, in the course of the referee process of this paper, the conjecture was disproved by introducing $d$-ary balanced sequences with the critical exponent equal to $\frac{d-1}{d-2}$ for $d=11$ and also for all even $d$'s larger than 10 (see~\cite{DvOpPeSh2022}). 
	
	Some of the results in this paper were first presented, in a less general form and without detailed proofs, in two conference papers~\cite{LATA,WORDS}.
	For the sake of self-consistency we decided to add all proofs here, even the ones that already appeared in the two previous contributions.
	
	The algorithms computing asymptotic critical exponent and critical exponent of balanced sequences were implemented by our student Daniela Opo\v censk\'a.
	We are very grateful for her careful, readily usable and user-friendly implementation, which was extremely helpful for us.
	We also want to thank the anonymous referee for their comments and suggestions that helped us improving the presentation of this paper.

	\section{Preliminaries}
	\label{Section_Preliminaries}
	
	An \textit{alphabet} $\A$ is a finite set of symbols called \textit{letters}.
	A (finite) \textit{word} over $\A$ of \textit{length} $n$ is a string $u = u_0 u_1 \cdots u_{n-1}$, where $u_i \in \A$ for all $i \in \{0,1, \ldots, n-1\}$.
	The length of $u$ is denoted by $|u|$.
	If $u_0 u_i \cdots u_{n-1} = u_{n-1} u_{n-2} \cdots u_0$, we call the word $u$ a \emph{palindrome}.
	The set of all finite words over $\A$ together with the operation of concatenation forms a monoid, denoted $\A^*$.
	Its neutral element is the \textit{empty word} $\varepsilon$ and we write $\A^+ = \A^* \setminus \{\varepsilon\}$.
	
	If $u = xyz$ for some $x,y,z \in \A^*$, then $x$ is a \textit{prefix} of $u$, $z$ is a \textit{suffix} of $u$ and $y$ is a \textit{factor} of $u$.
	We sometimes use the notation $yz = x^{-1}u$.
	To every word $u$ over $\A$ with cardinality $\#\A = d$,  we assign its \textit{Parikh vector} $\Parickh(u) \in \N^{d}$ defined as $(\Parickh(u))_a = |u|_a$ for all $a \in \A$, where $|u|_a$ is the number of letters $a$ occurring in $u$.
	
	A \textit{sequence} over $\A$ is an infinite string $\uu = u_0 u_1 u_2 \cdots$, where $u_i \in \A$ for all $i \in \N$.
	In this paper we always denote sequences by bold letters.
	A sequence $\uu$ is \textit{eventually periodic} if $\uu = vwww \cdots = v w^\omega$ for some $v \in \A^*$ and $w \in \A^+$.
	It is periodic if $\uu = w^{\omega}$.
	If $\uu$ is not eventually periodic, then it is \textit{aperiodic}.
	A \textit{factor} of $\uu = u_0 u_1 u_2 \cdots$ is a word $y$ such that $y = u_i u_{i+1} u_{i+2} \cdots u_{j-1}$ for some $i, j \in \N$, $i \leq j$.
	The number $i$ is called an \textit{occurrence} of the factor $y$ in $\uu$.
	In particular, if $i = j$, the factor $y$ is the empty word $\varepsilon$ and every index $i$ is its occurrence.
	If $i=0$, the factor $y$ is a \textit{prefix} of $\uu$.
	If each factor of $\uu$ has infinitely many occurrences in $\uu$, the sequence $\uu$ is \textit{recurrent}.
	Moreover, if for each factor the distances between its consecutive occurrences are bounded, $\uu$ is \textit{uniformly recurrent}.
	
	The \textit{language} $\LL(\uu)$ of a sequence $\uu$ is the set of all its factors.
	A factor $w$ of $\uu$ is \textit{right special} if $wa, wb$ are in $\LL(\uu)$ for at least two distinct letters $a,b \in \A$.
	Analogously, we define a \textit{left special} factor.
	A factor is \textit{bispecial} if it is both left and right special.
	Note that the empty word $\varepsilon$ is bispecial if at least two distinct letters occur in $\uu$.
	The \textit{factor complexity} of a sequence $\uu$ is the mapping $\C_\uu: \N \to \N$ defined by
	$\C_\uu(n) = \# \{w \in \LL(\uu) : |w| =  n \}$.
	The first difference of the factor complexity is defined as $s_{\uu}(n) = \C_{\uu}(n+1) - \C_{\uu}(n)$.
	
	Given a word $w \in \LL(\uu)$, we define the sets of left extensions, right extensions and bi-extensions of $w$ in $\LL(\uu)$ respectively as
	$$
	L_{\uu}(w) = \{ a \in \A : aw \in \LL(\uu) \},
	\qquad
	R_{\uu}(w) = \{ b \in \A : wb \in \LL(\uu) \}
	$$
	and
	$$
	B_{\uu}(w) = \{ (a,b) \in \A \times \A : awb \in \LL(\uu) \}. $$
	The \emph{extension graph} of $w$ in $\LL(\uu)$, denoted $\E_{\uu}(w)$, is the undirected bipartite graph whose set of vertices is the disjoint union of $L_{\uu}(w)$ and $R_{\uu}(w)$ and whose edges are the elements of $B_{\uu}(w)$.
	A sequence $\uu$ (resp., a language $\LL(\uu)$) is said to be \emph{eventually dendric} with \emph{threshold} $m \ge 0$ if $\E_\uu(w)$ is a tree for every word $w \in \LL(\uu)$ of length at least $m$.
	It is said to be \emph{dendric} if we can choose $m = 0$.
	Dendric languages were introduced by Berthé et al.~\cite{BertheDeFeliceDolceLeroyPerrinReutenauerRindone2015} under the name of tree sets.
	It is known that Sturmian sequences are dendric.
	
	\begin{example}
		\label{ex:Fibo_extGraph}
		Let $\uu$ be a sequence such that its only factors of length at most three are $\varepsilon$, ${\tt a}$, ${\tt b}$, ${\tt aa}$, ${\tt ab}$, ${\tt ba}$, ${\tt aab}$, ${\tt aba}$, ${\tt baa}$, ${\tt bab}$ (an explicit instance of such a sequence will be given in Example~\ref{ex:FiboDef}).
		The extension graphs $\E_{\uu}(\varepsilon)$, $\E_{\uu}(\tt a)$ and $\E_{\uu}(\tt b)$ are shown in Figure~\ref{fig:FiboEG}.
	\end{example}
	
	\begin{figure}[hbt]
		\centering
		\tikzset{node/.style={rectangle,draw,rounded corners=1.2ex}}
		\begin{tikzpicture}
			\node[node](eal) {$\tt a$};
			\node[node](ebl) [below= 0.1cm of eal] {$\tt b$};
			\node[node](ear) [right= 1cm of eal] {$\tt a$};
			\node[node](ebr) [below= 0.1cm of ear] {$\tt b$};
			\path[draw,thick, shorten <=0 -1pt, shorten >=-1pt]
			(eal) edge node {} (ear)
			(eal) edge node {} (ebr)
			(ebl) edge node {} (ear);
			\node[node](aal) [right= 2cm of ear] {$\tt a$};
			\node[node](abl) [below= 0.1cm of aal] {$\tt b$};
			\node[node](aar) [right= 1cm of aal] {$\tt a$};
			\node[node](abr) [below= 0.1cm of aar] {$\tt b$};
			\path[draw,thick]
			(aal) edge node {} (abr)
			(abl) edge node {} (aar)
			(abl) edge node {} (abr);
			\node[node](bal) [below right = 0cm and 2cm of aar] {$\tt a$};
			\node[node](bar) [right= 1cm of bal] {$\tt a$};
			\path[draw,thick]
			(bal) edge node {} (bar);
		\end{tikzpicture}
		\caption{The graphs $\E_{\uu}(\varepsilon)$ (on the left), $\E_{\uu}(\tt a)$ (in the centre) and $\E_{\uu}(\tt b)$ (on the right).}
		\label{fig:FiboEG}
	\end{figure}
	
	Aperiodic sequences with the lowest possible factor complexity, i.e., such that $\C_{\uu}(n) = n+1$ for all $n \in \N$, are called \textit{Sturmian sequences} (for other equivalent definitions see~\cite{BaPeSt}).
	Clearly, all Sturmian sequences are defined over a binary alphabet, e.g., $\{ {\tt a,b} \}$.
	Moreover, they are such that $s_{\uu}(n) = 1$ for every $n \in \N$.
	If both sequences ${\tt a}\uu$ and ${\tt b}\uu$ are Sturmian, then $\uu$ is called a \textit{standard Sturmian sequence}.
	In other words, for a standard Sturmian sequence $\uu$, the left special factors are exactly the prefixes of $\uu$.
	Moreover, bispecial factors correspond to palindromic prefixes of $\uu$.
	It is well-known that for every Sturmian sequence there exists a unique standard Sturmian sequence with the same language.
	
	A~sequence $\uu$ over the alphabet $\A$ is \textit{balanced} if for every letter $a \in \A$ and every pair of factors $u,v \in \LL(\uu)$ with $|u| = |v|$, we have $|u|_a - |v|_a \leq 1$.
	The class of Sturmian sequences and the class of aperiodic balanced sequences over a~binary alphabet coincide (see~\cite{HeMo}).
	Vuillon~\cite{Vui} provides a survey on some previous work on balanced sequences.
	
	A \textit{morphism} over $\A$ is a mapping $\psi: \A^* \to \A^*$ such that $\psi(uv) = \psi(u) \psi(v)$ for all $u, v \in \A^*$.
	A morphism $\psi$ can be naturally extended to sequences by setting
	$\psi(u_0 u_1 u_2 \cdots) = \psi(u_0) \psi(u_1) \psi(u_2) \cdots\,$.
	A \textit{fixed point} of a morphism $\psi$ is a sequence $\uu$ such that $\psi(\uu) = \uu$.
	
	Consider a factor $w$ of a recurrent sequence $\uu = u_0 u_1 u_2 \cdots$.
	Let $i < j$ be two consecutive occurrences of $w$ in $\uu$.
	Then the word $u_i u_{i+1} \cdots u_{j-1}$ is a \textit{return word} to $w$ in $\uu$.
	The set of all return words to $w$ in $\uu$ is denoted by $\R_\uu(w)$.
	If $\uu$ is uniformly recurrent, the set $\R_\uu(w)$ is finite for each factor $w$.
	The opposite is true if $\uu$ is recurrent.
	In this case, if $p$ is the shortest prefix of $\uu$ such that $pw$ is a prefix of $\uu$, then $p^{-1}\uu$ can be written as a concatenation $p^{-1} \uu = r_{d_0} r_{d_1} r_{d_2} \cdots$ of return words to $w$.
	The \textit{derived sequence} of $\uu$ to $w$ is the sequence
	$\dd_\uu(w) = d_0 d_1 d_2 \cdots$
	over the alphabet of cardinality $\# \R_\uu(w)$.
	The concept of derived sequences was introduced by Durand~\cite{Dur98}.
	
	\begin{remark}
		\label{rem:retwords_extension_to_BS}
		If $\uu$ is an aperiodic recurrent sequence, then each factor $u$ can be uniquely extended to the shortest bispecial factor $b = xuy$ for some possibly empty factors $x, y$.
		It is readily seen that $\R_\uu(b) = x \R_\uu(u) x^{-1}$.
		In particular, the Parikh vectors (and obviously the lengths) of return words to $u$ and $b$ coincide.
		
	\end{remark}
	
	\begin{example}
		\label{ex:FiboDef}
		The well-known \emph{Fibonacci sequence} is the sequence
		$$
		\ff = {\tt abaababaabaababaababaabaababaab}\cdots\,
		$$
		obtained as fixed point of the morphism $\varphi: {\tt a} \mapsto {\tt ab}$, ${\tt b} \mapsto {\tt a}$. Such a sequence is Sturmian (see~\cite{PytheasFogg2002}).
		The reader is invited to check that the return words to the bispecial prefix $b = {\tt abaaba}$ are $r = {\tt  abaab}$ and $s = {\tt aba}$.
		The return words to the factor $u = {\tt aaba}$ are $\hat{r} = {\tt aabab}$ and $\hat{s} = {\tt aab}$.
		This corresponds to Remark~\ref{rem:retwords_extension_to_BS} because $b = {\tt abaaba} = xuy = {\tt ab} u \varepsilon$ and the return words to $b$ and $u$ satisfy
		$r = {\tt  abaab} = x \hat{r} x^{-1}$ and $s = {\tt aba} = x \hat{s} x^{-1}$.
		Note that the extension graphs of the factors of length at most one in $\LL(\ff)$ are the same as in Example~\ref{ex:Fibo_extGraph}.
	\end{example}
	
	Vuillon~\cite{Vui01} showed that an infinite recurrent sequence $\uu$ is Sturmian if and only if each of its factors has exactly two return words.
	Moreover, the derived sequence to a factor of a Sturmian sequence is Sturmian too.

	\section{Languages of balanced sequences}
	\label{Section_BalancedSturmian}
	
	In 2000 Hubert~\cite{Hu} characterised balanced sequences over alphabets of higher cardinality (see also~\cite{Gr}).
	A suitable tool for their description is the notion of constant gap.
	
	\begin{definition}
		\label{Def_ConstantGap}
		A~sequence $\yy$ over an alphabet $\A$ is a {\em constant gap sequence} if for each letter $a \in \A$ appearing in $\yy$ there is a positive integer denoted by ${\rm gap}_{\yy}(a)$ such that the distance between successive occurrences of $a$ in $\yy$ is always ${\rm gap}_{\yy}(a)$.
	\end{definition}
	
	Obviously, every constant gap sequence is periodic.
	
	We let ${\rm Per}(\yy)$ denote the minimal period length of $\yy$.
	Note that ${\rm gap}_{\yy}(a)$ divides ${\rm Per}(\yy)$ for each letter $a$ appearing in $\yy$.
	Given a constant gap sequence $\yy$ and a word $u \in \LL(\yy)$ we let ${\rm gap}_{\yy}(u)$ denote the length of the gap between two successive occurrences of $u$ in $\yy$.
	Note that ${\rm gap}_{\yy}(\varepsilon) = 1$.
	If $u = u_0 u_1 \cdots u_{k-1}$, with $u_i \in \A$, then ${\rm gap}_{\yy}(u) = \lcm \{ {\rm gap}_{\yy}(u_i) \; :  \; 0 \le i \le k-1 \}$.
	
	\begin{example}
		\label{ex:sequences}
		The sequences $\yy = ({\tt 0102})^{\omega}$ and $\yy' = ({\tt 34})^\omega$ are constant gap sequences.
		Indeed, the distance between consecutive occurrences of a letter $a \in \{ {\tt 3,4} \}$ is always $2$ in $\yy'$;
		while in $\yy$ one has ${\rm gap}_{\yy}({\tt 0}) = 2,\ {\rm gap}_{\yy}({\tt 1}) = {\rm gap}_{\yy}({\tt 2})  = 4$ and ${\rm gap}_{\yy}(u) = 4$ for each factor $u$ of $\yy$ with $|u| \ge 2$.
		Clearly ${\rm Per}(\yy) = 4$ and ${\rm Per}(\yy') = 2$.
		
		The sequence $({\tt 011})^{\omega}$ is periodic but it is not a constant gap sequence since the distance between consecutive ${\tt 1}$'s is sometimes $1$ and sometimes $2$.
	\end{example}
	
	\begin{Observation}
		\label{obs:per}
		Given a constant gap sequence $\yy$ we have
		$$
		{\rm Per}(\yy) = \max \left\{ {\rm gap}_{\yy}(u) : u \in \LL(\yy) \right\}.
		$$
		Moreover, ${\rm gap}_{\yy}(u)$ divides ${\rm Per}(\yy)$ for every factor $u \in \LL(\yy)$.
	\end{Observation}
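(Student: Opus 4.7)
The plan is to leverage the two facts recorded in the paragraph preceding the observation: (i) ${\rm gap}_{\yy}(a)$ divides ${\rm Per}(\yy)$ for every letter $a$ appearing in $\yy$, and (ii) ${\rm gap}_{\yy}(u) = \lcm\{{\rm gap}_{\yy}(u_i) : 0 \le i < k\}$ for every factor $u = u_0 u_1 \cdots u_{k-1}$ of $\yy$. The second (divisibility) assertion of the observation then falls out immediately by combining (i) and (ii): each ${\rm gap}_{\yy}(u_i)$ divides ${\rm Per}(\yy)$, so the lcm appearing in (ii) does too, i.e.\ ${\rm gap}_{\yy}(u) \mid {\rm Per}(\yy)$.

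For the equality ${\rm Per}(\yy) = \max\{{\rm gap}_{\yy}(u) : u \in \LL(\yy)\}$, I will set $T = {\rm Per}(\yy)$ and $L = \lcm\{{\rm gap}_{\yy}(a) : a \text{ appears in } \yy\}$. The divisibility claim just proved gives $L \mid T$. For the reverse inequality I plan to show that $L$ is itself a period of $\yy$: for each letter $a$ the occurrences of $a$ in $\yy$ form an arithmetic progression of common difference ${\rm gap}_{\yy}(a)$, and because ${\rm gap}_{\yy}(a) \mid L$ the shift by $L$ preserves this progression. Hence $\yy_{i+L} = \yy_i$ for every index $i$, and by minimality of $T$ this forces $T \le L$, whence $L = T$. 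To exhibit a factor attaining the maximum I will take the prefix $u$ of $\yy$ of length $T$: being a full period it contains every letter appearing in $\yy$, so by (ii) we obtain ${\rm gap}_{\yy}(u) = L = T$. Combined with the divisibility bound, this proves the claimed equality.

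The only genuinely non-formal step is the identification $L = T$, since that is where one must invoke the minimality built into the definition of ${\rm Per}(\yy)$ (rather than merely manipulate periodicity abstractly); everything else amounts to pure bookkeeping with facts (i) and (ii) already supplied by the paper.
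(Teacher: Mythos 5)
Your proof is correct, and it takes exactly the route the paper intends: the paper states this as an Observation with no written proof, relying on the two facts recorded in the preceding paragraph (divisibility of letter gaps into ${\rm Per}(\yy)$ and the lcm formula for ${\rm gap}_{\yy}(u)$), which are precisely the ingredients you combine. Your elaboration — deducing the divisibility claim from the lcm formula, showing $L=\lcm\{{\rm gap}_{\yy}(a)\}$ is itself a period so that minimality forces $L={\rm Per}(\yy)$, and exhibiting the length-${\rm Per}(\yy)$ prefix as a factor attaining the maximum — correctly fills in the details the authors leave implicit.
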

	
	Given a constant gap sequence $\yy$, for every positive integer $n$ we define the set
	$$
	\gap{\yy}{n} = \{ i \; : \; \exists \, y \in \LL(\yy), |y| = n, \; {\rm gap}_{\yy}(y) = i \}\,.
	$$
	It is clear that $\gap{\yy}{0} = \{1\}$ for every constant gap sequence $\yy$.
	
	\begin{example}
		Let $\yy = ({\tt 0102})^{\omega}$ and $\yy' = ({\tt 34})^\omega$ be the sequences as in Example~\ref{ex:sequences}.
		One has $\gap{\yy'}{n} = \{ 2 \}$ for every $n \ge 1$; $\gap{\yy}{1}= \{ 2,4 \}$ and $\gap{\yy}{n}= \{ 4\}$ for every $n \ge 2$.
	\end{example}
	
	\begin{theorem}[\cite{Hu}]
		\label{Hubert}
		A recurrent aperiodic sequence $\vv$ is balanced if and only if $\vv$ is obtained from a Sturmian sequence $\uu$ over $\{ {\tt a,b} \}$ by replacing the ${\tt a}$'s in $\uu$ by a constant gap sequence $\yy$ over some alphabet $\A$, and replacing the ${\tt b}$'s in $\uu$ by a constant gap sequence $\yy'$ over some alphabet $\B$ disjoint from $\A$.
	\end{theorem}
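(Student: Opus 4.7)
The plan is to prove both directions of the equivalence separately. For the ``if'' direction I use the standard reduction that a sequence is balanced if and only if, for every letter $c$ of its alphabet, the binary indicator sequence of the positions of $c$ is $1$-balanced. Given $\vv$ built from a Sturmian sequence $\uu$ of irrational slope $\alpha$ and constant gap sequences $\yy, \yy'$, I fix a letter $c$. If $c \in \A$ with $g = {\rm gap}_{\yy}(c)$, then the positions of $c$ in $\vv$ are precisely those positions in $\uu$ where the letter is ${\tt a}$ \emph{and} the rank of that ${\tt a}$ among all ${\tt a}$'s lies in a fixed residue class modulo $g$. Since the positions of ${\tt a}$ in a Sturmian sequence form a Beatty sequence of slope $1/\alpha$, the every-$g$-th subset is again a Beatty sequence (of slope $g/\alpha$), which is $1$-balanced. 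The case $c \in \B$ is symmetric, so $\vv$ is balanced.

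For the ``only if'' direction, assume $\vv$ is balanced, aperiodic and recurrent over a finite alphabet $\C$. Because a balanced sequence admits uniform letter frequencies, I would introduce on $\C$ the equivalence relation $c \sim c'$ defined by $\rho_c/\rho_{c'} \in \Q$, where $\rho_c$ denotes the frequency of $c$ in $\vv$; reflexivity, symmetry and transitivity are then immediate. The two equivalence classes will become the desired $\A$ and $\B$: aperiodicity of $\vv$ shows that at least two classes are necessary, since a single class forces $\vv$ to coincide with a periodic constant-gap sequence. The crucial step is to rule out three or more classes. Picking representatives $c_1, c_2, c_3$ with pairwise irrational frequency ratios, each pairwise projection $\pi_{c_i c_j}(\vv)$ (obtained by deleting from $\vv$ all letters outside $\{c_i, c_j\}$) is a balanced aperiodic binary sequence, hence by Hedlund--Morse a Sturmian sequence with slope $\rho_{c_i}/(\rho_{c_i} + \rho_{c_j})$. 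I would derive a contradiction by showing that no balanced sequence over three letters can have all three pairwise projections Sturmian with mutually irrational slopes.

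Once the partition $\C = \A \sqcup \B$ is established, the construction is direct. I define $\uu$ by replacing every $\A$-letter of $\vv$ by ${\tt a}$ and every $\B$-letter by ${\tt b}$. The result is binary, balanced as a projection of a balanced sequence, and aperiodic (a periodic $\uu$ combined with periodic colourings would force $\vv$ itself to be periodic), hence Sturmian. Let $\yy$ be the subsequence obtained from $\vv$ by erasing all $\B$-letters, and similarly $\yy'$. The relation $\rho_c/\rho_{c'} \in \Q$ for all $c, c' \in \A$ forces $\yy$ to be periodic; applying the letter-wise balance argument inside $\yy$ then yields that each letter of $\A$ has a well-defined constant gap in $\yy$, so $\yy$ is a constant gap sequence. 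A symmetric argument gives $\yy'$. By construction, substituting $\yy$ for the ${\tt a}$'s and $\yy'$ for the ${\tt b}$'s of $\uu$ recovers $\vv$.

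The main obstacle I expect is the ``at most two classes'' step in the ``only if'' direction: it requires a delicate frequency/counting argument on the three-letter projection to show that three mutually irrational Sturmian slopes cannot coexist within a balanced sequence. A secondary, more technical hurdle is upgrading $\yy$ from merely periodic to \emph{constant gap}, which is handled by re-applying the single-letter balance criterion letter by letter inside the periodic subsequence $\yy$.
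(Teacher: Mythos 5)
The paper does not actually prove this statement: Theorem~\ref{Hubert} is quoted from Hubert's article \cite{Hu} and used as a black box, so there is no in-paper proof to compare yours against. Judged on its own terms, your ``if'' direction is essentially the standard and correct argument: balancedness reduces to $1$-balancedness of each letter's indicator sequence, the positions of a letter $c \in \A$ with ${\rm gap}_{\yy}(c) = g$ are exactly the ${\tt a}$-positions of $\uu$ whose rank among the ${\tt a}$'s lies in one fixed residue class modulo $g$, and an arithmetic subsequence of a Beatty sequence with irrational modulus is again such a Beatty sequence, hence has a $1$-balanced indicator. That direction I would accept modulo routine details.

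The ``only if'' direction is where the entire difficulty of Hubert's theorem lives, and your proposal leaves precisely that part open. Concretely: (i) the claim that the relation $\rho_c/\rho_{c'} \in \Q$ has at most two classes is the heart of the theorem, and you only announce a plan for it (``I would derive a contradiction by showing\dots'') without carrying it out; (ii) that plan relies on the unproven assertion that the projection of a balanced sequence onto a pair of letters is again balanced and recurrent --- this is a genuine lemma, not a formality, because equal-length factors of the projection come from factors of $\vv$ of \emph{different} lengths, so balance does not transfer tautologically; (iii) you implicitly assume every letter has a well-defined irrational frequency and that a single equivalence class forces periodicity, neither of which is established. Finally, your last step is actually flawed as stated: knowing that $\yy$ is periodic and that each letter's indicator inside $\yy$ is balanced does \emph{not} yield the constant gap property --- the sequence $({\tt 011})^{\omega}$ is periodic and balanced, yet the gaps between consecutive ${\tt 1}$'s are $1$ and $2$ (the paper itself uses this example to show periodic $\neq$ constant gap). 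The constant gap structure of $\yy$ must be extracted from the balance of $\vv$ as a whole together with the irrationality of the Sturmian slope, which your sketch does not do. As it stands, the proposal is a correct treatment of the easy direction plus a programme, with at least one misstep, for the hard one.
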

	
	\begin{definition}
		Let $\uu$ be a Sturmian sequence over the alphabet $\{ {\tt a}, {\tt b}\}$, and ${\yy}, {\yy'}$ be two constant gap sequences over two disjoint alphabets $\A$ and $\B$.
		The \emph{colouring} of $\uu$ by $\yy$ and $\yy'$, denoted $\vv = \barva( \uu, \yy, \yy')$, is the sequence over $\A \cup \B$ obtained by the procedure described in Theorem~\ref{Hubert}.
	\end{definition}
	
	For $\vv = \barva( \uu, \yy, \yy')$ we use the notation $\pi(\vv) = \uu$ and $\pi(v) = u$ for any $v \in \LL(\vv)$ and the corresponding $u \in \LL(\uu)$.
	Symmetrically, given a word $u \in \LL(\uu)$, we write $\pi^{-1}(u) = \{ v \in \LL(\vv) \; : \; \pi(v) = u\}$.
	We say that $\uu$ (resp., $u$) is a \emph{projection} of $\vv$ (resp., $v$).
	The map $\pi : \LL(\vv) \to \LL(\uu)$ is clearly a morphism.
	Indeed $\pi(\varepsilon) = \varepsilon$ and for every $v, v' \in \LL(\vv)$ one has $\pi(v v') = \pi(v) \pi(v')$.
	
	\begin{example}
		\label{ex:Fibo}
		Let $\ff$ be as in Example~\ref{ex:FiboDef}.
		Let us consider the constant gap sequences $\yy = ({\tt 0102})^{\omega}$ and $\yy'=({\tt 34})^{\omega}$ over the alphabets $\A = \{ {\tt 0,1,2} \}$ and $\B = \{ {\tt 3,4} \}$ respectively.
		The sequence
		$$
		\gg = \barva ( \ff, \yy, \yy' ) = {\tt 0310423014023041032401302403104} \cdots
		$$
		is balanced according to Theorem~\ref{Hubert}.
		One has $\pi(\gg) = \ff$.
		Moreover,
		$\pi({\tt 031}) = \pi({\tt 041}) = {\tt aba}$,
		and
		$\pi^{-1}({\tt aba}) = \{ {\tt 031}, {\tt 032}, {\tt 041}, {\tt 042}, {\tt 130}, {\tt 140}, {\tt 230}, {\tt 240} \}$.
	\end{example}
	
	In the sequel we focus on symmetries of the languages of balanced sequences.
	To do that, the WDO property on binary sequences will be useful.
	
	\begin{definition}
		\label{WDO}
		An aperiodic sequence $\uu$ over $\{ {\tt a,b} \}$ has {\em well distributed occurrences}, or has the {\em WDO property}, if for every $m \in \N$ and for every $w \in \LL(\uu)$ one has
		$
		\left\{ \Parickh(p) \bmod m
		:
		pw \mbox{ is a prefix of } \uu
		\right\}
		=
		\mathbb{Z}_m^2.
		$
	\end{definition}
	
	It is known that Sturmian sequences have the WDO property (see~\cite{WDO}).
	
	\begin{example}
		Let $\ff$ be as in Example~\ref{ex:FiboDef} and let us consider $m = 2$ and $w = {\tt ab} \in \LL(\ff)$.
		Then it is easy to check that
		$$
		\begin{array}{cc}
			\Parickh(\varepsilon) \equiv \begin{pmatrix} 0 \\ 0 \end{pmatrix} \pmod 2, &
			\Parickh({\tt aba}) \equiv \begin{pmatrix} 0 \\ 1 \end{pmatrix} \pmod 2, \\
			\Parickh({\tt abaab}) \equiv \begin{pmatrix} 1 \\ 0 \end{pmatrix} \pmod 2, &
			\quad
			\Parickh({\tt abaababa}) \equiv \begin{pmatrix} 1 \\ 1 \end{pmatrix} \pmod 2,
		\end{array}
		$$
		where $w, {\tt aba}w, {\tt abaab}w$ and ${\tt abaababa}w$ are prefixes of $\ff$.
	\end{example}
	
	Using the WDO property we can prove that in order to study the language of aperiodic recurrent balanced sequences, it is enough to consider standard Sturmian sequences.
	
	The \emph{shift} of a constant gap sequence $\yy = (y_0 y_1 \cdots y_{k-1})^\omega$ is the sequence $\sigma(\yy) = (y_{1} y_2 \cdots y_{k-1} y_0)^\omega$.
	
	\begin{example}
		Let ${\yy = ({\tt 0102})^\omega}$.
		Then we have $\sigma^0({\yy}) = {\yy}$, \ $\sigma({\yy}) = ({\tt 1020})^\omega$, \ $\sigma^2({\yy}) = ({\tt 0201})^\omega$, \ $\sigma^3({\yy}) = ({\tt 2010})^\omega$ and $\sigma^4({\yy}) = {\yy}$.
	\end{example}
	
	The next proposition says that for a balanced sequence $\vv$, the language $\LL(\vv)$ does not depend on the projection $\uu$ itself but only on $\LL(\uu)$.
	In the following chapters, we will always consider balanced sequences obtained as colouring of standard Sturmian sequences.
	
	\begin{proposition}
		\label{pro:standardSturm}
		Let $\uu,\uu'$ be two Sturmian sequences such that $\LL(\uu) = \LL(\uu')$, $\yy$ and $\yy'$ two constant gap sequences over disjoint alphabets and $i,j \in \N$.
		Let $\vv = \barva( \uu, \yy, \yy')$, $\vv' = \barva( \uu', \yy, \yy')$ and $\vv'' = \barva( \uu, \sigma^i(\yy), \sigma^j(\yy'))$.
		Then $\LL(\vv) = \LL(\vv') = \LL(\vv'')$.
	\end{proposition}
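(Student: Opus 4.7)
The plan is to reduce both equalities to the WDO property of Sturmian sequences (see Definition~\ref{WDO}). The key observation is that, given a factor $u \in \LL(\uu)$ and a prefix $p$ of $\uu$ with $pu$ a prefix of $\uu$, the colouring of this particular occurrence of $u$ inside $\vv = \barva(\uu,\yy,\yy')$ is determined entirely by the two residues $|p|_{\tt a} \bmod {\rm Per}(\yy)$ and $|p|_{\tt b} \bmod {\rm Per}(\yy')$: the $k$-th ${\tt a}$ of $u$ is replaced by $y_{(|p|_{\tt a}+k-1) \bmod {\rm Per}(\yy)}$, and symmetrically for the ${\tt b}$'s of $u$. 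Setting $m = \lcm({\rm Per}(\yy), {\rm Per}(\yy'))$, this colouring depends only on $\Parickh(p) \bmod m$.

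For $\LL(\vv) = \LL(\vv')$ I would pick any $w \in \LL(\vv)$. By Theorem~\ref{Hubert}, $w$ is the colouring of some occurrence of $u = \pi(w)$ in $\uu$ preceded by a prefix $p$. Since $\LL(\uu) = \LL(\uu')$, the factor $u$ also occurs in $\uu'$. Applying the WDO property of $\uu'$ to the factor $u$ and the modulus $m$, I obtain a prefix $p'$ of $\uu'$ with $p'u$ a prefix of $\uu'$ and $\Parickh(p') \equiv \Parickh(p) \pmod{m}$. By the observation above, the colouring of this occurrence of $u$ in $\vv'$ coincides with $w$, so $w \in \LL(\vv')$; the reverse inclusion follows symmetrically.

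For $\LL(\vv) = \LL(\vv'')$ I take $w'' \in \LL(\vv'')$ arising from an occurrence of $u = \pi(w'')$ preceded by a prefix $p''$ of $\uu$. Since $\sigma^i(\yy)_k = y_{(k+i) \bmod {\rm Per}(\yy)}$, colouring this occurrence of $u$ by $\sigma^i(\yy), \sigma^j(\yy')$ produces the same word as colouring some (possibly different) occurrence of $u$ in $\uu$ by the original $\yy, \yy'$, namely any occurrence whose preceding prefix $p$ satisfies $|p|_{\tt a} \equiv |p''|_{\tt a}+i \pmod{{\rm Per}(\yy)}$ and $|p|_{\tt b} \equiv |p''|_{\tt b}+j \pmod{{\rm Per}(\yy')}$. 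The WDO property of $\uu$, applied to $u$ and $m$, furnishes such a prefix $p$, giving $w'' \in \LL(\vv)$; the reverse inclusion is again symmetric. The main technical point to verify is the clean translation between shifting the constant gap sequences and reindexing the occurrence inside $\uu$, but this is a direct unravelling of the definition of $\barva$ rather than a genuine obstacle.
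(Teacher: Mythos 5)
Your proof is correct and follows essentially the same route as the paper: both equalities are reduced to the WDO property of Sturmian sequences via the observation that the colouring of an occurrence of $u$ depends only on the Parikh vector of the preceding prefix modulo the periods of $\yy$ and $\yy'$. The only (cosmetic) difference is in the shift case, where the paper erases a single prefix $p$ of $\uu$ with $\Parickh(p)\equiv\left(\begin{smallmatrix} i \\ j\end{smallmatrix}\right)$ and reduces to the first part, while you apply the WDO property factor by factor with the shifted residues; both amount to the same computation.
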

	\begin{proof}
		Let $v$ and $w$ be words such that $v \in \LL(\vv)$ and $wv$ is a prefix of $\vv$.
		Then $\pi(w) \pi(v)$ is a prefix of $\uu$ and $|\pi(w)|$ is an occurrence of $\pi(v)$ in $\uu$.
		Since $\pi(v) \in \LL(\uu')$, using the WDO property, we can find $p \in \LL(\uu')$ such that $p \pi(v)$ is a prefix of $\uu'$ and $\Parickh(\pi(w)) \equiv \Parickh(p) \pmod {{\rm Per}(\yy) {\rm Per}(\yy')}$.
		Thus $v$ appears both in $\vv$ at occurrence $|\pi(w)|$ and in $\vv'$ at occurrence $|p|$.
		Hence $\LL(\vv) \subset \LL(\vv')$.
		Using the same argument we can prove the opposite inclusion.
		
		Let $p$ be a prefix of $\uu$ such that
		$\Parickh(p) \equiv \left( \begin{smallmatrix} i \\ j \end{smallmatrix} \right) \pmod {{\rm Per}(\yy) {\rm Per}(\yy')}$.
		Let us write $\uu''= p^{-1}\uu$.
		Then $\barva(\uu'', \sigma^i(\yy), \sigma^j(\yy'))$ gives the same sequence as the one obtained by erasing the prefix of length $|p|$ from $\vv$.
		Since $\LL(\uu) = \LL(\uu'')$, using the same argument as before we have $\LL(\vv'') = \LL(\vv)$.
	\end{proof}
	
	The following invariance of the language of a balance sequence is a consequence of the previous proposition.
	
	\begin{corollary}
		\label{cor:cyclicshift}
		Let $\vv = \barva( \uu, \yy, \yy')$ and $u \in \LL(\uu)$.
		For any non-negative integers $i,j$, the word $v$ obtained from $u$ by replacing the ${\tt a}$'s by $\sigma^i(\yy)$ and the ${\tt b}$'s by $\sigma^j(\yy')$ belongs to  $\LL(\vv)$.
	\end{corollary}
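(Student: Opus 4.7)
The plan is to reduce to $\vv'' = \barva(\uu, \sigma^i(\yy), \sigma^j(\yy'))$ via Proposition~\ref{pro:standardSturm}, and then use the WDO property of $\uu$ to locate an occurrence of $u$ in $\uu$ whose induced colouring in $\vv''$ is precisely $v$. Since Proposition~\ref{pro:standardSturm} already yields $\LL(\vv) = \LL(\vv'')$, it suffices to prove $v \in \LL(\vv'')$.

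The main step is to describe the colouring of an arbitrary occurrence of $u$ in $\vv''$: if $u$ appears in $\uu$ starting at position $k$, then the factor of length $|u|$ of $\vv''$ at the same position is obtained from $u$ by replacing its consecutive ${\tt a}$'s (resp.\ ${\tt b}$'s) with the letters of $\sigma^i(\yy)$ (resp.\ $\sigma^j(\yy')$), starting from the indices $|\uu_{[0,k)}|_{\tt a} \bmod {\rm Per}(\yy)$ and $|\uu_{[0,k)}|_{\tt b} \bmod {\rm Per}(\yy')$ respectively. By construction, $v$ is exactly the colouring obtained when both of these offsets vanish. To produce such an occurrence I would invoke the WDO property (Definition~\ref{WDO}) with modulus $m = {\rm Per}(\yy)\,{\rm Per}(\yy')$ and factor $w = u$: it supplies a prefix $p$ of $\uu$ with $pu$ still a prefix of $\uu$ and $\Parickh(p) \equiv \left(\begin{smallmatrix}0\\0\end{smallmatrix}\right) \pmod m$. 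Then ${\rm Per}(\yy) \mid |p|_{\tt a}$ and ${\rm Per}(\yy') \mid |p|_{\tt b}$, so the factor of $\vv''$ starting at position $|p|$ coincides with $v$, and we conclude $v \in \LL(\vv'') = \LL(\vv)$.

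No substantial obstacle is expected, since the argument is essentially a rerun of the final paragraph of the proof of Proposition~\ref{pro:standardSturm}. The only point that demands care is the bookkeeping for the colouring alignment: one has to verify that once $|p|_{\tt a}$ and $|p|_{\tt b}$ are multiples of the respective periods, the ${\tt a}$'s and ${\tt b}$'s of $u$ (as seen at position $|p|$ inside $\vv''$) are indeed coloured starting from the $0$-th letter of $\sigma^i(\yy)$ and of $\sigma^j(\yy')$, and not from some intermediate shift.
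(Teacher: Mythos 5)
Your argument is correct and follows the paper's intended route: the corollary is derived directly from Proposition~\ref{pro:standardSturm}, whose own proof already contains exactly the WDO step you rerun (choosing a prefix $p$ with $\Parickh(p)\equiv\left(\begin{smallmatrix}0\\0\end{smallmatrix}\right)$ modulo the periods so that the occurrence of $u$ at position $|p|$ is coloured starting from the initial letters of $\sigma^i(\yy)$ and $\sigma^j(\yy')$). The alignment bookkeeping you flag is indeed the only point to check, and it goes through because $|p|_{\tt a}$ and $|p|_{\tt b}$ being multiples of ${\rm Per}(\yy)$ and ${\rm Per}(\yy')$ respectively forces the colouring of the ${\tt a}$'s and ${\tt b}$'s of $u$ to begin at the $0$-th letters of the shifted constant gap sequences.
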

	
	\begin{example}
		\label{ex:fibo-pi}
		Let $\ff, \gg, \yy$ and $\yy'$ be as in Example~\ref{ex:Fibo}.
		Let $v = {\tt 03104} \in \LL(\gg)$ and let us write $u = \pi(v) = {\tt abaab}$.
		One can easily check that the word $v' = {\tt 24013}$ obtained from $u$ by replacing the ${\tt a}$'s by $\sigma^3(\yy)$ and the ${\tt b}$'s by $\sigma(\yy')$ is also in $\LL(\gg)$.
	\end{example}
	
	Since a constant gap sequence is periodic, it is clear that every sufficiently long factor in the sequence is neither right special nor left special.
	Let us define, for a given constant gap sequence $\yy$, the number
	$$
	\bs(\yy) = \max \{ |u| : u \mbox{ is a bispecial factor of} \ \yy \}.
	$$
	It is an obvious consequence that no factor of $\yy$ of length more than $\bs(\yy)$ is left special or right special.
	It immediately follows that for $n > \bs(\yy)$, we have ${\rm{gap}}(\yy, n) = \{ \rm{Per}(\yy) \}$.
	
	Note that if $\vv = \barva( \uu, \yy, \yy')$, there exists an $m \in \N$ such that every factor $v \in \LL(\vv)$ longer than $m$ contains more than $\bs(\yy)$ letters in $\A$ and more than $\bs(\yy')$ letters in $\B$.
	Indeed, it is enough to find $m$ such that all factors of length $m$ in $\LL(\uu)$ contain more than $\bs(\yy)$ ${\tt a}$'s and more than $\bs(\yy')$ ${\tt b}$'s.
	
	We will see in the sequel that sufficiently long factors $w$ in a colouring $\vv = \barva( \uu, \yy, \yy')$, i.e., such that $|\pi(w)|_{\tt a} > \bs(\yy)$ and $|\pi(w)|_{\tt b} > \bs(\yy')$, are easy to treat.
	
	\begin{example}
		\label{ex:Fibo7}
		Let $\ff, \gg, \yy$ and $\yy'$ be as in Example~\ref{ex:Fibo}.
		One has $\bs(\yy) = 1$ since the letter ${\tt 0}$ is bispecial in $\yy$ and $\bs(\yy') = 0$ since the only bispecial factor in $\LL(\yy')$ is the empty word.
		It is easy to check that all factors of length $4$ in $\LL(\ff)$ contain at least 2 ${\tt a}$'s and one ${\tt b}$.
		Thus, all factors of length $4$ in $\LL(\gg)$ contain at least two letters in $\A$ and at least one letter in $\B$.
		On the other hand, ${\tt bab} \in \LL(\ff)$ has length $3$ and contains only one ${\tt a}$.
	\end{example}
	
	As we saw in Example~\ref{ex:Fibo}, the set $\pi^{-1}(u)$ for a word $u \in \LL(\uu)$ is not, in general, a singleton.
	However, it is not difficult to prove that every sufficiently long factor in $\LL(\vv)$ is uniquely determined, between the words having the same projection in $\uu$, by its first letters in $\A$ and its first letters in $\B$.
	
	\begin{lemma}
		\label{lem:uniquepre}
		Let $\vv = \barva( \uu, \yy, \yy')$ and $u \in \LL(\uu)$ such that $|u|_{\tt a} > \bs(\yy)$ and $|u|_{\tt b} > \bs(\yy')$.
		Let $a_0 a_1 \cdots a_{\bs(\yy)} \in \LL(\yy)$ and $b_0 b_1 \cdots b_{\bs(\yy')} \in \LL(\yy')$.
		There exists exactly one word in $\pi^{-1}(u)$ having $a_0, a_1, \ldots, a_{\bs(\yy)}$ (in this order) as first letters in $\A$ and $b_0, b_1, \ldots, b_{\bs(\yy')}$ (in this order) as first letters in $\B$.
	\end{lemma}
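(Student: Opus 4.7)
The plan is to split the argument into existence and uniqueness, leveraging Corollary~\ref{cor:cyclicshift} for the former and the characterisation of $\bs(\yy)$ (and of $\bs(\yy')$) for the latter.

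For \emph{existence}, since $a_0 a_1 \cdots a_{\bs(\yy)}$ is a factor of $\yy$, it occurs at some position $i$ in $\yy$, and hence $\sigma^i(\yy)$ starts with $a_0 a_1 \cdots a_{\bs(\yy)}$; analogously, pick $j$ such that $\sigma^j(\yy')$ starts with $b_0 b_1 \cdots b_{\bs(\yy')}$. Applying Corollary~\ref{cor:cyclicshift} with these $i,j$ directly produces a word $v \in \LL(\vv)$ satisfying $\pi(v) = u$, whose $\A$-subsequence (respectively $\B$-subsequence) begins with the prescribed letters by construction.

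For \emph{uniqueness}, suppose $v,v' \in \pi^{-1}(u)$ both meet the requirements. Since $\pi(v)=\pi(v')=u$, the set of positions occupied by letters of $\A$ (and of $\B$) is the same in $v$ and in $v'$; write $\alpha = c_0 c_1 \cdots c_{k-1}$ for the subword of $\A$-letters of $v$, where $k = |u|_{\tt a}$, and $\alpha'$ for the corresponding subword of $v'$. By the colouring construction in Theorem~\ref{Hubert} and Proposition~\ref{pro:standardSturm}, both $\alpha$ and $\alpha'$ are factors of $\yy$, and by hypothesis they share the prefix $a_0 a_1 \cdots a_{\bs(\yy)}$ of length $\bs(\yy)+1$. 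The text preceding the lemma records that no factor of $\yy$ of length exceeding $\bs(\yy)$ is right special, so any factor of $\yy$ that starts with a prefix of length $>\bs(\yy)$ has its further extension to each length uniquely determined. Since $k > \bs(\yy)$, induction on the length gives $\alpha = \alpha'$, and the identical argument for the $\B$-subsequences yields equality there as well; hence $v = v'$.

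The argument is not deep once the right framework is in place; the only slightly subtle point is recognising that the $\A$-letters (respectively $\B$-letters) read off a factor of $\vv$ always form a factor of $\yy$ (resp.\ $\yy'$), which is an immediate consequence of Theorem~\ref{Hubert}. The main work is organisational, namely pairing the existence half with the cyclic-shift corollary and the uniqueness half with the determinism of forward extensions of long factors in a periodic sequence.
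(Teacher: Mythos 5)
Your proposal is correct and follows essentially the same route as the paper: existence via Corollary~\ref{cor:cyclicshift} after choosing shifts $\sigma^i(\yy)$, $\sigma^j(\yy')$ beginning with the prescribed letters, and uniqueness from the fact that no factor of $\yy$ (resp.\ $\yy'$) of length exceeding $\bs(\yy)$ (resp.\ $\bs(\yy')$) is right special, so the $\A$- and $\B$-subwords are forced. The only difference is presentational — you separate existence from uniqueness explicitly, whereas the paper compresses both into the observation that the unique prolongation of the prescribed prefix is a prefix of $\sigma^i(\yy)$.
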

	\begin{proof}
		Since $a_0 a_1 \cdots a_{\bs(\yy)} \in \LL(\yy)$, by Corollary~\ref{cor:cyclicshift} there exists an integer $i$ such that $a_0 a_1 \cdots a_{\bs(\yy)}$ is a prefix of $\sigma^i(\yy)$.
		Since no factor of length $\bs(\yy) + 1$ is right special in $\yy$, the unique prolongation of $a_0 a_1 \cdots a_{\bs(\yy)}$ to any length is a prefix of $\sigma^i(\yy)$.
		A similar argument can be used for $b_0 b_1 \cdots b_{\bs(\yy')}$.
	\end{proof}
	
	\begin{example}
		\label{ex:Fibou}
		Let $\ff, \gg, \yy$ and $\yy'$ be as in Example~\ref{ex:Fibo} and $u = {\tt abaab} \in \LL(\ff)$.
		One has $|u|_{\tt a} = 3 > 1 = \bs(\yy)$ and $|u|_{\tt b} = 2 > 0 = \bs(\yy')$.
		One can check that, according to Lemma~\ref{lem:uniquepre}, the only word in $\pi^{-1}(u)$ having ${\tt 0}, {\tt 2}$ as first letters in $\A$ and ${\tt 4}$ as first letter in $\B$ is ${\tt 04203}$, which is the word obtained from $u$ by $\sigma^2(\yy)$ and $\sigma(\yy')$ respectively.
		On the other hand, no word in $\LL(\gg)$ can have, for instance, ${\tt 1}$ as first letter in $\A$ and  ${\tt 2}$ as second letter in $\A$ since ${\tt 12} \notin \LL(\yy)$.
	\end{example}
	
	Putting together Corollary~\ref{cor:cyclicshift} and Lemma~\ref{lem:uniquepre}, we obtain the following result.
	
	\begin{lemma}
		\label{lem:cardpi-1}
		Let $\vv = \barva( \uu, \yy, \yy')$ and $u \in \LL(\uu)$ be such that $|u|_{\tt a} > \bs(\yy)$ and $|u|_{\tt b} > \bs(\yy')$.
		Then $\# \pi^{-1}(u) = {\rm Per}(\yy) {\rm Per}(\yy')$.
	\end{lemma}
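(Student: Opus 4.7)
The plan is to combine Corollary~\ref{cor:cyclicshift} (which produces many words in $\pi^{-1}(u)$) with Lemma~\ref{lem:uniquepre} (which forces uniqueness given enough initial letters), and to count with care.

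First I would construct $\mathrm{Per}(\yy)\,\mathrm{Per}(\yy')$ elements of $\pi^{-1}(u)$. For each pair $(i,j)$ with $0\le i<\mathrm{Per}(\yy)$ and $0\le j<\mathrm{Per}(\yy')$, Corollary~\ref{cor:cyclicshift} gives a word $v_{i,j}\in\pi^{-1}(u)$ obtained from $u$ by replacing the ${\tt a}$'s with $\sigma^i(\yy)$ and the ${\tt b}$'s with $\sigma^j(\yy')$. I would then argue that the words $v_{i,j}$ are pairwise distinct. To see this, suppose that $v_{i,j}=v_{i',j'}$. Since $|u|_{\tt a}>\bs(\yy)$, the first $\bs(\yy)+1$ letters of $v_{i,j}$ lying in $\A$ form the length-$(\bs(\yy)+1)$ prefix of $\sigma^i(\yy)$, and similarly for $v_{i',j'}$. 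So $\sigma^i(\yy)$ and $\sigma^{i'}(\yy)$ share a prefix of length $\bs(\yy)+1$. No factor of length $>\bs(\yy)$ in $\yy$ is right special (by definition of $\bs(\yy)$), so its one-sided extension is uniquely determined, forcing $\sigma^i(\yy)=\sigma^{i'}(\yy)$ and hence $i=i'$ by minimality of $\mathrm{Per}(\yy)$. The same reasoning with $\B$ gives $j=j'$. This provides the lower bound $\#\pi^{-1}(u)\ge\mathrm{Per}(\yy)\,\mathrm{Per}(\yy')$.

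For the matching upper bound, I would invoke Lemma~\ref{lem:uniquepre}: any $v\in\pi^{-1}(u)$ is determined by its first $\bs(\yy)+1$ letters in $\A$ (which form a factor of $\yy$ of length $\bs(\yy)+1$) together with its first $\bs(\yy')+1$ letters in $\B$ (a factor of $\yy'$ of the corresponding length). It therefore suffices to check that $\yy$ has exactly $\mathrm{Per}(\yy)$ factors of length $\bs(\yy)+1$, and analogously for $\yy'$. Since no factor of $\yy$ of length $>\bs(\yy)$ is left or right special, the factor complexity of $\yy$ is constant on that range, so it equals its limit; and for a purely periodic sequence of minimal period $p$, the factor complexity stabilises exactly at $p$. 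Hence there are $\mathrm{Per}(\yy)\,\mathrm{Per}(\yy')$ admissible pairs of initial letters, each giving at most one word of $\pi^{-1}(u)$, which closes the count.

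The step requiring the most care is the rigidity argument: showing that the map $(i,j)\mapsto v_{i,j}$ is injective and that the counting of length-$(\bs(\yy)+1)$ factors of $\yy$ gives exactly $\mathrm{Per}(\yy)$. Both hinge on the observation that beyond $\bs(\yy)$ the language of $\yy$ is ``deterministic'' in both directions, and that $\mathrm{Per}(\yy)$ is the smallest period, so distinct shifts $\sigma^i(\yy)$ for $0\le i<\mathrm{Per}(\yy)$ are genuinely distinct sequences. Once these two points are in place, the equality $\#\pi^{-1}(u)=\mathrm{Per}(\yy)\,\mathrm{Per}(\yy')$ follows immediately by combining the two inequalities.
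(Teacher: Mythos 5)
Your proof is correct and follows essentially the same route as the paper's: both construct the ${\rm Per}(\yy)\,{\rm Per}(\yy')$ words $v_{i,j}$ via Corollary~\ref{cor:cyclicshift} and establish their pairwise distinctness from the rigidity of $\yy,\yy'$ beyond lengths $\bs(\yy),\bs(\yy')$. The only cosmetic differences are that the paper derives distinctness from ${\rm gap}(\yy,n)=\{{\rm Per}(\yy)\}$ for $n>\bs(\yy)$ where you use the absence of right special factors, and that you make the upper bound explicit via Lemma~\ref{lem:uniquepre} together with a factor-complexity count, whereas the paper treats it as immediate since every element of $\pi^{-1}(u)$ is one of the $v_{i,j}$.
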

	\begin{proof}
		Let us set $\yy = (y_0 y_1 \cdots y_{k-1})^{\omega}$ and $\yy' = (y'_0 y'_1 \cdots y'_{\ell-1})^{\omega}$, with $k = {\rm Per}(\yy)$ and $\ell = {\rm Per}(\yy')$.
		Using Corollary~\ref{cor:cyclicshift} we know that for every $i,j$ such that $0 \le i < {\rm Per}(\yy)$ and $0 \le j < {\rm Per}(\yy')$, the word $v_{i,j}$ obtained from $u$ by replacing the ${\tt a}$'s by $\sigma^i(\yy)$ and the ${\tt b}$'s by $\sigma^j(\yy')$ is in $\pi^{-1}(u)$.
		Moreover, the factors $v_{i,j}$ are distinct for distinct pairs $(i,j)$.
		This is a consequence of the assumption that $|u|_{\tt a} > \bs(\yy)$ and $|u|_{\tt b} > \bs(\yy')$ and the fact that for $n > \bs(\yy)$, we have ${\rm{gap}}(\yy, n) = \{ \rm{Per}(\yy) \}$ and similarly, for $n' > \bs(\yy')$, we have ${\rm{gap}}(\yy', n') = \{ \rm{Per}(\yy') \}$.
		Thus, we obtain exactly ${\rm Per}(\yy) {\rm Per}(\yy')$ distinct words.
	\end{proof}
	
	\begin{example}
		Let $\ff, \gg, \yy$ and $\yy'$ be as in Example~\ref{ex:Fibo} and $u = {\tt abaa} \in \LL(\ff)$.
		The set
		$\pi^{-1}(u) = \{ {\tt 0310},$
		${\tt 0410},$
		${\tt 0320},$
		${\tt 0420},$
		${\tt 1302},$
		${\tt 1402},$
		${\tt 2301},$
		${\tt 2401}
		\}$
		has exactly $8 = 4 \cdot 2$ elements, which is consistent with Lemma~\ref{lem:cardpi-1}.
	\end{example}
	
	\begin{lemma}
		\label{lem:special}
		Let $\vv = \barva(\uu, \yy, \yy')$ and $w \in \LL(\vv)$.
		\begin{enumerate}
			\item If $\pi(w)$ is bispecial in $\uu$, then $w$ is bispecial in $\vv$.
			\item If $w$ is bispecial in $\vv$, $|\pi(w)|_{\tt a} > \bs(\yy)$ and $|\pi(w)|_{\tt b} > \bs(\yy')$, then $\pi(w)$ is bispecial in $\uu$.
		\end{enumerate}
	\end{lemma}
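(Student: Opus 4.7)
The plan is as follows.

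For Part (1), set $u=\pi(w)$ and recall that $u$ is assumed bispecial in $\uu$, so that $u{\tt a},u{\tt b},{\tt a}u,{\tt b}u\in\LL(\uu)$. Since $w\in\LL(\vv)$ occurs at some position of $\vv = \barva(\uu,\yy,\yy')$, there exist shifts such that $w$ is exactly the colouring of $u$ by $\sigma^i(\yy)$ and $\sigma^j(\yy')$ for a suitable pair $(i,j)$ (namely, $i$ and $j$ count the letters ${\tt a}$ and ${\tt b}$ occurring in $\uu$ before that position). I would then apply Corollary~\ref{cor:cyclicshift} to the words $u{\tt a}$ and $u{\tt b}$ with \emph{this same pair} $(i,j)$: the resulting colourings both belong to $\LL(\vv)$, both begin with $w$, and they end respectively with a letter of $\A$ and a letter of $\B$. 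Since $\A\cap\B=\emptyset$, these are two distinct right extensions of $w$, so $w$ is right special. The same argument applied to ${\tt a}u$ and ${\tt b}u$ produces two distinct left extensions, so $w$ is bispecial in $\vv$.

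For Part (2), assume $w$ is bispecial in $\vv$ with $|u|_{\tt a}>\bs(\yy)$ and $|u|_{\tt b}>\bs(\yy')$, and let me show $u$ is right special (left special being symmetric). Pick distinct letters $x,y$ with $wx,wy\in\LL(\vv)$. If $\pi(x)\ne\pi(y)$, then $u\pi(x)$ and $u\pi(y)$ are distinct right extensions of $u$ in $\LL(\uu)$ and we are done. Otherwise $\pi(x)=\pi(y)$, so $x,y$ lie in the same alphabet, say both in $\A$. Since $|u|_{\tt a}>\bs(\yy)$, the first $\bs(\yy)+1$ letters of $w$ in $\A$ already appear inside $w$; appending $x$ (respectively $y$) at the very end therefore does not change them, so $wx$ and $wy$ share the same first $\bs(\yy)+1$ letters in $\A$. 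The same holds for the first $\bs(\yy')+1$ letters in $\B$, since $x,y\notin\B$ and $|u|_{\tt b}>\bs(\yy')$. Noting that $u\pi(x)$ satisfies $|u\pi(x)|_{\tt a}=|u|_{\tt a}+1>\bs(\yy)$ and $|u\pi(x)|_{\tt b}=|u|_{\tt b}>\bs(\yy')$, Lemma~\ref{lem:uniquepre} applies and forces $wx=wy$, contradicting $x\ne y$. Hence $\pi(x)\ne\pi(y)$ and $u$ is right special; the mirror argument shows $u$ is left special.

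The main point to get right is the bookkeeping in Part (2): one must verify that appending a single letter at the end of $w$ changes neither the first $\bs(\yy)+1$ letters in $\A$ nor the first $\bs(\yy')+1$ letters in $\B$, which is exactly what the strict inequalities $|u|_{\tt a}>\bs(\yy)$ and $|u|_{\tt b}>\bs(\yy')$ guarantee, and that the extended projection $u\pi(x)$ still meets the hypotheses of Lemma~\ref{lem:uniquepre}. Part (1) is more conceptual: the one step that deserves to be stated cleanly is the correspondence between an occurrence of $w$ in $\vv$ and a specific phase $(i,j)$ of the colouring, after which Corollary~\ref{cor:cyclicshift} does the work.
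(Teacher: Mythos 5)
Your proof is correct and follows essentially the same route as the paper's (which is stated much more tersely): Part 1 comes from colouring the two Sturmian extensions of $\pi(w)$ with the phase realising $w$, and Part 2 from the fact that, under the length hypotheses, $w$ admits at most one right (resp.\ left) extension in each of $\A$ and $\B$, which you justify via Lemma~\ref{lem:uniquepre}. The only detail left implicit is that the ``mirror argument'' for left specialness needs the symmetric version of Lemma~\ref{lem:uniquepre} (determination by the \emph{last} letters in $\A$ and $\B$), which holds by the same reasoning since no factor of $\yy$ of length greater than $\bs(\yy)$ is left special either.
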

	\begin{proof}
		\begin{enumerate}
			\item
			The first statement follows directly from the definition of colouring.
			
			\item
			If $w$ is bispecial in $\vv$, $|\pi(w)|_{\tt a} > \bs(\yy)$ and $|\pi(w)|_{\tt b} > \bs(\yy')$, then there exist a unique right extension of $w$ in $\A$ and a unique right extension of $w$ in $\B$; similarly for left extensions.
			Thus $\pi(w)$ is bispecial in $\uu$.
		\end{enumerate}
	\end{proof}
	
	\begin{example}
		\label{ex:shortBS}
		Let $\ff, \gg, \yy$ and $\yy'$ be as in Example~\ref{ex:Fibo}.
		\begin{itemize}
			\item
			Consider $w = {\tt 1}$ with $\pi(w) = {\tt a}$.
			Since $\pi(w)$ is bispecial in $\ff$, by Item 1 of Lemma~\ref{lem:special}, the factor $w$ has to be bispecial too.
			Indeed, ${\tt 310, 014, 410, 013} \in \LL({\gg})$.
			
			\item
			Consider $w = {\tt 240}$, then ${\tt 32401, 02403, 02401} \in \LL(\gg)$, hence $w$ is bispecial.
			Moreover, $\pi(w) = {\tt aba}$, $|\pi(w)|_{\tt a} = 2 > 1 = \bs(\yy )$ and $|\pi(w)|_{\tt b} = 1 > 0 = \bs(\yy')$.
			By Item 2 of Lemma~\ref{lem:special}, $\pi(w)$ is bispecial in $\ff$.
			Indeed, ${\tt aabaa, aabab, babaa} \in \LL(\ff)$.
			
			\item
			However, for a factor $w  \in \LL(\gg)$, it may happen that $w$ is bispecial and $\pi(w)$ is not.
			Of course, by Item 2 of Lemma~\ref{lem:special}, in such a case $|\pi(w)|_{\tt a} \leq \bs(\yy)$ or $|\pi(w)|_{\tt b} \leq \bs(\yy')$.
			Consider $w = {\tt 3}$, then ${\tt 031, 130, 032, 230} \in \LL({\gg})$; thus $w$ is bispecial; however $\pi(w) = {\tt b}$ is not a bispecial factor in $\ff$.
		\end{itemize}
	\end{example}
	
	Using the previous lemmata we can prove the following result.
	
	\begin{proposition}
		\label{pro:dendric}
		Let $\vv = \barva(\uu, \yy, \yy')$.
		The language $\LL(\vv)$ is eventually dendric with threshold $m = \min\{|u| : u \in \LL(\uu), |u|_{\tt a} > \bs(\yy) \text{ and } |u|_{\tt b} > \bs(\yy') \}$.
	\end{proposition}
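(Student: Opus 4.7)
The plan is to prove that for every $w \in \LL(\vv)$ with $|w| \geq m$ the extension graph $\E_\vv(w)$ is graph-isomorphic to $\E_\uu(\pi(w))$, and then to invoke the well-known fact that Sturmian languages are dendric (so $\E_\uu(\pi(w))$ is a tree). The choice of threshold $m$ guarantees $|\pi(w)|_{\tt a} > \bs(\yy)$ and $|\pi(w)|_{\tt b} > \bs(\yy')$, which is precisely the hypothesis needed to apply Lemmata~\ref{lem:uniquepre} and~\ref{lem:cardpi-1}.

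Concretely, fix $w$ with $|w| \geq m$ and set $u = \pi(w)$. I would first construct the three projection maps
\[
\varphi_L \colon L_\vv(w) \to L_\uu(u),\qquad \varphi_R \colon R_\vv(w) \to R_\uu(u),\qquad \varphi_B \colon B_\vv(w) \to B_\uu(u),
\]
all obtained by applying $\pi$ componentwise. These are well-defined because $\pi$ is a monoid morphism on languages. The main work is to verify that each is a bijection. Injectivity is the easy half: if $c_1,c_2 \in \A \cup \B$ are two right extensions of $w$ with $\pi(c_1)=\pi(c_2)$, then both $wc_1$ and $wc_2$ lie in $\pi^{-1}(u\pi(c_1))$ with the same sequence of $\A$-letters and the same sequence of $\B$-letters as $w$, so Lemma~\ref{lem:uniquepre}, applied to $u\pi(c_1)$ (whose $\A$- and $\B$-counts are still strictly larger than $\bs(\yy)$ and $\bs(\yy')$ respectively), forces $c_1 = c_2$; the same argument handles $\varphi_L$ and $\varphi_B$.

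The harder direction is surjectivity, and this is the step I expect to carry the real content of the proof. Given ${\tt a}$ (or ${\tt b}$) in $R_\uu(u)$, I need to produce an actual letter $c \in \A\cup \B$ with $wc \in \LL(\vv)$. The idea is that an occurrence of $u\alpha$ in $\uu$ lifts, by Corollary~\ref{cor:cyclicshift}, to an occurrence in $\vv$ of the word obtained from $u\alpha$ by applying suitable shifts $\sigma^i(\yy), \sigma^j(\yy')$; Lemma~\ref{lem:uniquepre} then shows that among the ${\rm Per}(\yy){\rm Per}(\yy')$ possible shift pairs, exactly one produces a word whose middle part equals $w$, and that one supplies the required extension letter on the right. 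The surjectivity of $\varphi_L$ is symmetric, and for $\varphi_B$ the same reasoning applied to $\alpha u \beta$ gives the unique bi-extension of $w$ projecting onto $(\alpha,\beta)$. Since the three bijections are plainly compatible (the endpoints of an edge in $\E_\vv(w)$ project to the endpoints of the corresponding edge in $\E_\uu(u)$), they assemble into a graph isomorphism $\E_\vv(w) \cong \E_\uu(u)$.

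It remains only to cite that $\uu$, being Sturmian, is dendric, hence $\E_\uu(u)$ is a tree for every $u$; transporting this through the isomorphism yields that $\E_\vv(w)$ is a tree whenever $|w| \geq m$, which is the definition of eventual dendricity with threshold $m$. The main obstacle, as indicated, is the bi-extension surjectivity: the delicate point is to verify that the unique lift provided by Lemma~\ref{lem:uniquepre} actually extends $w$ and not merely some other word in $\pi^{-1}(u)$, which is why the assumption on $m$ must control both the $\A$- and $\B$-content of $\pi(w)$ simultaneously.
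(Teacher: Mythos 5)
Your proposal is correct and follows essentially the same route as the paper, which also establishes that $\pi$ induces a graph isomorphism $\E_{\vv}(w) \cong \E_{\uu}(\pi(w))$ for $|w| \ge m$ (via Lemmata~\ref{lem:uniquepre} and~\ref{lem:special}) and then invokes the dendricity of Sturmian languages; you merely spell out the bijectivity checks that the paper leaves implicit. The only cosmetic caveat is that for $\varphi_L$ the injectivity does not come from applying Lemma~\ref{lem:uniquepre} with ``the same first letters'' (prepending a letter changes them) but from the symmetric fact that no factor of $\yy$ (resp.\ $\yy'$) of length exceeding $\bs(\yy)$ (resp.\ $\bs(\yy')$) is left special.
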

	\begin{proof}
		Let $w \in \LL(\vv)$ with length at least $m$ and $u = \pi(w)$.
		It easily follows from the proofs of Lemmata~\ref{lem:uniquepre} and~\ref{lem:special} that $\E_{\vv}(w)$ is isomorphic to $\E_{\uu}(u)$ via the projection $\pi$.
		Since $\uu$ is Sturmian, $\LL(\uu)$ is dendric.
		Thus $\E_{\vv}(v)$ is a tree.
		Hence $\LL(\vv)$ is eventually dendric of threshold $m$.
	\end{proof}
	
	Note that from the proof of Proposition~\ref{pro:dendric} it follows that for a sufficiently long word $u \in \LL(\uu)$ all words in $\pi^{-1}(u)$ have isomorphic extension graphs.
	
	\begin{example}
		\label{ex:dendric}
		Let $\ff, \gg, \yy$ and $\yy'$ be as in Example~\ref{ex:Fibo}.
		As we have seen in Example~\ref{ex:Fibo7}, all words of length at least $4$ in $\LL(\ff)$ contain at least two ${\tt a}$'s and at least one ${\tt b}$.
		Let $v = {\tt 230140} \in \LL(\gg)$.
		The extension graphs of $v$ in $\LL(\gg)$ and of $u = \pi(v)$ in $\LL(\ff)$ are represented in Figure~\ref{fig:uvextgraphs}.
	\end{example}
	
	\begin{figure}
		\centering
		\tikzset{node/.style={rectangle,draw,rounded corners=1.2ex}}
		\begin{tikzpicture}
			\node[node](0l) {${\tt 0}$};
			\node[node](4l) [below= 0.2cm of 0l] {${\tt 4}$};
			\node[node](2r) [right= 1cm of 0l] {${\tt 2}$};
			\node[node](3r) [below= 0.2cm of 2r] {${\tt 3}$};
			\path[draw,thick, shorten <=0 -1pt, shorten >=-1pt]
			(0l) edge node {} (3r)
			(4l) edge node {} (2r)
			(4l) edge node {} (3r);
			\node[node](al) [right= 2cm of 2r] {${\tt a}$};
			\node[node](bl) [below= 0.2cm of al] {${\tt b}$};
			\node[node](ar) [right= 1cm of al] {${\tt a}$};
			\node[node](br) [below= 0.2cm of ar] {${\tt b}$};
			\path[draw,thick, shorten <=0 -1pt, shorten >=-1pt]
			(al) edge node {} (br)
			(bl) edge node {} (ar)
			(bl) edge node {} (br);
		\end{tikzpicture}
		\caption{The extension graphs of $v = {\tt 230140} \in \LL(\gg)$ (on the left) and $\pi(v) = {\tt abaaba} \in \LL(\ff)$ (on the right).}
		\label{fig:uvextgraphs}
	\end{figure}
	
	Dolce and Perrin~\cite{DolcePerrin20} studied eventually dendric sequences.
	In particular, they showed that the sequence $s_\uu(n)$ is eventually constant.
	It immediately gives the following.
	
	\begin{proposition}
		\label{pro:complexityeventually}
		Let $\uu$ be an eventually dendric sequence with threshold $m$.
		For every $n \ge m$ one has
		$\C_\uu(n) = s_\uu(m) n + K$,
		with $K$ a constant.
	\end{proposition}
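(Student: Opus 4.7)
The result is a routine telescoping consequence of the cited Dolce--Perrin fact that the first difference $s_\uu$ is eventually constant. The only thing to verify is that, for an eventually dendric sequence of threshold $m$, the stabilisation of $s_\uu$ happens precisely from index $m$ onwards, so that the constant value appearing in the affine formula is $s_\uu(m)$.

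First I would recall the well-known bilateral counting identity for the second difference of the factor complexity:
\[
s_\uu(n+1) - s_\uu(n) \;=\; \sum_{w \in \LL(\uu),\, |w|=n} \bigl(\# B_\uu(w) - \# L_\uu(w) - \# R_\uu(w) + 1\bigr).
\]
This is obtained by writing $\C_\uu(n+2) - \C_\uu(n+1)$ as $\sum_{|w|=n}\#B_\uu(w) - \sum_{|w|=n+1}1$ and rearranging using $\sum_{|w|=n}\#R_\uu(w) = \C_\uu(n+1)$ and similarly for $\#L_\uu(w)$.

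Next, the key observation: the quantity $\# B_\uu(w) - \# L_\uu(w) - \# R_\uu(w) + 1$ is exactly the cyclomatic number minus the number of connected components plus one of the bipartite extension graph $\E_\uu(w)$. In particular it vanishes whenever $\E_\uu(w)$ is a tree. Since $\uu$ is eventually dendric with threshold $m$, every $w \in \LL(\uu)$ with $|w| \ge m$ has a tree as extension graph, so every term in the sum above is zero for $n \ge m$. Hence
\[
s_\uu(n+1) = s_\uu(n) \qquad \text{for all } n \ge m,
\]
so $s_\uu$ is constant, equal to $s_\uu(m)$, on $[m, \infty)$.

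Finally, telescoping $\C_\uu(n) = \C_\uu(m) + \sum_{k=m}^{n-1} s_\uu(k) = \C_\uu(m) + s_\uu(m)(n-m)$ for every $n \ge m$ yields
\[
\C_\uu(n) \;=\; s_\uu(m)\, n + \bigl(\C_\uu(m) - s_\uu(m)\, m\bigr),
\]
which is the desired affine formula with $K = \C_\uu(m) - s_\uu(m)\, m$. The only subtle point is verifying the first step, the second-difference identity; once that is in place, everything reduces to the tree-implies-Euler-relation observation, which is immediate from the definition of eventual dendricity.
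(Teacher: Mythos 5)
Your proof is correct. Note, however, that the paper does not actually prove this proposition: it is stated without a proof environment, justified only by the preceding sentence citing Dolce and Perrin~\cite{DolcePerrin20} for the fact that $s_\uu(n)$ is eventually constant, from which the affine formula is said to follow "immediately". Your argument fills in precisely the content the paper outsources, and it does a little more: the citation alone gives eventual constancy of $s_\uu$, but the precise statement requires that the constant value is already attained at the threshold index $m$, i.e.\ that $s_\uu(n)=s_\uu(m)$ for all $n\ge m$. You obtain this directly from the definition of eventual dendricity via the classical Cassaigne bilateral-multiplicity identity
\[
s_\uu(n+1)-s_\uu(n)=\sum_{|w|=n}\bigl(\# B_\uu(w)-\# L_\uu(w)-\# R_\uu(w)+1\bigr),
\]
together with the Euler relation for trees ($\#B_\uu(w)=\#L_\uu(w)+\#R_\uu(w)-1$ when $\E_\uu(w)$ is a tree), which kills every summand for $n\ge m$; the telescoping step is then routine and identifies $K=\C_\uu(m)-s_\uu(m)\,m$. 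The counting identities you use ($\C_\uu(n+2)=\sum_{|w|=n}\#B_\uu(w)$, $\C_\uu(n+1)=\sum_{|w|=n}\#R_\uu(w)=\sum_{|w|=n}\#L_\uu(w)$) are all valid, so the argument is complete and self-contained, at the cost of invoking the bilateral identity that the paper never states; the paper's route is shorter but leaves the reader to check in~\cite{DolcePerrin20} that the stabilisation indeed occurs at the dendricity threshold.
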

	
	The following result easily follows from Lemma~\ref{lem:cardpi-1} and it can be seen as a particular case of the previous proposition.
	
	\begin{proposition}
		\label{pro:complexityv}
		Let $\vv = \barva( \uu, \yy, \yy')$ and $m$ be the threshold given in  Proposition \ref{pro:dendric}.
		Then for every $n \ge m$ one has
		$\C_{\vv}(n) = {{\rm Per}(\yy)}{{\rm Per}(\yy') }(n+1)$.
	\end{proposition}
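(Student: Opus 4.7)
The plan is to count factors of $\vv$ of length $n$ by partitioning them via the projection $\pi : \LL(\vv) \to \LL(\uu)$, using the fact that $\pi$ preserves length, and then applying Lemma~\ref{lem:cardpi-1} fibre by fibre.

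First I would observe that, since $\pi$ is a length-preserving morphism, the preimages $\pi^{-1}(u)$ for $u$ ranging over factors of $\uu$ of length $n$ partition the set of factors of $\vv$ of length $n$. Hence
\[
\C_{\vv}(n) \;=\; \sum_{\substack{u \in \LL(\uu) \\ |u| = n}} \#\pi^{-1}(u).
\]

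Next I would verify that for $n \ge m$ every $u \in \LL(\uu)$ of length $n$ satisfies $|u|_{\tt a} > \bs(\yy)$ and $|u|_{\tt b} > \bs(\yy')$. By the definition of $m$ there is some $u_0 \in \LL(\uu)$ of length $m$ with this property, and by the balance property of Sturmian sequences the minimum number of ${\tt a}$'s (resp.\ ${\tt b}$'s) in a factor of length $n$ is non-decreasing in $n$, so the inequalities persist for all $n \ge m$. This allows me to apply Lemma~\ref{lem:cardpi-1}, which yields $\#\pi^{-1}(u) = {\rm Per}(\yy)\,{\rm Per}(\yy')$ for each such $u$. Combined with the Sturmian factor complexity $\C_\uu(n) = n+1$, I would conclude
\[
\C_{\vv}(n) \;=\; {\rm Per}(\yy)\,{\rm Per}(\yy')\cdot\C_\uu(n) \;=\; {\rm Per}(\yy)\,{\rm Per}(\yy')\,(n+1).
\]

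There is no real obstacle here; the argument is essentially an accounting of fibres. The only point that requires a moment of care is the monotonicity step guaranteeing that the bound $|u|_{\tt a} > \bs(\yy)$ and $|u|_{\tt b} > \bs(\yy')$ holds not only at length $m$ but at every length beyond, which is exactly what was already noted in the paragraph preceding Lemma~\ref{lem:uniquepre}.
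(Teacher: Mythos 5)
Your proof is correct and matches the paper's intended argument: the paper gives no explicit proof, saying only that the result "easily follows from Lemma~\ref{lem:cardpi-1}", and your fibre-by-fibre count over the $n+1$ Sturmian factors of length $n$ is exactly that argument. The only point to tighten is the quantifier in your monotonicity step: you need \emph{every} factor of length $m$ (not just some $u_0$) to satisfy $|u|_{\tt a} > \bs(\yy)$ and $|u|_{\tt b} > \bs(\yy')$ before the non-decreasingness of the minimal letter counts can propagate the inequalities to all $n \ge m$; this is the reading of the threshold that the paper actually uses (cf.\ Example~\ref{ex:Fibo7}, where the threshold for $\gg$ is $4$ even though the length-$3$ factor ${\tt aab}$ already has the property, and Table~\ref{tab:complexityvf} shows $\C_{\gg}(3)=30\neq 32$).
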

	
	\begin{example}
		Let $\ff, \gg, \yy$ and $\yy'$ be as in Example~\ref{ex:Fibo}.
		According to Proposition~\ref{pro:dendric} and Example~\ref{ex:Fibo7} the language $\LL(\gg)$ is eventually dendric with threshold $4$.
		The factor complexity of $\gg$ is defined by $\C_{\gg}(n) = 8(n + 1)$ for every $n \ge 4$, according to Proposition~\ref{pro:complexityv}.
		
		Let us now consider $\gg' = \barva( \ff, \yy', \yy)$.
		It is easy to check that every factor of length at least $6$ in $\LL(\ff)$ contains at least one ${\tt a}$ and at least two ${\tt b}$'s.
		Thus, one has $\C_{\gg'}(n) = 8(n+1)$ for every $n \ge 6$.
		The initial values of $\C_{\gg}(n)$ and $\C_{\gg'}(n)$ are given in Table~\ref{tab:complexityvf}.
	\end{example}
	
	\begin{table}
		\centering
		\begin{tabular}{c|c|c|c|c|c|c}
			n & \; 0 \; & \; 1 \; & \; 2 \; & \; 3 \; & \; 4 \; & \; 5 \; \\
			\hline
			$\C_{\gg}(n)$ & 1 & 5 & 16 & 30 & & \\
			$\C_{\gg'}(n)$ & 1 & 5 & 14 & 26 & 36 & 46
		\end{tabular}
		\caption{Initial values of $\C_{\gg}(n)$ and $\C_{\gg'}(n)$.}
		\label{tab:complexityvf}
	\end{table}
	
	The following result, concerning return words, is~\cite[Theorem 7.3]{DolcePerrin20} (see also~\cite{DolcePerrin19}).
	
	\begin{theorem}[\cite{DolcePerrin20}]
		\label{thm:returneventually}
		Let $\uu$ be a recurrent eventually dendric sequence with threshold $m$.
		For every $w \in \LL(\uu)$, the set $\R_{\uu}(w)$ is finite.
		Moreover, for every $w \in \LL(\uu)$ of length at least $m$, one has
		$\#\R_\uu(w) = s_{\uu}(m) + 1\,$.
	\end{theorem}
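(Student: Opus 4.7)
The statement has two parts: finiteness of $\R_\uu(w)$ for every factor $w \in \LL(\uu)$, and the exact enumeration $\#\R_\uu(w) = s_\uu(m) + 1$ when $|w| \geq m$. My plan is to handle them in turn, relying on the eventually-dendric tree hypothesis via the extension graphs $\E_\uu(w)$.

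For finiteness, I would invoke Proposition~\ref{pro:complexityeventually}: $\uu$ has at most linear factor complexity. A recurrent sequence of at most linear complexity has a finite set of return words to each factor, since the derived sequence $\dd_\uu(w)$ is itself recurrent and has controlled complexity inherited from $\LL(\uu)$, so its alphabet $\R_\uu(w)$ must be finite. For the exact count, the backbone is the Cassaigne bilateral identity
$$
s_\uu(n+1) - s_\uu(n) \;=\; \sum_{|w|=n}\bigl( \#B_\uu(w) - \#L_\uu(w) - \#R_\uu(w) + 1 \bigr).
$$
Under the hypothesis $n \geq m$, the extension graph $\E_\uu(w)$ is, for every factor $w$ of length $n$, a bipartite tree on $L_\uu(w) \sqcup R_\uu(w)$ whose edges are the bi-extensions $B_\uu(w)$; hence $\#B_\uu(w) = \#L_\uu(w) + \#R_\uu(w) - 1$ and every summand above vanishes. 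Therefore $s_\uu$ is constant on $[m, \infty)$, which is the structural input needed for the return-word count.

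To obtain the precise value $s_\uu(m) + 1$, I would proceed by induction on $|w| \geq m$, showing that $\#\R_\uu(w)$ is preserved when $w$ is extended by a single letter on either side. Concretely, for a right extension $a \in R_\uu(w)$, I would set up a map sending a return word $r$ to $w$ whose first letter after the prefix $w$ is $a$ to a return word to $wa$, and conversely, use the tree condition to check that this map is a bijection once one aggregates over all $a \in R_\uu(w)$. A symmetric argument handles left extensions. The anchor of the induction is the equality at length $m$, which I would derive by combining the stabilised complexity formula with the identity $\sum_{|w|=n}(\#R_\uu(w) - 1) = s_\uu(n)$ and pairing right-extension data with return-word data along the Rauzy graph $\Gamma_n$.

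The main obstacle is the exact value (rather than mere boundedness) of $\#\R_\uu(w)$: showing that extending $w$ by a letter neither creates nor destroys a return word requires ruling out pathological overlaps between two consecutive occurrences of $w$, and this is precisely where the tree structure of $\E_\uu(w)$ is used in an essential way. Without that hypothesis, the partition of return words according to the letter following the initial occurrence of $w$ would not match up cleanly with $\R_\uu(wa)$ for the various right extensions $a$, and the count could jump between successive lengths. Once the tree condition is used to make the bijection work at a single step, the induction propagates the equality $\#\R_\uu(w) = s_\uu(m) + 1$ to all longer factors.
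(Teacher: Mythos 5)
This statement is not proved in the paper at all: it is imported verbatim as \cite[Theorem 7.3]{DolcePerrin20}, so there is no internal proof to compare against, and your sketch has to stand on its own. It does not. The first gap is in the finiteness claim. Your argument is that recurrence plus at most linear complexity (via Proposition~\ref{pro:complexityeventually}) forces $\R_\uu(w)$ to be finite because the derived sequence ``has controlled complexity inherited from $\LL(\uu)$.'' That inheritance is exactly what you cannot assume: the alphabet of $\dd_\uu(w)$ \emph{is} $\R_\uu(w)$, and the standard complexity transfer for derived sequences (Durand) requires the return words to have bounded length, i.e.\ uniform recurrence --- which, for a recurrent sequence, is \emph{equivalent} to the finiteness you are trying to prove. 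The argument is circular, and it uses the eventually dendric hypothesis nowhere. Note that the paper deduces uniform recurrence of balanced sequences (Corollary~\ref{cor:recUR}) \emph{from} this theorem; if recurrence plus linear complexity already gave finiteness of $\R_\uu(w)$, that corollary would be content-free. The finiteness part is a genuine theorem about eventually dendric languages, not a complexity count.

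The second gap is the exact value $\#\R_\uu(w)=s_\uu(m)+1$. The Cassaigne bilateral identity correctly gives that $s_\uu$ is constant on $[m,\infty)$ (that is Proposition~\ref{pro:complexityeventually}), but it says nothing about return words. Your induction step is not worked out: the map sending $r\in\R_\uu(w)$ with $wa$ a prefix of $rw$ to an element of $\R_\uu(wa)$ is not even well defined, since the occurrence of $w$ at position $|r|$ in $rw$ need not be followed by $a$; in general an element of $\R_\uu(wa)$ is a concatenation of several elements of $\R_\uu(w)$, and ``aggregating over all $a\in R_\uu(w)$'' would compare $\#\R_\uu(w)$ with $\sum_a \#\R_\uu(wa)$, which is not the equality you need. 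The base case is likewise only gestured at: the identity $\sum_{|w|=n}(\#R_\uu(w)-1)=s_\uu(n)$ concerns right extensions, and no mechanism is given for converting it into a count of return words at length $m$. This is precisely the hard content of the theorem; in the dendric case it is the Return Theorem of Berth\'e et al., proved via the free-group/basis machinery rather than a one-letter-extension bijection, and the eventually dendric case in \cite{DolcePerrin20} requires a further reduction for long factors. As it stands, both halves of the statement rest on steps that are asserted, not proved.
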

	
	Combining Theorem~\ref{thm:returneventually} with  Propositions~\ref{pro:dendric} and~\ref{pro:complexityv} gives us the number of return words to sufficiently long factors in balanced sequences.
	
	\begin{proposition}
		\label{pro:returnbalanced}
		Let $\vv = \barva( \uu, \yy, \yy')$ and $v \in \LL(\vv)$ such that $|\pi(v)|_{\tt a} > \bs(\yy)$ and $|\pi(v)|_{\tt b} > \bs(\yy')$.
		Then $\#\R_{\vv}(v) = 1 + {\rm Per}(\yy) {\rm Per}(\yy')$.
	\end{proposition}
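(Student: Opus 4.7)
The plan is to apply Theorem~\ref{thm:returneventually} directly to $\vv$ at the factor $v$; once this is set up, the count $\#\R_\vv(v) = s_\vv(m)+1$ will follow. All the ingredients are in place from the previous two propositions, so the proof is largely a matter of bookkeeping.

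First I would verify that Theorem~\ref{thm:returneventually} is applicable. By Proposition~\ref{pro:dendric}, $\LL(\vv)$ is eventually dendric with threshold
$m=\min\{|u| : u\in\LL(\uu),\ |u|_{\tt a}>\bs(\yy),\ |u|_{\tt b}>\bs(\yy')\}$, and $\vv$ is recurrent since it is obtained from the uniformly recurrent Sturmian $\uu$ by colouring with periodic sequences (the occurrences of $\pi(v)$ in $\uu$ appear with bounded gaps, and by the WDO-based argument used in the proof of Proposition~\ref{pro:standardSturm}, those whose prefix Parikh vector has the right residue modulo $\operatorname{Per}(\yy)\operatorname{Per}(\yy')$ yield infinitely many occurrences of $v$ itself). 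Next, since $|v|=|\pi(v)|$, the hypothesis $|\pi(v)|_{\tt a}>\bs(\yy)$ and $|\pi(v)|_{\tt b}>\bs(\yy')$ forces $|v|\geq m$ by the very definition of~$m$, so $v$ is a factor long enough for the theorem to apply.

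It then remains to compute $s_\vv(m)$. Proposition~\ref{pro:complexityv} asserts that $\C_\vv(n) = \operatorname{Per}(\yy)\operatorname{Per}(\yy')(n+1)$ for every $n\geq m$, so taking the first difference at $n=m$ gives $s_\vv(m) = \operatorname{Per}(\yy)\operatorname{Per}(\yy')$. Plugging this into Theorem~\ref{thm:returneventually} yields $\#\R_\vv(v) = s_\vv(m)+1 = 1+\operatorname{Per}(\yy)\operatorname{Per}(\yy')$, which is the desired formula.

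The proof has no real obstacle once the previous propositions are in hand; the only points that require a moment of care are (i) justifying that $\vv$ is recurrent so Theorem~\ref{thm:returneventually} applies, and (ii) checking that the letter-count hypothesis on $\pi(v)$ is enough to bring $v$ above the eventually-dendric threshold~$m$. Both are immediate from the construction of $m$ in Proposition~\ref{pro:dendric} and the preservation of uniform recurrence under colouring.
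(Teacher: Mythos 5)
Your proposal is correct and follows exactly the route the paper takes: the paper introduces this proposition with the sentence ``Combining Theorem~\ref{thm:returneventually} with Propositions~\ref{pro:dendric} and~\ref{pro:complexityv} gives us the number of return words to sufficiently long factors,'' which is precisely your combination of the eventually dendric threshold, the complexity formula giving $s_\vv(m)={\rm Per}(\yy){\rm Per}(\yy')$, and the return-word count $s_\vv(m)+1$. Your added care about recurrence of $\vv$ and about $|v|\ge m$ fills in bookkeeping the paper leaves implicit, but introduces no new idea.
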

	
	\begin{example}
		Let $\ff, \gg, \yy$ and $\yy'$ be as in Example~\ref{ex:Fibo}.
		The factor $v = {\tt 230}$ is such that its projection $\pi(v) = {\tt aba}$ has more than $\bs(\yy) = 1$ ${\tt a}$'s and more than $\bs(\yy) = 0$ ${\tt b}$'s.
		Thus, according to Proposition~\ref{pro:returnbalanced}, there are exactly $9 = 1 + 4 \cdot 2$ return words to $v$ in $\gg$.
	\end{example}

	Recall that a recurrent sequence is uniformly recurrent if and only if the number of return words to any given factor of the sequence is finite.
	Thus, an interesting consequence of Proposition~\ref{pro:returnbalanced} is the following one.
	
	\begin{corollary}
		\label{cor:recUR}
		A recurrent aperiodic balanced sequence is uniformly recurrent.
	\end{corollary}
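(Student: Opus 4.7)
The strategy is to apply the characterisation recalled just above the statement: a recurrent sequence is uniformly recurrent precisely when every factor has only finitely many return words. Writing $\vv = \barva(\uu,\yy,\yy')$ via Theorem~\ref{Hubert} (with $\uu$ Sturmian and $\yy,\yy'$ constant gap), the problem reduces to showing $\#\R_\vv(v) < \infty$ for every $v \in \LL(\vv)$.

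For factors $v$ with $|\pi(v)|_{\tt a} > \bs(\yy)$ and $|\pi(v)|_{\tt b} > \bs(\yy')$, this is given directly by Proposition~\ref{pro:returnbalanced}. For an arbitrary $v$, I would embed it in a factor $w \in \LL(\vv)$ containing $v$ and satisfying these two inequalities. Such a $w$ exists because $\uu$, being Sturmian, is uniformly recurrent: there is an integer $N$ such that every factor of $\uu$ of length at least $N$ contains more than $\bs(\yy)$ letters ${\tt a}$ and more than $\bs(\yy')$ letters ${\tt b}$, and $\vv$ being infinite and recurrent produces an extension of $v$ whose projection has length at least $N$.

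The remaining step is to transfer the finiteness of $\R_\vv(w)$ to $\R_\vv(v)$. Writing $w = \alpha v \beta$ and $M = \max\{|r| : r \in \R_\vv(w)\}$, every occurrence of $w$ at position $j$ induces an occurrence of $v$ at position $j + |\alpha|$; since consecutive $w$-occurrences are within distance $M$, the subsequence of $v$-occurrences induced by $w$ has consecutive gaps at most $M$. Inserting extra $v$-occurrences between the induced ones can only shrink those gaps, so past the first occurrence of $w$ every return word to $v$ has length at most $M$, and only finitely many $v$-occurrences precede the first occurrence of $w$. This yields $\#\R_\vv(v)$ finite, hence $\vv$ uniformly recurrent. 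The only delicate point is this last gap comparison: one must argue carefully that refining a sequence of bounded gaps preserves the bound, rather than the other way round. Everything else is a direct assembly of Theorem~\ref{Hubert} and Proposition~\ref{pro:returnbalanced}.
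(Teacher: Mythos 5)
Your proposal is correct and follows essentially the same route as the paper: the paper derives the corollary in one line from Proposition~\ref{pro:returnbalanced} together with the stated equivalence between uniform recurrence and finiteness of the sets of return words for a recurrent sequence. The only difference is that you spell out the routine reduction from arbitrary factors to sufficiently long ones (which the paper leaves implicit, being also covered by Theorem~\ref{thm:returneventually} applied to the eventually dendric language of $\vv$), and that step, including the gap-refinement observation, is sound.
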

	
	Proposition~\ref{pro:returnbalanced} describes the number of return words to sufficiently long factors.
	In the sequel we study the critical exponent of balanced sequences and for this purpose, we need to compute lengths of shortest return words to all factors.
	The following proposition is crucial for our purpose since it detects occurrences of the same factor.
	
	In the sequel we will use the following notation:
	$$
	\begin{pmatrix} a \\ b \end{pmatrix} \bmod \begin{pmatrix} n \\ n' \end{pmatrix} :=
	\begin{pmatrix} a \bmod n \\ b \bmod n' \end{pmatrix}.
	$$
	
	\begin{proposition}
		\label{zeroParikh1}
		Let $\vv = \barva(\uu, \yy, \yy')$.
		Let $u, f \in \LL(\uu)$ such that $fu \in \LL(\uu)$ and $u$ is a prefix of $fu$.
		Then the two statements are equivalent:
		\begin{enumerate}
			\item there exist $w$ and $v$ such that $vw \in \LL(\vv)$, $w$ is a prefix of $vw$, $|w| = |u|$ and $\pi(vw) = fu;$
			
			\item $\Parickh(f) \equiv \left( \begin{smallmatrix} 0 \\ 0 \end{smallmatrix} \right) \pmod {\left( \begin{smallmatrix} n \\ n' \end{smallmatrix} \right)}$ for some $n \in {\rm gap}(\yy, |u|_{\tt a})$ and $n' \in {\rm gap}(\yy', |u|_{\tt b})$.
		\end{enumerate}
	\end{proposition}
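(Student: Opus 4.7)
The plan is to prove the two directions by reading off the phases at which the two copies of $u$ inside $fu$ get coloured, and matching those phases using the constant gap structure of $\yy$ and $\yy'$.

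For $(1)\Rightarrow(2)$, assume $v,w$ exist with $vw\in\LL(\vv)$, $\pi(vw)=fu$, $|w|=|u|$, and $w$ a prefix of $vw$. Since $|vw|=|v|+|w|$, the coloured word $w$ is also the length-$|u|$ suffix of $vw$, so $w$ occurs in $vw$ both at position $0$ and at position $|v|=|f|$. Extract from $vw$ the subword on the sub-alphabet $\A$: by construction this is a factor of $\yy$ of length $|u|_{\tt a}+|f|_{\tt a}$. Its length-$|u|_{\tt a}$ prefix, call it $y$, is the $\A$-subsequence of the first copy of $w$, and its length-$|u|_{\tt a}$ suffix is the $\A$-subsequence of the second copy, hence also $y$. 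Thus $y$ occurs in $\yy$ at two positions differing by $|f|_{\tt a}$, which forces $\gap{\yy}{y}$ to divide $|f|_{\tt a}$. Setting $n=\gap{\yy}{y}\in\gap{\yy}{|u|_{\tt a}}$ gives $|f|_{\tt a}\equiv 0\pmod n$; the analogous argument on $\B$ produces $n'\in\gap{\yy'}{|u|_{\tt b}}$ with $|f|_{\tt b}\equiv 0\pmod{n'}$.

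For $(2)\Rightarrow(1)$, given such $n$ and $n'$, pick $y\in\LL(\yy)$ of length $|u|_{\tt a}$ with $\gap{\yy}{y}=n$ and $y'\in\LL(\yy')$ of length $|u|_{\tt b}$ with $\gap{\yy'}{y'}=n'$, and choose integers $i,j$ such that $y$ is a prefix of $\sigma^i(\yy)$ and $y'$ is a prefix of $\sigma^j(\yy')$. By Corollary~\ref{cor:cyclicshift}, colouring $fu$ by $\sigma^i(\yy)$ and $\sigma^j(\yy')$ yields a word $z\in\LL(\vv)$ with $\pi(z)=fu$. Let $w_1$ be its length-$|u|$ prefix and $w_2$ its length-$|u|$ suffix; both project onto $u$, with $w_1$ the colouring of that copy of $u$ with phase $(i,j)$ and $w_2$ the colouring with phase $(i+|f|_{\tt a},j+|f|_{\tt b})$. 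Since $n\mid |f|_{\tt a}$, both $\sigma^i(\yy)$ and $\sigma^{i+|f|_{\tt a}}(\yy)$ begin with $y$, so the $\A$-subsequences of $w_1$ and $w_2$ agree; the same holds on $\B$ by symmetry. Together with the common ${\tt a}/{\tt b}$-skeleton $u$ this forces $w_1=w_2=:w$, and writing $z=vw$ with $|v|=|f|$ produces the required decomposition.

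The subtle point I expect to be the main technical nuisance is the case $|f|<|u|$: the two occurrences of $u$ inside $fu$ then overlap, and one must check that the two phase assignments used in the colouring are consistent on the overlap. This consistency is automatic once one views the colouring of $fu$ as a single word whose restrictions to $\A$ and $\B$ are single factors of $\yy$ and $\yy'$, but it has to be invoked explicitly; once it is, the remainder of the argument reduces to the bookkeeping on cyclic shifts described above.
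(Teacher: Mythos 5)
Your proof is correct and follows essentially the same route as the paper's: reading off the $\A$- and $\B$-subsequences of the two copies of $w$ to get two occurrences of the same factor of $\yy$ (resp.\ $\yy'$) at distance $|f|_{\tt a}$ (resp.\ $|f|_{\tt b}$) for the forward direction, and colouring $fu$ with suitably chosen shifts $\sigma^i(\yy)$, $\sigma^j(\yy')$ via Corollary~\ref{cor:cyclicshift} for the converse. The overlap worry in your final paragraph is indeed a non-issue, exactly for the reason you give, and the paper does not treat it separately either.
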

	\begin{proof}
		Let $v$ and $w$ be as in Item 1.
		Then $u$ is a prefix and a suffix of $\pi(vw)$ and $f = \pi(v)$.
		By Proposition~\ref{pro:standardSturm}, the factor $w$ occurring as a prefix of $vw$ is obtained from $u$ by colouring the ${\tt a}$'s with $\sigma^s(\yy)$ and the $\tt b$'s with $\sigma^t(\yy')$ for some $s, t \in \N$.
		Hence, the same factor $w$ occurring as a suffix of $vw$ is obtained from $u$ by colouring the ${\tt a}$'s with $\sigma^{S}(\yy)$ and the $\tt b$'s with $\sigma^{T}(\yy')$, where $S = s + |f|_{\tt a}$ and $T = t + |f|_{\tt b}$.
		Hence the prefixes of length $|u|_a$ of $\sigma^s(\yy)$ and $\sigma^S(\yy)$ coincide, and similarly the prefixes of length $|u|_b$ of $\sigma^t(\yy')$ and $\sigma^T(\yy')$ coincide.
		This implies that $|f|_{\tt a}$ is divisible by some $n \in {\rm{gap}}(\yy, |u|_{\tt a})$ and that $|f|_{\tt b}$ is divisible by some $n' \in {\rm{gap}}(\yy', |u|_{\tt b})$.
		In other words, $|f|_{\tt a} \equiv 0 \pmod n$ and $|f|_{\tt b} \equiv 0 \pmod {n'}$.
		
		Let $f, n$ and $n'$ be as in Item 2.
		Let us consider $y \in \LL(\yy)$ and $y' \in \LL(\yy')$ such that ${\rm gap}_{\yy}(y) = n$ with $|y| = |u|_{\tt a}$ and $ {\rm gap}_{\yy}(y') = n'$ with $ |y'| = |u|_{\tt b}$.
		Let $s, t \in \N$ be such that $y$ is a prefix of $\sigma^s(\yy)$ and $y'$ is a prefix of $\sigma^t(\yy')$.
		Colouring the letters ${\tt a}$'s in $fu$ with $\sigma^s(\yy)$ and the letters ${\tt b}$'s with $\sigma^t(\yy')$, we get, by Proposition~\ref{pro:standardSturm}, a factor $x$ of $\vv$.
		Since $|f|_{\tt a }$ is a multiple of ${\rm gap}_{\yy}(y)$ and $|f|_{\tt b }$ is a multiple of ${\rm gap}_{\yy}(y')$, the prefix and the suffix of length $|u|$ of $x$ coincide, i.e., $x = vw$, $w$ is a prefix of $vw$, $|w| = |u|$ and $\pi(vw) = fu$.
	\end{proof}
	
	\begin{example}
		\label{ex:retwords_short}
		Let $\ff, \gg, \yy$ and $\yy'$ be as in Example~\ref{ex:Fibo}, i.e,
		$$
		\ff = {\tt \underline{\textcolor{red}{a}ba}\textcolor{red}{a}baba\underline{\textcolor{red}{a}ba}\textcolor{red}{a}babaababa\underline{\textcolor{red}{a}ba}\textcolor{red}{a}babaab}\cdots\, ,
		$$
		$$
		\gg ={\tt \overline{\textcolor{green}{0}31}\textcolor{green}{0}4230\textcolor{blue}{1}402304103240\textcolor{blue}{1}302403104}\cdots .
		$$
		Let $u = \textcolor{red}{\tt a}$ and $f = \underline{{\tt aba}}$.
		Then $fu$ is a prefix of $\ff$ and ${\rm gap}(\yy, |u|_{\tt a})={\rm gap}(\yy, 1) = \{ 2,4 \}$ and ${\rm gap}(\yy', |u|_{\tt b}) = {\rm gap}(\yy', 0) = \{ 1 \}$.
		Set $n = 2$ and $n' = 1$.
		We have $\Parickh(f) = \left( \begin{smallmatrix} 2 \\ 1 \end{smallmatrix} \right) \equiv \left( \begin{smallmatrix} 0 \\ 0 \end{smallmatrix} \right) \pmod {\left( \begin{smallmatrix} n \\ n' \end{smallmatrix} \right)}$.
		Then, by Proposition~\ref{zeroParikh1}, there exist $w$ and $v$ such that $vw \in \LL(\gg)$, $w$ is a prefix of $vw$, $|w| = |u|$ and $\pi(vw) = fu$.
		Indeed, it suffices to put $w =\textcolor{green}{ {\tt 0}}$ and $v = \overline{{\tt 031}}$.
		Moreover, $v = {\tt 031}$ is a return word to $w = {\tt 0}$.
		On the other hand, the reader may easily check that no other factor $w'$ with the same projection $\pi(w') = {\tt a}$, i.e., $w' \in \{ {\tt \textcolor{blue}{1},2} \}$, has a return word $v'$ with projection $\pi(v') = f = {\tt aba}$; any such factor $w'$ has return words of length greater than or equal to six.
	\end{example}
	
	The next corollary shows that the set of lengths of return words is the same for all sufficiently long factors of balanced sequences having the same projection.
	
	\begin{corollary}
		\label{coro:retvw_long}
		Let $\vv = \barva( \uu, \yy, \yy')$, $u \in \LL(\uu)$ with $|u|_{\tt a} > \bs(\yy)$, $|u|_{\tt b} > \bs(\yy')$ and $w, w' \in \LL(\vv)$ with $\pi(w) = \pi (w') = u$.
		Then $\pi(\R_{\bf v}(w)) = \pi(\R_{\bf v}(w'))$.
		
		In particular, all shortest return words to $w$ and to $w'$ have the same length.
	\end{corollary}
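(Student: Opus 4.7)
The plan is to show that $\pi(\R_{\bf v}(w))$ depends only on $u = \pi(w) = \pi(w')$ and not on the specific preimage. First, since $|u|_{\tt a} > \bs(\yy)$ and $|u|_{\tt b} > \bs(\yy')$, no factor of $\yy$ of length $|u|_{\tt a}$ (resp., of $\yy'$ of length $|u|_{\tt b}$) is bispecial, so by Observation~\ref{obs:per} we have $\gap{\yy}{|u|_{\tt a}} = \{{\rm Per}(\yy)\}$ and $\gap{\yy'}{|u|_{\tt b}} = \{{\rm Per}(\yy')\}$. Consequently, Proposition~\ref{zeroParikh1} collapses to the single congruence $\Parickh(f) \equiv \binom{0}{0} \pmod{\binom{{\rm Per}(\yy)}{{\rm Per}(\yy')}}$, which is the combinatorial invariant I will exploit.

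I would prove $\pi(\R_{\bf v}(w)) \subseteq \pi(\R_{\bf v}(w'))$; the reverse inclusion then follows by symmetry. Let $r \in \R_{\bf v}(w)$ and set $f = \pi(r)$. Since $rw \in \LL(\vv)$ has $w$ both as prefix and as suffix, Proposition~\ref{zeroParikh1} yields the Parikh congruence above. By Lemma~\ref{lem:cardpi-1}, $w'$ arises by colouring the ${\tt a}$-positions of $u$ with $\sigma^{i'}(\yy)$ and the ${\tt b}$-positions with $\sigma^{j'}(\yy')$ for a unique pair $(i',j')$. Colouring the whole of $fu$ with these same shifts (justified by Corollary~\ref{cor:cyclicshift}) produces a factor of $\LL(\vv)$ whose prefix of length $|u|$ is $w'$; the Parikh congruence forces the suffix of length $|u|$ to agree with the prefix, so the factor has the form $r'w'$ with $\pi(r') = f$.

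It remains to verify that $r'$ is actually a return word to $w'$, i.e., that $w'$ has no occurrence at any intermediate position $0 < k < |f|$ of $r'w'$. Such an occurrence would require simultaneously (a) an occurrence of $u$ at position $k$ of $fu$, and (b) the colouring shift at position $k$ to equal $(i',j')$, which is equivalent to the prefix $g$ of $f$ of length $k$ satisfying $|g|_{\tt a} \equiv 0 \pmod{{\rm Per}(\yy)}$ and $|g|_{\tt b} \equiv 0 \pmod{{\rm Per}(\yy')}$. Crucially, neither (a) nor (b) depends on the shifts $(i',j')$; applied to $rw$ with the shifts $(i,j)$ associated with $w$, the very same conditions would force $w$ to occur at position $k$ of $rw$, contradicting $r \in \R_{\bf v}(w)$. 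Hence $r' \in \R_{\bf v}(w')$ and $f \in \pi(\R_{\bf v}(w'))$. The final statement of the corollary is then immediate, since $\pi$ preserves length: $\{|r| : r \in \R_{\bf v}(w)\} = \{|f| : f \in \pi(\R_{\bf v}(w))\}$, and the right-hand side is the same for $w$ and $w'$.

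The main obstacle is the bookkeeping that makes the \emph{shift-invariance} of conditions (a) and (b) rigorous, so that the intermediate-occurrence obstruction indeed transfers between the different preimages $w$ and $w'$. This is precisely where the hypotheses $|u|_{\tt a} > \bs(\yy)$ and $|u|_{\tt b} > \bs(\yy')$ are essential: they force $\gap{\yy}{|u|_{\tt a}}$ and $\gap{\yy'}{|u|_{\tt b}}$ to be singletons, collapsing Proposition~\ref{zeroParikh1} to the single uniform Parikh condition that can then be read off from $f$ alone.
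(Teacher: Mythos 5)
Your proposal is correct and follows essentially the same route as the paper's proof: project a return word $v\in\R_{\vv}(w)$ to $f=\pi(v)$, recolour $fu$ with the shifts that produce $w'$ (via Corollary~\ref{cor:cyclicshift}), and use Proposition~\ref{zeroParikh1} --- collapsed to the single congruence modulo $\bigl(\begin{smallmatrix}{\rm Per}(\yy)\\ {\rm Per}(\yy')\end{smallmatrix}\bigr)$ thanks to the length hypotheses --- to see that the resulting $v'w'$ has $w'$ exactly at its two ends. Your explicit verification that conditions (a) and (b) for an intermediate occurrence are shift-independent just spells out what the paper asserts tersely, so there is nothing to add.
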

	\begin{proof}
		Let $v \in \R_{\vv}(w)$.
		Then $u$ is both a prefix and a suffix of $fu=\pi(vw)$.
		By Proposition~\ref{zeroParikh1}, we have $\Parickh(f) \equiv \left( \begin{smallmatrix} 0 \\ 0 \end{smallmatrix} \right) \pmod {\left( \begin{smallmatrix} {\rm Per(\yy)} \\ {\rm Per(\yy')}\end{smallmatrix} \right)}$, and no shorter prefix of $f$ satisfies this condition.
		By Corollary~\ref{cor:cyclicshift}, the factor $w'$ is obtained by colouring of $u$ with $\sigma^i(\yy)$ and $\sigma^j(\yy')$ for some $i,j \in \N$.
		When colouring $fu$ with $\sigma^i(\yy)$ and $\sigma^j(\yy')$, we get, by Proposition~\ref{zeroParikh1}, a factor $v'w'$ starting in $w'$ and having no other occurrence of $w'$.
		Thus $v' \in \R_{\vv}(w')$ and $\pi(v') = f = \pi(v)$.
	\end{proof}
	
	\begin{example}
		\label{ex:retwords_long}
		Let $\ff, \gg, \yy$ and $\yy'$ be as in Example~\ref{ex:Fibo}, i.e,
		$$
		\ff = {\tt abaab\underline{\textcolor{red}{aba}aba}\textcolor{red}{aba}baab\underline{\textcolor{red}{aba}aba}\textcolor{red}{aba}baab}\cdots\, ,
		$$
		$$
		\gg ={\tt 03104\overline{\textcolor{green}{230}140}\textcolor{green}{230}4103\overline{\textcolor{blue}{240}130}\textcolor{blue}{240}3104}\cdots .
		$$
		Consider $w = \textcolor{green}{{\tt 230}}$, then $u = \pi(w) =\textcolor{red}{ {\tt aba}}$, $|u|_{\tt a} = 2 > 1 = \bs(\yy)$ and $|u|_{\tt b} = 1 > 0 = \bs(\yy')$.
		One can easily check that $v = \overline{{\tt 230140}}$ is a return word to $w$ and $\pi(v) = \underline{{\tt abaaba}}$.
		By Corollary~\ref{coro:retvw_long}, the factor $w' =\textcolor{blue}{ {\tt 240}}$ satisfying $\pi(w') = {\tt aba} = u$ has to have a return word $v'$ with the same projection $\pi(v') = {\tt abaaba}$.
		Indeed, $v' =\overline{ {\tt 240130}}$ is a return word to $w'$ in $\gg$ and $\pi(v') = {\tt abaaba}$.
	\end{example}

	\section{Critical exponent and its relation to return words}
	\label{Section_CriticalExponent}
	
	If $z \in \A^+$ is a prefix of a periodic sequence $u^\omega$ with $u \in \A^+$, we write $z = u^e$, where $e = |z|/|u|$.
	For a non-empty factor $u$ of an infinite sequence $\uu$ we define the \emph{index} of $u$ in $\uu$ as
	$$
	\ind_\uu(u) = \sup\{ e \in \Q : u^e \in \LL(\uu)\}\,.
	$$
	If $\uu$ is periodic, then there exists a factor of $\uu$ with an infinite index.
	Even an aperiodic sequence may contain factors having an infinite index.
	This phenomenon is excluded in aperiodic uniformly recurrent sequences.
	
	\begin{example}
		The language of the Fibonacci sequence ${\ff}$ defined in Example~\ref{ex:FiboDef} contains ${\tt abaabaaba} = ({\tt aba})^3$.
		It is not difficult to check that $\ind_{\ff}({\tt aba}) = 3$.
	\end{example}
	
	The \emph{critical exponent} of an infinite sequence $\uu$ is defined as
	$$
	\CR(\uu)
	=
	\sup\{e \in \Q : \text{there exist } y, x \in \LL(\uu) \text{ with } |x| > 0 \text{ and } x^e = y  \}.
	$$
	Obviously, if $\uu$ is aperiodic  uniformly recurrent, then $\CR(\uu)  = \sup\{\ind_\uu(u) : u \in \LL(\uu)^+ \}$.
	Let us point out that, although each factor of a uniformly recurrent sequence has a finite index, the critical exponent $\CR(\uu)$ may be infinite.
	An example of such a sequence is given by Sturmian sequences whose slope has a continued fraction expansion with unbounded partial quotients (see~\cite{Cade} and \cite{DaLe}).
	
	If no factor of $\uu$ has an infinite index, we define the \emph{asymptotic critical exponent} of $\uu$ as
	$$
	\CR^*(\uu)
	=
	\limsup\limits_{n \to \infty} \ \left( \max \{ \ind_\uu(u) : u \in \LL(\uu) \text{ with } |u| = n \} \right)\,.
	$$
	Otherwise, we set $\CR^*(\uu) = +\infty$.
	Clearly, $\CR^*(\uu) \leq \CR(\uu)$.
	Nevertheless, if $\CR(\uu) = +\infty$, then $\CR^*(\uu) = +\infty$ as well.
	If $\CR(\uu) < +\infty$, then the asymptotic critical exponent can be expressed as
	$$
	\CR^*(\uu)
	=
	\lim_{n\to \infty} \sup \left\{ e \in \Q : \text{there exist } y, x \in \LL(\uu) \text{ with } |x| > n \text{ and } x^e = y \right\}.
	$$
	In the remaining part of this section we give more handy formulae for critical and asymptotic critical exponents in the case of uniformly recurrent sequences.
	They exploit return words to bispecial factors.
	For this purpose we need to state first two auxiliary lemmata.
	
	\begin{lemma}
		\label{Lem_URbispecial_retwords}
		Let $u, w$ be non-empty factors of a recurrent sequence $\uu$.
		If $u$ is a return word to $w$, then $w = u^{e}$ for some $e \in \Q$.
		Moreover, if $\uu$ is aperiodic and uniformly recurrent, then $u$ is a return word to a finite number of factors in $\uu$.
	\end{lemma}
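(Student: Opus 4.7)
The plan is to split the argument into the two independent claims of the lemma.

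For the first claim, I would start from the definition: if $u$ is a return word to $w$, then there are two consecutive occurrences $i < j = i+|u|$ of $w$ in $\uu$, so $uw$ is a factor of $\uu$ having $w$ as both a prefix (at position $0$) and a suffix (at position $|u|$). I then split into two cases according to the lengths. If $|u| \ge |w|$, the prefix condition says directly that $w$ is a prefix of $u$, hence also of $u^\omega$. If $|u| < |w|$, the prefix $w$ and the suffix $w$ of $uw$ overlap on $|w|-|u|$ positions; the prefix part of the equality forces $u$ to be a prefix of $w$, while the overlap equality forces $w$ to have period $|u|$. A word of length $|w|$ that starts with $u$ and has period $|u|$ is, by definition, the prefix of $u^\omega$ of length $|w|$. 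In either case $w$ is a prefix of $u^\omega$, so $w = u^e$ with $e = |w|/|u| \in \Q$.

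For the second claim, the key observation is that whenever $u$ is a return word to some $w$, the first part gives $w = u^e$, and hence $uw = u^{1+e}$ is a factor of $\uu$. Consequently
\[
\ind_\uu(u) \ \ge \ 1 + |w|/|u|.
\]
The hypothesis that $\uu$ is aperiodic and uniformly recurrent guarantees, as stated in the paragraph preceding the lemma, that $\ind_\uu(u) < +\infty$. This yields an upper bound on $|w|$ depending only on $u$. By the first claim, however, the factor $w$ is uniquely determined by $u$ and by $|w|$, namely as the prefix of $u^\omega$ of length $|w|$. Hence only finitely many $w$ can occur.

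The main conceptual step is the first claim: translating the return-word condition into the simultaneous prefix and suffix occurrence of $w$ inside $uw$, and then using the standard prefix/suffix-overlap argument to conclude that $w$ lies along $u^\omega$, so that the fractional-power notation $w = u^e$ of the paper applies. The second claim is a short book-keeping argument relying only on the already-noted finite-index property of aperiodic uniformly recurrent sequences, and I do not expect any obstacle there.
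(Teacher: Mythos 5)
Your proof is correct and follows essentially the same route as the paper's: both reduce the return-word condition to the equation $uw = wz$ (equivalently, $w$ occurring as both a prefix and a suffix of $uw$) and conclude that $w$ is a prefix of $u^\omega$, so that $w = u^{|w|/|u|}$. The only cosmetic differences are that the paper invokes the Lyndon--Sch\"utzenberger theorem where you carry out the prefix/suffix overlap argument by hand, and for the finiteness claim the paper derives the contradiction that $u^n \in \LL(\uu)$ for all $n$, whereas you appeal to the finiteness of $\ind_\uu(u)$ together with the uniqueness of the prefix of $u^\omega$ of a given length --- the finite-index fact being stated (without proof) in the paper just before the lemma, this is an acceptable substitution.
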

	\begin{proof}
		Since $u \in \R_{\uu}(w)$, then $w$ is a prefix of $uw$.
		Hence there exists $z \in \LL(\uu)$ such that $uw = wz$.
		Using the first Lyndon-Sch\u{u}tzenberger Theorem (see~\cite{LySc}) we know that there exist $x, y \in \LL(\uu)$ and a non-negative integer $i$ such that $u = xy$, $z = yx$ and $w = (xy)^ix$.
		Thus, $w$ is a prefix of $u^\omega = (xy)^\omega$.
		
		Let us now suppose that $u$ is a return word to infinitely many factors.
		By the previous argument, $u$ is a fractional root of all those factors.
		This implies that $u^n \in \LL(\uu)$ for all $n \in \N$.
		Thus, $\uu$ is either periodic or not uniformly recurrent, a contradiction.
	\end{proof}
	
	\begin{lemma}
		\label{Lem_URbispecial_retwords2}
		Let $\uu$ be a uniformly recurrent aperiodic sequence and $f$ a non-empty factor of $\uu$ such that $\ind_\uu(f) > 1$.
		Then there exist a factor $u \in \LL(\uu)$ and a bispecial factor $w \in \LL(\uu)$ such that $|f| = |u|$, $\ind_\uu(f) \leq \ind_\uu(u) = 1 + \frac{|w|}{|u|}$ and $u \in \R_{\uu}(w)^+$, i.e., $u$ is a concatenation of one or more return words to $w$.
	\end{lemma}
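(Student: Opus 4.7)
The plan is to extract $u$ and $w$ from a longest factor of $\uu$ of period $|f|$. Such a factor exists because in a uniformly recurrent aperiodic sequence every factor has finite index (as noted above the lemma), so any period-$|f|$ factor $T$ satisfies $|T|/|f|\le\ind_\uu(v)$ where $v$ is its prefix of length $|f|$, and the finitely many words of length $|f|$ in $\LL(\uu)$ have bounded indices. Fix a longest such $T$, let $u$ be its prefix of length $|f|$ and $w$ its prefix of length $|T|-|f|$. The period $|f|$ forces $T=uw$ and makes $w$ also the suffix of $T$ of length $|T|-|f|$, so $|u|=|f|$ as required.

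For the index equality, observe that the longest prefix of $u^\omega$ occurring in $\LL(\uu)$ is itself a period-$|f|$ factor of length $\ind_\uu(u)|u|$, so it has length at most $|T|$; conversely $T$ is such a prefix since it starts with $u$ and has period $|u|$. Hence $\ind_\uu(u)=|T|/|u|=1+|w|/|u|$. For the inequality $\ind_\uu(f)\le\ind_\uu(u)$, note that $f^{\ind_\uu(f)}$ is itself a period-$|f|$ factor of $\uu$, whence $\ind_\uu(f)\,|f|\le|T|=\ind_\uu(u)\,|u|$.

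To show $w$ is bispecial, set $c=T[|w|]$ so that $wc$ is a prefix of $T$ and thus $wc\in\LL(\uu)$. Since $c=T[|T|-|f|]$, the word $Tc$ would have period $|f|$ and length $|T|+1$; the maximality of $T$ therefore rules out $Tc\in\LL(\uu)$. By recurrence $T$ admits some right extension $Tb\in\LL(\uu)$, and necessarily $b\ne c$. Because $w$ is the suffix of $T$, this gives $wb\in\LL(\uu)$, so $w$ is right-special. The symmetric argument, using left extensions of $T$, shows $w$ is left-special, hence bispecial.

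Finally, let $0=i_0<i_1<\cdots<i_k=|u|$ list the starting positions of all occurrences of $w$ within $T$; the extremes $0$ and $|u|$ are provided by the prefix and suffix occurrences of $w$ in $T$, so $k\ge 1$. Then $u=T[i_0\ldots i_1{-}1]\cdots T[i_{k-1}\ldots i_k{-}1]$, and each block $T[i_{j-1}\ldots i_j{-}1]$ is a return word to $w$ in $\uu$: an occurrence of $w$ in $\uu$ starting strictly between $i_{j-1}$ and $i_j$ would lie entirely inside $T$ (since $i_j\le|u|=|T|-|w|$) and would contradict the consecutiveness of $i_{j-1},i_j$ in $T$. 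Hence $u\in\R_\uu(w)^+$. The main obstacle is the bispeciality of $w$: choosing $T$ of maximal period-$|f|$ length is exactly what forces every one-letter extension of $T$ to break the period, furnishing $w$ with a second left and right extension distinct from the one visible inside $T$.
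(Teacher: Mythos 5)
Your proof is correct and follows essentially the same route as the paper's: your longest period-$|f|$ factor $T$ is exactly $u^{\ind_\uu(u)}$ for a length-$|f|$ factor $u$ of maximal index, your $w=u^{-1}T$ is the paper's bispecial factor $(u'u'')^{k-1}u'$, and your "any one-letter extension of $T$ preserving the period would contradict maximality" is the paper's argument that the extensions $x,y$ of $xu^{k+\alpha}y$ must differ from the period-respecting letters. The only cosmetic differences are that you phrase the maximization over lengths of period-$|f|$ factors rather than over indices of length-$|f|$ factors, and you spell out the return-word decomposition of $u$ in more detail than the paper does.
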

	\begin{proof}
		Since $\uu$ is uniformly recurrent and aperiodic, the index of every factor is finite.
		Let us remark that it is enough to consider a factor $u$ satisfying
		$\ind_\uu(u) = \max\{\ind_\uu(v) : v \in \LL(\uu)\ \text{and}\ |v| = |f|\}$.
		First, we describe a mapping which assigns to such $u$ a bispecial factor $w$.
		Let $k \in \N$ and $\alpha \in [0, 1)$ be such that $\ind_\uu(u) = k + \alpha$.
		Then $u$ can be written in the form $u = u'u''$, with $u' = u^\alpha$.
		Clearly, $k \geq 1$ and $u'' \neq \varepsilon$.
		Let $a$ and $b$ denote the first and the last letter of $u''$ respectively.
		Since $\uu$ is uniformly recurrent, there exist letters $x,y \in \A$ such that $x u^{k + \alpha} y = x (u'u'')^k u' y \in \LL(\uu)$.
		Obviously, $\uu$ contains both the factors $x (u' u'')^{k - 1} u'a$ and $b (u' u'')^{k - 1} u' y$.
		Let us observe that:
		\begin{itemize}
			\item $y \neq a$, otherwise $\uu$ would contain $(u' u'')^k u' a = u^e$, with $e > k + \alpha = \ind_\uu(u)$;
			\item $x \neq b $, otherwise $\uu$ would contain $v^k b u'$, with $v = bub^{-1}$ and $b u'$ a prefix of $v$, and that would imply $|v| = |bub^{-1}| = |u|$ and $\ind_\uu(v) > \ind_\uu(u)$.
		\end{itemize}
		Hence $w= (u' u'')^{k-1} u'$ is bispecial in $\uu$.
		Since $u^{k + \alpha} = u w\in \LL(\uu)$ and $w$ is a prefix of $uw$, the factor $u$ is a~concatenation of one or more return words to $w$ and $\ind_\uu(u) = \frac{|uw|}{|u|} = 1 + \frac{|w|}{|u|}$.
	\end{proof}
	
	\begin{theorem}
		\label{Prop_FormulaForCR}
		Let $\uu$ be a uniformly recurrent aperiodic sequence.
		Let $(w_n)$ be a sequence of all bispecial factors ordered by their length.
		For every $n \in \N$, let $v_n$ be a shortest return word to $w_n$ in $\uu$.
		Then
		$$
		\CR(\uu) = 1 + \sup\limits_{n \in \N} \left\{ \frac{|w_n|}{|v_n|} \right\}
		\qquad \text{and} \qquad
		\CR^*(\uu) = 1 + \limsup\limits_{n \to \infty}  \frac{|w_n|}{|v_n|} .
		$$
	\end{theorem}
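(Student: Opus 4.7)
The plan is to prove both equalities by showing two inequalities each, using the two preceding lemmata as the main tools.

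For the lower bounds, I would first observe that for every bispecial factor $w_n$ and its shortest return word $v_n$, the concatenation $v_n w_n$ is a factor of $\uu$, and $w_n$ is a prefix of it. Applying Lemma~\ref{Lem_URbispecial_retwords} with the roles $w=w_n$ and $u=v_n$, we obtain $w_n = v_n^{|w_n|/|v_n|}$, hence $v_n w_n = v_n^{\,1+|w_n|/|v_n|}\in \LL(\uu)$. This yields $\ind_\uu(v_n)\ge 1+|w_n|/|v_n|$, and taking the supremum gives the lower bound for $\CR(\uu)$. For $\CR^*(\uu)$ the same inequality is used, but I must additionally show $|v_n|\to\infty$ so that the indices $\ind_\uu(v_n)$ contribute to the limsup in the definition of $\CR^*(\uu)$. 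This follows from the second part of Lemma~\ref{Lem_URbispecial_retwords}: since $\uu$ is aperiodic and uniformly recurrent, each fixed word can be a return word to only finitely many factors, so no $v$ can appear as $v_n$ for infinitely many $n$; combined with the fact that there are only finitely many words of any bounded length, this forces $|v_n|\to\infty$. Then $\CR^*(\uu)\ge \limsup_n \ind_\uu(v_n)\ge 1+\limsup_n |w_n|/|v_n|$.

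For the upper bounds, I would start from an arbitrary factor $f\in\LL(\uu)$ with $\ind_\uu(f)>1$. Lemma~\ref{Lem_URbispecial_retwords2} produces a factor $u$ with $|u|=|f|$ and a bispecial factor $w=w_n$ (for some index $n$) with $\ind_\uu(f)\le \ind_\uu(u)=1+|w_n|/|u|$ and $u\in\R_\uu(w_n)^+$. Since $u$ is a concatenation of one or more return words to $w_n$, we have $|u|\ge|v_n|$ where $v_n$ is a shortest such return word, so $\ind_\uu(f)\le 1+|w_n|/|v_n|$. For factors with $\ind_\uu(f)\le 1$ the inequality $\ind_\uu(f)\le 1+\sup_n |w_n|/|v_n|$ is immediate because the right-hand side is at least $1$. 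Taking the supremum over all factors yields the upper bound for $\CR(\uu)$.

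The asymptotic version of the upper bound is the most delicate point, and I expect it to be the main obstacle. Given a sequence of factors $f_k$ with $|f_k|\to\infty$ and $\ind_\uu(f_k)\to\CR^*(\uu)$ (we may assume $\CR^*(\uu)>1$, otherwise the inequality is trivial), Lemma~\ref{Lem_URbispecial_retwords2} yields corresponding $u_k$ with $|u_k|=|f_k|$ and bispecial factors $w_{n_k}$ with $\ind_\uu(f_k)\le 1+|w_{n_k}|/|v_{n_k}|$. The key additional observation is that $|w_{n_k}|\to\infty$: indeed, from $\ind_\uu(u_k)=1+|w_{n_k}|/|u_k|$ we get $|w_{n_k}|=(\ind_\uu(u_k)-1)|u_k|\ge (\ind_\uu(f_k)-1)|f_k|$, and the right-hand side tends to infinity because $\ind_\uu(f_k)-1\to \CR^*(\uu)-1>0$ and $|f_k|\to\infty$. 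Consequently, the indices $n_k$ in the ordered sequence $(w_n)$ tend to infinity, so $|w_{n_k}|/|v_{n_k}|\le \sup_{n\ge N}|w_n|/|v_n|$ for every $N$ and all $k$ large enough. Passing to the limit gives $\CR^*(\uu)-1\le \limsup_n |w_n|/|v_n|$, which completes the proof.
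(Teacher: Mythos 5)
Your proof is correct and follows essentially the same route as the paper: the lower bounds come from Lemma~\ref{Lem_URbispecial_retwords} applied to the shortest return words $v_n$ (together with $|v_n|\to\infty$), and the upper bounds come from Lemma~\ref{Lem_URbispecial_retwords2} combined with $|u|\ge|v_n|$. The only differences are cosmetic: you avoid the paper's case split on $\CR(\uu)=+\infty$ (and its $\delta$-argument) by bounding each $\ind_\uu(f)$ directly before taking the supremum, and you make explicit the step $|w_{n_k}|\ge(\ind_\uu(f_k)-1)|f_k|\to\infty$, which guarantees that the indices of the selected bispecial factors tend to infinity — a detail the paper compresses into ``we proceed analogously as before''.
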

	\begin{proof}
		By Lemma~\ref{Lem_URbispecial_retwords}, $v_nw_n = v_n^{e_n}$ for some exponent $e_n \in \Q$ and thus
		$\ind_\uu(v_n) \geq e_n = \frac{|v_nw_n|}{|v_n|} = 1 + \frac{|w_n|}{|v_n|}$.
		Hence
		$\CR(\uu) \geq 1 + \sup \left\{ \frac{|w_n|}{|v_n|} \right\} >1$.
		By the second statement of the same lemma, $\lim\limits_{n \to \infty} |v_n| = \infty$.
		Therefore,
		$\CR^*(\uu) \geq 1 + \limsup \frac{|w_n|}{|v_n|} \geq 1$.
		
		To show the opposite inequality, we distinguish two cases.
		
		We first assume that $\CR(\uu) = +\infty$.
		We find a sequence $(f_n)$ of factors of $\uu$ having the property $\ind_\uu(f_n) \to +\infty$.
		Since $\uu$ is uniformly recurrent, $|f_n| \to +\infty$ too.
		By Lemma~\ref{Lem_URbispecial_retwords2} for each $n \in \N$ there exists a bispecial factor $w_{k_n}$ in $\LL(\uu)$ and $u_{k_n} \in \R_\uu(w_{k_n})^+$ such that $\ind_\uu(f_n) \leq  1 + \frac{|w_{k_n}|}{|u_{k_n}|}$.
		Obviously, $\frac{|w_{k_n}|}{|u_{k_n}|} \leq \frac{|w_{k_n}|}{|v_{k_n}|}$.
		Hence,
		$$
		\CR^*(\uu)
		\leq
		\CR(\uu)
		=
		+\infty
		=
		\lim_{n \to \infty} \ind_\uu (f_n)
		\leq
		1 + \limsup_{n \to \infty} \frac{|w_{k_n}|}{|v_{k_n}|}
		\leq
		1 + \sup\limits_{n \in \N} \left\{ \frac{|w_n|}{|v_n|} \right\}\,.
		$$
		
		Now assume that $\CR(\uu) < +\infty$.
		Let $\delta > 0$ be such that $\CR(\uu) - \delta > 1$.
		Thus there exists $f \in \LL(\uu)$ satisfying $\CR(\uu) - \delta < \ind_\uu(f)$.
		Using Lemma~\ref{Lem_URbispecial_retwords2}, we find $u \in \LL(\uu)$ and a bispecial factor $w$ such that $\ind_\uu(f) \leq \ind_\uu(u) = 1 + \frac{|w|}{|u|}$, where $u \in \R_{\uu}(w)^+$.
		Therefore, for some index $m \in \N$, one has $w = w_m$ and $|u| \geq |v_m|$.
		Altogether, for an arbitrary positive $\delta$, we have
		$$
		\CR(\uu) - \delta
		<
		\ind_\uu(f)
		\leq
		\ind_\uu(u)
		=
		1 + \frac{|w|}{|u|}
		\leq
		1 + \frac{|w_m|}{|v_m|}
		\leq
		1 + \sup_{n \in \N} \left\{ \frac{|w_n|}{|v_n|} \right\}.
		$$
		Consequently, $\CR(\uu) \leq 1 + \sup\left\{ \frac{|w_n|}{|v_n|} \right\}$.
		
		If $\CR^*(\uu) = 1$, then the above proven inequality $\CR^*(\uu) \geq 1 + \limsup \frac{|w_n|}{|v_n|} \geq 1$ implies the second statement of the proposition.
		If $\CR^*(\uu) > 1$, then there exists a sequence of factors $(f^{(n)})$ of $\uu$ with  $\ind_\uu(f^{(n)}) > 1$ for every $n$, such that  $|f^{(n)}| \to \infty$ \ and \ $\ind_\uu(f^{(n)}) \to \CR^*(\uu)$.
		For each $n$, we find a factor $u^{(n)}$ and a bispecial factor $w^{(n)}$ with the properties given in Lemma~\ref{Lem_URbispecial_retwords2} and we proceed analogously as before.
	\end{proof}
	
	\begin{example}
		Let us consider the Fibonacci sequence $\ff$ from Example~\ref{ex:FiboDef}.
		We will recall later how to calculate the lengths of bispecial factors and their return words in Sturmian sequences (see Proposition~\ref{ParikhRSB1}).
		Let us use the same notation as in Theorem~\ref{Prop_FormulaForCR}. Then $|w_n|=F_{n+2}+F_{n+1}-2$ and $|v_n|=F_{n+1}$ with $F_0=0, F_1=1$.
		We therefore get
		$E(\ff) = 2 + \tau = 2 + \frac{1+\sqrt{5}}{2}$, which is in correspondence with the formula from~\cite{Cade} and \cite{DaLe}.
	\end{example}
	
	\begin{remark}
		\label{rem:supset}
		As already mentioned in Remark~\ref{rem:retwords_extension_to_BS}, every factor $f$ of a recurrent aperiodic sequence $\uu$ can be uniquely extended to the shortest bispecial factor $w = xfy \in \LL(\uu)$ and the lengths of return words to $f$ and to $w$ coincide.
		Clearly, if $|v|$ is the length of a shortest return word to $f$, then $\frac{|f|}{|v|} \leq \frac{|w|}{|v|}$.
		Therefore, in Theorem~\ref{Prop_FormulaForCR}, the sequence of bispecial factors $(w_n)$ can be replaced by any sequence of factors $(f_n)$ having $(w_n)$ as its subsequence.
	\end{remark}

	\section{Sturmian sequences}
	\label{sec:sturmian}
	
	Our aim is to use Theorem~\ref{Prop_FormulaForCR} for calculation of the (asymptotic) critical exponent of balanced sequences.
	Thanks to Proposition~\ref{pro:standardSturm} we can, without loss of generality, restrict our study to the colouring of standard Sturmian sequences.
	In order to determine the shortest return words to bispecial factors in balanced sequences, we need to list some important facts on a standard Sturmian sequence over the binary alphabet $\{ {\tt a,b} \}$.
	They are partially taken from~\cite{DvMePe} and~\cite{Lo}.
	
	First, we point out that a factor of a standard Sturmian sequence $\uu$ is bispecial if and only if it is a palindromic prefix of $\uu$.
	In particular, the first letter of $\uu$ is a bispecial factor.
	We adopt the convention of letting ${\tt b}$ denote the first letter of $\uu$.
	Obviously, the letter frequencies satisfy $\rho_{\tt b} > \rho_{\tt a}$.
	We also use the convention that the first component of the Parikh vector of a factor of $\uu$ corresponds to the least frequent letter and the second component to the most frequent letter of the sequence $\uu$ (even when we consider standard Sturmian sequences over binary alphabets other than $\{ {\tt a}, {\tt b} \}$).
	
	Second, we point out that the derived sequence of a standard Sturmian sequence $\uu$ to any factor $u \in \LL(\uu)$ is a standard Sturmian sequence as well.
	Let ${\tt r}$ and ${\tt s}$ denote the letters of $\dd_\uu(u)$ coding the occurrences of the two return words to $u$ in $\uu$.
	We adopt the convention that the first letter in $\dd_\uu(u)$ is ${\tt r}$.
	In particular, it means that the letter frequencies in $\dd_\uu(u)$ satisfy $\rho_{\tt r} > \rho_{\tt s}$ and ${\tt r}$ is a coding of the first return word to $u$ appearing in $\uu$.
	This return word to $u$ in $\uu$ will be denoted by $r$ and called the most frequent return word to $u$ in $\uu$.
	The other return word to $u$ in $\uu$ will be denoted by $s$ and called the least frequent return word to $u$ in $\uu$.
	In particular, $r$ and $s$ are factors of $\uu$ and ${ \tt r} $ and ${\tt s}$ are letters of $\dd_\uu(u)$.
	
	Third, we use the characterisation of standard Sturmian sequences by their directive sequences.
	To introduce them, we define the two morphisms
	$$
	G = \
	\left\{ \,
	\begin{aligned}
		{\tt a} & \to {\tt a} \\
		{\tt b} & \to {\tt ab} \,
	\end{aligned}
	\right.
	\quad \text{and} \quad
	D = \
	\left\{ \,
	\begin{aligned}
		{\tt a} & \to {\tt ba} \\
		{\tt b} & \to {\tt b} \,
	\end{aligned}
	\right. .
	$$
	
	\begin{proposition}[\cite{JuPi}]
		\label{Lem_Standard}
		For every standard Sturmian sequence $\uu$ there is a uniquely given sequence ${\mathbf \Delta} = \Delta_0 \Delta_1 \Delta_2 \cdots \in \{ G, D \}^\N$ of morphisms and a sequence $(\uu^{(n)})$ of standard Sturmian sequences such that
		$$
		\uu = {\Delta_0 \Delta_1 \cdots \Delta_{n-1}} \left( \uu^{(n)} \right) \, \ \text{for every } \ n \in \N\,.
		$$
		The sequence ${\mathbf \Delta}$, called the \emph{directive sequence} of $\uu$, contains infinitely many morphisms $G$ and infinitely many morphisms $D$.
		
		If moreover ${\tt b}$ is the most frequent letter in $\uu$, then $\Delta_0 = D$ and the directive sequence can be written in the form
		$
		{\mathbf \Delta} = D^{a_1} G^{a_2} D^{a_3} G^{a_4} \cdots
		$
		for some sequence $(a_i)_{i \geq 1}$ of positive integers.
	\end{proposition}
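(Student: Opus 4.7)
The plan is to prove the statement by iterated desubstitution, a Rauzy-type argument. Given a standard Sturmian sequence $\uu$ over $\{{\tt a}, {\tt b}\}$, exactly one of the two letters is strictly more frequent, and balancedness of $\uu$ forces the less frequent letter never to appear doubled in $\LL(\uu)$; moreover, for a standard Sturmian sequence the more frequent letter coincides with the first letter of $\uu$. Assume (the other case is symmetric) that ${\tt b}$ is the more frequent letter, so $\uu$ starts with ${\tt b}$ and ${\tt aa} \notin \LL(\uu)$. Then every occurrence of ${\tt a}$ in $\uu$ is preceded by a ${\tt b}$, and $\uu$ admits a unique parsing as a concatenation of the two blocks ${\tt ba}$ and ${\tt b}$. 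Renaming ${\tt ba} \mapsto {\tt a}$ and ${\tt b} \mapsto {\tt b}$ yields a unique sequence $\uu^{(1)}$ with $\uu = D(\uu^{(1)})$, and I set $\Delta_0 = D$; in the symmetric case one obtains $\Delta_0 = G$ and $\uu = G(\uu^{(1)})$. A short verification, using the bijection between factors of $\uu^{(1)}$ and block-aligned factors of $\uu$, shows that $\uu^{(1)}$ is again a standard Sturmian sequence, so the construction can be iterated.

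Running the construction inductively produces $(\Delta_n)_{n \ge 0}$ and $(\uu^{(n)})_{n \ge 0}$ with $\uu = \Delta_0 \Delta_1 \cdots \Delta_{n-1}(\uu^{(n)})$ for every $n \in \N$. Uniqueness of $\mathbf{\Delta}$ is then immediate: at each stage $\Delta_n$ is forced by the frequency (equivalently, the first letter) of $\uu^{(n)}$, and since both $G$ and $D$ are injective morphisms, $\uu^{(n+1)}$ is uniquely recovered from $\Delta_n(\uu^{(n+1)}) = \uu^{(n)}$.

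The step I expect to be the main obstacle is showing that both $G$ and $D$ occur infinitely often in $\mathbf{\Delta}$. Suppose, for contradiction, that $\Delta_n = D$ for every $n \ge N$. The morphism $D$ fixes the letter ${\tt b}$ and maps ${\tt a}$ to ${\tt ba}$, so if $\uu^{(N+k)}$ starts with ${\tt b}^{m_k} {\tt a} \cdots$ then $\uu^{(N)} = D^k(\uu^{(N+k)})$ starts with ${\tt b}^{m_k + k} {\tt a} \cdots$. Letting $k \to \infty$ forces $\uu^{(N)} = {\tt b}^\omega$, contradicting the fact that $\uu^{(N)}$ is Sturmian (hence aperiodic and containing both letters). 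A symmetric argument using that $G$ fixes ${\tt a}$ and lengthens ${\tt b}$ to ${\tt ab}$ excludes eventually constant $G$'s.

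For the last claim, under the paper's convention that ${\tt b}$ is the most frequent letter of $\uu$, the construction above forces $\Delta_0 = D$. Grouping consecutive equal morphisms into maximal blocks (which by maximality must alternate between $D$ and $G$) and combining this with the infinite-occurrence statement just proved yields the stated form $\mathbf{\Delta} = D^{a_1} G^{a_2} D^{a_3} G^{a_4} \cdots$ with each $a_i \ge 1$.
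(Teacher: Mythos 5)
The paper does not prove this statement at all: it is quoted as a known result of Justin and Pirillo (reference [JuPi]), so there is no in-paper proof to compare against. Your desubstitution argument is the standard proof of this fact and is essentially sound: the less frequent letter of a Sturmian sequence is isolated by balancedness, hence not left special, so the first letter of a standard Sturmian sequence is the most frequent one; the parsing into blocks $\{{\tt b}, {\tt ba}\}$ (resp.\ $\{{\tt a},{\tt ab}\}$) is unique because every block begins at an occurrence of ${\tt b}$ (resp.\ ${\tt a}$), which also gives injectivity of $D$ and $G$ on sequences and hence the uniqueness of $\mathbf{\Delta}$ (note that $\Delta_0$ is forced by the first letter of $\uu$ alone, since every sequence in the image of $D$ starts with ${\tt b}$ and every sequence in the image of $G$ starts with ${\tt a}$); and your argument that an eventually constant directive sequence would push the first occurrence of the isolated letter to infinity, forcing $\uu^{(N)}$ to be ${\tt b}^\omega$ or ${\tt a}^\omega$, is correct. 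The only step you defer is the claim that $\uu^{(1)}$ is again a \emph{standard Sturmian} sequence; this is the real content of the induction and deserves more than the phrase ``a short verification.'' It can be closed with facts the paper already invokes: $\uu^{(1)}$ is (up to renaming letters) the derived sequence $\dd_\uu({\tt b})$, since the return words to the prefix ${\tt b}$ are exactly ${\tt b}$ and ${\tt ba}$, and the derived sequence of a standard Sturmian sequence to a prefix is standard Sturmian (Vuillon/Durand, as recalled in Sections~\ref{Section_Preliminaries} and~\ref{sec:sturmian}). With that step filled in, your proof is complete.
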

	
	We associate an irrational number $\theta$ with the directive sequence ${\mathbf \Delta}$ as follows:
	$$
	\theta =
	\theta(\uu): =
	[0, a_1,a_2, a_3 , \ldots].
	$$
	It was shown in~\cite{Lo} that $\theta = \frac{\rho_{\tt a}}{\rho_{\tt b}}$ and it is usually called the {\em slope} of $\uu$. 
	\begin{remark}
		\label{rem:prefix_u}
		If $\uu$ has directive sequence ${\mathbf \Delta} = D^{a_1} G^{a_2} \cdots$, then $\uu$ is a concatenation of blocks $ {\tt b}^{a_1}{\tt a}$ and ${\tt b}^{a_1+1}{\tt a}$.
		Moreover, ${\tt b}^{a_1}{\tt a}$ is a prefix of $\uu$.
		Indeed, Proposition~\ref{Lem_Standard} gives $\uu = D^{a_1}(\uu')$, where $\uu'$ is the image of a standard Sturmian sequence under $G$.
		The form of $G$ implies that $\uu'$ is a concatenation of the blocks ${\tt a}$ and ${\tt ab}$.
		Their images under $D^{a_1}$ are ${\tt b}^{a_1}{\tt a}$ and ${\tt b}^{a_1+1}{\tt a}$ respectively.
	\end{remark}
	
	\begin{example}
		\label{ex:illustrative_prefix}
		Let $\uu$ be a standard Sturmian sequence with slope $\theta = [0, 3,2, \overline{3, 1}].$
		Then by Remark~\ref{rem:prefix_u} the word $D^3 G^2 D^3 G^1({\tt bbba})$ is a prefix of $\uu$, i.e.,
		$$
		\uu =
		{\tt b^3 a b^3 a b^4 a b^3 a b^4 a b^3 a b^4 a b^3 a b^3 a b^4 a b^3 a b^4 a b^3 a b^4 a b^3 a b^4 a b^3 a b^3 a b} \cdots .$$
	\end{example}
	
	The Parikh vectors of bispecial factors in $\uu$ and the corresponding return words can be easily expressed using the convergents $\frac{p_N}{q_N}$ to $\theta$.
	Let us recall that the nominator $p_N$ and the denominator $q_N$ of the $N^{\text{th}}$ \textit{convergent} to $\theta$ satisfy for all $N \ge 1$ the recurrence relation
	\begin{equation}
		\label{eq:convergents}
		X_{N} = a_{N} X_{N-1} +X_{N-2},
	\end{equation}
	but they differ in their initial values:
	$p_{-1} = 1, p_0 = 0$;
	and
	$q_{-1} = 0, q_0 = 1$.
	In the following we will also consider the quantity $Q_N := p_N + q_N$.
	The notation we use corresponds to the fact that $Q_N$ are denominators of the convergents to the number $\frac{1}{1+\theta}$, which is the frequency of the letter $\tt b$. Obviously, $Q_N$ satisfy the recurrence relation~\eqref{eq:convergents} with the initial values $Q_{-1} = 1$ and $Q_{0} = 1$.
	
	\begin{proposition}[\cite{DvMePe}]
		\label{ParikhRSB1}
		Let $\theta = [0, a_1,a_2, a_3, \ldots]$ be the slope of a standard Sturmian sequence $\uu$ and $b$ a bispecial factor of $\uu$.
		Let $r$ (resp., $s$) denote the return word to $b$ which is (resp., is not) a prefix of $\uu$.
		Then
		\begin{enumerate}
			\item there exists a unique pair $(N,m) \in \N^2$ with $0 \le m < a_{N+1}$ such that the Parikh vectors of $r$ and $s$ and $b$ are respectively
			$$
			\Parickh(r) = \begin{pmatrix} p_N \\ q_N \end{pmatrix},
			\; \;
			\Parickh(s) = \begin{pmatrix} m \, p_{N} + p_{N-1} \\ m \, q_{N} + q_{N-1} \end{pmatrix}
			\ \ \text{and} \ \
			\Parickh(b) = \Parickh(r) + \Parickh(s) - \begin{pmatrix} 1 \\ 1 \end{pmatrix};
			$$
			\item the slope of the derived sequence $\dd_\uu(b)$ to $b$ in $\uu$ is
			${\theta}' = [0, a_{N+1} - m, a_{N+2}, a_{N+3}, \ldots]$.
		\end{enumerate}
	\end{proposition}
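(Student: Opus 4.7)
I would prove both statements simultaneously by induction on the length of the bispecial factor $b$, equivalently on the lexicographic order of the pair $(N,m)$. The structural inputs are Proposition~\ref{Lem_Standard} and the fact, recalled earlier in the section, that the derived sequence $\dd_\uu(b)$ of a standard Sturmian sequence to any of its bispecial factors is again standard Sturmian. The key tool is the return-word morphism $\Theta\colon{\tt r}\mapsto r,\ {\tt s}\mapsto s$: since $b$ is a prefix of $\uu$, one has $\uu = \Theta(\dd_\uu(b))$, and $\Theta$ induces a length-increasing bijection between the non-empty bispecial prefixes $c$ of $\dd_\uu(b)$ and the bispecial prefixes of $\uu$ strictly longer than $b$, via $c\mapsto\Theta(c)\,b$. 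In particular, the next bispecial factor of $\uu$ after $b$ is $b^\flat = rb$.

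For the base case $b=\varepsilon$, parameters $(N,m)=(0,0)$, the return words are the letters themselves: $r={\tt b}$ (the first letter of $\uu$, hence the prefix and, by convention, the most frequent) and $s={\tt a}$. Direct verification gives $\Parickh(r)=(0,1)=(p_0,q_0)$, $\Parickh(s)=(1,0)=(p_{-1},q_{-1})$, and $\Parickh(b)=\begin{pmatrix}0\\0\end{pmatrix}=\Parickh(r)+\Parickh(s)-\begin{pmatrix}1\\1\end{pmatrix}$. Moreover $\dd_\uu(\varepsilon)$ coincides with $\uu$ under relabelling ${\tt b}\mapsto{\tt r},\ {\tt a}\mapsto{\tt s}$, so it has slope $\theta=[0,a_1,a_2,\ldots]$.

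For the inductive step, suppose the claim holds at $(N,m)$. By Proposition~\ref{Lem_Standard} the directive sequence of $\dd_\uu(b)$ begins with $D^{a_{N+1}-m}$. If $a_{N+1}-m\ge 2$, then ${\tt r}$ is bispecial in $\dd_\uu(b)$ (parameters $(0,1)$ in its own scheme) with return words ${\tt r}$ (the prefix and most frequent) and ${\tt rs}$; pulling back via $\Theta$ yields $r^\flat=r$ and $s^\flat=rs$, and one computes $\Parickh(r^\flat)=(p_N,q_N)$ together with $\Parickh(s^\flat)=\Parickh(r)+\Parickh(s)=((m+1)p_N+p_{N-1},\,(m+1)q_N+q_{N-1})$, matching the formulas at $(N,m+1)$. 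If instead $a_{N+1}-m=1$, then the parameters of the shortest non-empty bispecial prefix of $\dd_\uu(b)$ jump directly from $(0,0)$ to $(1,0)$ in its own scheme; this bispecial is again the letter ${\tt r}$, but its return words are now ${\tt rs}$ (the prefix and most frequent) and ${\tt r}$, so $r^\flat=rs$ and $s^\flat=r$, yielding $\Parickh(r^\flat)=\Parickh(r)+\Parickh(s)=(p_{N+1},q_{N+1})$ via the convergent recurrence $p_{N+1}=a_{N+1}p_N+p_{N-1}$ and the inductive value of $\Parickh(s)$ at $m=a_{N+1}-1$, and $\Parickh(s^\flat)=\Parickh(r)=(p_N,q_N)$, matching $(N+1,0)$. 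In both cases the identity $\Parickh(b^\flat)=\Parickh(r^\flat)+\Parickh(s^\flat)-\begin{pmatrix}1\\1\end{pmatrix}$ follows from $b^\flat=rb$ combined with the inductive formula for $\Parickh(b)$, and the new derived slope follows from the transitivity relation $\dd_\uu(b^\flat)=\dd_{\dd_\uu(b)}({\tt r})$ applied to the already-established base case inside $\dd_\uu(b)$.

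The main obstacle is the precise justification of the $\Theta$-induced bijection $c\mapsto\Theta(c)\,b$ between bispecial prefixes of $\dd_\uu(b)$ and those of $\uu$ extending $b$, which requires verifying that the left and right extensions of $\Theta(c)\,b$ in $\uu$ correspond exactly to those of $c$ in $\dd_\uu(b)$. A secondary difficulty is identifying which of the two return words to ${\tt r}$ in $\dd_\uu(b)$ is the prefix (and hence the most frequent): in Case~A this is the short return word ${\tt r}$, whereas in Case~B the sequence $\dd_\uu(b)$ begins with ${\tt rs}$, so the longer return word is the prefix. This swap is the combinatorial reflection of the alternation between the morphisms $D$ and $G$ in the directive sequence, and it is what produces the two different Parikh-vector updates inside the single convergent recurrence.
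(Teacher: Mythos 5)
This proposition is stated in the paper as a citation to~\cite{DvMePe}; the paper gives no proof of its own, so there is nothing in-text to compare against. Your induction along the chain of bispecial prefixes, driven by the return-word morphism $\Theta$ and the transitivity of derivation, is a sound and essentially standard route, and the Parikh-vector bookkeeping is correct: the base case $(0,0)$ checks out against the initial values $p_{-1}=q_0=1$, $p_0=q_{-1}=0$; in Case~A the update $(N,m)\mapsto(N,m+1)$ with $\Parickh(s^\flat)=\Parickh(r)+\Parickh(s)$ is right; and in Case~B the swap of prefix and non-prefix return words correctly produces $\Parickh(r^\flat)=a_{N+1}\,\Parickh(r)+\Parickh(s^{(N-1)})=\bigl(\begin{smallmatrix}p_{N+1}\\ q_{N+1}\end{smallmatrix}\bigr)$ via the convergent recurrence. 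Two points deserve tightening. First, the fact you flag yourself — that $c\mapsto\Theta(c)\,b$ carries bispecial prefixes of $\dd_\uu(b)$ bijectively to bispecial prefixes of $\uu$ extending $b$, so that in particular $rb$ is the \emph{next} bispecial factor and $\R_\uu(rb)=\Theta(\R_{\dd_\uu(b)}({\tt r}))$ — is genuinely the load-bearing lemma; it is true (it is Justin's formula for palindromic prefixes together with Durand's composition of derivations), but a complete writeup must prove it rather than assert it. Second, your phrase ``applied to the already-established base case inside $\dd_\uu(b)$'' is slightly off: what the slope update actually requires is the derived sequence of $\dd_\uu(b)$ to its first \emph{letter} (pair $(0,1)$ or $(1,0)$ in its own scheme, not $(0,0)$), which must be computed directly — e.g.\ by observing that de-substituting one factor $D$ from the directive sequence $D^{a_{N+1}-m}G^{a_{N+2}}\cdots$ is exactly derivation at the first letter — to avoid circularity with the statement being proved. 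With those two items filled in, the argument is complete.
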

	
	\begin{remark}
		\label{rem:N,m}
		A list of the bispecial factors of a Sturmian sequence $\uu$  ordered by their length starts with $\varepsilon$, ${\tt b}$, etc.
		For each $n \in \N$, the pair assigned to the $n^{\text{th}}$ bispecial factor is the unique pair $(N,m)$ satisfying $n = a_0 + a_1 + \cdots + a_N + m$ and $0 \leq m < a_{N+1}$, where we put $a_0 = 0$.
	\end{remark}
	
	\begin{example}
		\label{ex:ParikhRetwords}
		Let $\uu$ be the sequence
		$$
		\uu = {\tt b^3 a b^3 a b^4 a b^3 a b^4 a b^3 a b^4 a b^3 a b^3 a b^4 a b^3 a b^4 a b^3 a b^4 a b^3 a b^4 a b^3 a b^3 a b }\cdots
		$$
		from Example~\ref{ex:illustrative_prefix} having slope $\theta = [0, 3, 2, \overline{3, 1}]$.
		The value of $\theta$ calculated from its continued fraction expansion is $\theta = \frac{39 + \sqrt{21}}{150} \doteq 0,29$.
		
		Since $\uu$ is a standard Sturmian sequence, each bispecial factor is a palindromic prefix of $\uu$.
		Thus the bispecial factors ordered by their length are
		$$
		b_0 = \varepsilon,\ \
		b_1 = {\tt b}, \ \
		b_2 = {\tt b}^2, \ \
		b_3 = {\tt b}^3, \ \
		b_4 = {\tt b}^3{\tt ab}^3, \ \
		b_5 = {\tt b}^3{\tt ab}^3 {\tt ab}^3, \ \
		\text{etc.}
		$$
		\begin{itemize}
			\item Consider the bispecial factor $ b_2 = {\tt b^2}$.
			The pair associated with $b_2$ is $(N,m) = (0,2)$.
			The prefix return word to $\tt b^2$ is $r = {\tt b}$, the non-prefix return word is $s = {\tt b^2 a}$ and their Parikh vectors are in correspondence with Proposition~\ref{ParikhRSB1} since
			$$
			\Parickh(r) = \begin{pmatrix} p_0 \\ q_0 \end{pmatrix} = \begin{pmatrix} 0 \\ 1 \end{pmatrix}
			\quad \text{and} \quad
			\Parickh(s) = \begin{pmatrix} 2p_0+p_{-1} \\ 2q_0+q_{-1} \end{pmatrix} = \begin{pmatrix} 1 \\ 2 \end{pmatrix}.
			$$
			The slope $\theta'$ of the derived sequence $\dd_\uu(b_2)$ is
			${\theta}' = [0, 1, 2,\overline{3, 1}]$.
			Hence $\tfrac{1}{\theta'} = \frac{1}{\theta} - 2$, i.e. $ \theta'  \doteq 0,694$.
			
			\item Consider $b_3 = {\tt b^3}$.
			The associated pair is $(N,m) = (1,0)$.
			The prefix return word to $b_3$ is $r = {\tt b^3a}$ and the non-prefix return word is $s = {\tt b}$.
			It is in correspondence with Proposition~\ref{ParikhRSB1} since
			$$
			\Parickh(r) = \begin{pmatrix} p_1 \\ q_1 \end{pmatrix} = \begin{pmatrix} 1 \\ 3 \end{pmatrix}
			\quad \text{and} \quad
			\Parickh(s) = \begin{pmatrix} p_{0} \\ q_0 \end{pmatrix} = \begin{pmatrix} 0 \\ 1 \end{pmatrix}.
			$$
			The slope $\theta'$ of $\dd_\uu(b_3)$ is ${\theta}' = [0,  2,\overline{3, 1}]$.
			Hence $\theta' = \frac{1}{\theta} - 3 \doteq 0,442$.
			
			\item Consider $ b_4 = {\tt b^3ab^3}$.
			The associated pair is $(N,m) = (1,1)$.
			The Parikh vectors of the return words $r = {\tt b^3a}$ and $s = {\tt b^3ab}$ are
			$$
			\Parickh(r) = \begin{pmatrix} p_1 \\ q_1 \end{pmatrix} = \begin{pmatrix} 1 \\ 3 \end{pmatrix}
			\quad \text{and} \quad
			\Parickh(s) = \begin{pmatrix} p_{1} + p_0 \\ q_1 +q_0 \end{pmatrix} = \begin{pmatrix} 1 \\ 4 \end{pmatrix}.
			$$
			The slope $\theta'$ of $\dd_\uu(b_4)$ is ${\theta}' = [0, 1, \overline{3, 1}]$.
			This gives $\theta' = \frac{-3+\sqrt{21}}{2} \doteq 0,791$.
		\end{itemize}
	\end{example}
	
	Since every factor of a Sturmian sequence has exactly two return words, every piece of $\uu$ between two occurrences of $u$ is a concatenation of these two return words.
	This implies the following observation.
	
	\begin{Observation}
		\label{observation}
		Let $r$ and $s$ be respectively the most and the least frequent return word to $u$ in $\uu$.
		If $fu \in \LL(\uu)$ and $u$ is a prefix of $fu$, then $\Parickh(f) = k \Parickh(r) + \ell \Parickh(s)$, where $\left( \begin{smallmatrix} \ell \\ k \end{smallmatrix} \right)$ is the Parikh vector of a factor of the derived sequence $\dd_\uu(u)$.
	\end{Observation}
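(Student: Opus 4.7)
The plan is to decompose $f$ into a concatenation of return words to $u$ and then read off the Parikh vector.

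First, since $u$ is a prefix of $fu$, the word $f$ starts at an occurrence of $u$ in $\uu$, namely at position $0$ of $fu$. Because $u$ occurs again at position $|f|$ (the end of $f$ in $fu$), all consecutive occurrences of $u$ inside $fu$ at positions $0 = i_0 < i_1 < \cdots < i_n = |f|$ cut $f$ into blocks $f = r_{d_0} r_{d_1} \cdots r_{d_{n-1}}$, where each block is exactly a return word to $u$. By Vuillon's theorem (quoted after Example~\ref{ex:FiboDef}), the Sturmian sequence $\uu$ admits only two return words to $u$, namely $r$ and $s$, so every $r_{d_i}$ is either $r$ or $s$.

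Next, by the very definition of the derived sequence $\dd_\uu(u)$ recalled in Section~\ref{Section_Preliminaries}, the word $d_0 d_1 \cdots d_{n-1} \in \{{\tt r,s}\}^*$ obtained by coding each block is a factor of $\dd_\uu(u)$ (it appears at the position corresponding to the occurrence of $u$ that sits at position $0$ in $fu$; such a position exists because $u$ is a factor of $\uu$ and the decomposition recalled in Section~\ref{Section_Preliminaries} is exhaustive after the chosen prefix). Denote by $k$ the number of indices $i$ with $d_i = {\tt r}$ and by $\ell$ the number with $d_i = {\tt s}$. Summing Parikh vectors block by block yields
\[
\Parickh(f) \;=\; k\,\Parickh(r) + \ell\,\Parickh(s).
\]

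Finally, one invokes the convention fixed at the start of Section~\ref{sec:sturmian}: the first component of the Parikh vector of a factor of $\dd_\uu(u)$ corresponds to the least frequent letter ${\tt s}$ and the second component to the most frequent letter ${\tt r}$. Hence the Parikh vector of $d_0 d_1 \cdots d_{n-1}$ is exactly $\left(\begin{smallmatrix} \ell \\ k \end{smallmatrix}\right)$, which is what the statement requires.

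The only delicate point is the very first step, i.e.\ arguing that the occurrences of $u$ inside $fu$ partition $f$ into return words without leaving an unaccounted gap at the right end; this is clean because we assumed $u$ is both a prefix and a suffix of $fu$, so the last occurrence of $u$ sits precisely at position $|f|$, and the definition of return word does the rest. Beyond this, the argument is essentially a bookkeeping exercise.
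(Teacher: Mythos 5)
Your proof is correct and follows exactly the reasoning the paper uses: the Observation is justified there by the single preceding sentence that every piece of $\uu$ between two occurrences of $u$ is a concatenation of the two return words $r$ and $s$, which is precisely the decomposition you spell out (together with the bookkeeping of the Parikh-vector convention for $\dd_\uu(u)$). No gaps; your version is just a more detailed write-up of the same argument.
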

	
	The Parikh vectors of factors occurring in a given Sturmian sequence $\uu$ are fully characterised by the slope $\theta$ of $\uu$.
	
	\begin{lemma}
		\label{lem_kl}
		Let ${\tt b}$ be the most frequent letter of a Sturmian sequence $\uu$ and $\theta$ the slope of $\uu$.
		Then $\uu$ contains a factor $w$ such that $|w|_{\tt b} = k$ and $|w|_{\tt a} = \ell$ if and only if
		\begin{equation}
			\label{pocet}
			(k-1) \theta - 1 < \ell < (k+1) \theta + 1 \ \ \text{and} \ k, \ell \in \N.
		\end{equation}
	\end{lemma}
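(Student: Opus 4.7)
The plan is to translate the condition \eqref{pocet} into the equivalent inequality $|\ell - (k+\ell)\rho_{\tt a}| < 1$, where $\rho_{\tt a} = \frac{\theta}{1+\theta}$ is the frequency of the letter ${\tt a}$ in $\uu$, and then to invoke two well-known facts about Sturmian sequences: (i) for every length $n$, the letter ${\tt a}$ occurs in the factors of length $n$ of $\uu$ with exactly two possible numbers of occurrences, namely $\lfloor n\rho_{\tt a}\rfloor$ and $\lceil n\rho_{\tt a}\rceil$; (ii) both of these values are actually attained by some factor of length $n$ (since $\theta$ is irrational, $n\rho_{\tt a}$ is never an integer, so the two values are distinct).

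First, I would set $n=k+\ell$ and compute
\[
\ell - n\rho_{\tt a} \;=\; \ell\Bigl(1 - \tfrac{\theta}{1+\theta}\Bigr) - k\tfrac{\theta}{1+\theta} \;=\; \tfrac{\ell - k\theta}{1+\theta}.
\]
Hence $|\ell - n\rho_{\tt a}|<1$ is equivalent to $|\ell - k\theta|<1+\theta$, which in turn is equivalent to $(k-1)\theta - 1 < \ell < (k+1)\theta+1$. This reduces the statement of the lemma to the assertion that $\uu$ contains a factor with Parikh vector $\bigl({\ell \atop k}\bigr)$ (with $\ell$ and $k$ being the number of ${\tt a}$'s and ${\tt b}$'s, respectively) if and only if $\ell\in\{\lfloor n\rho_{\tt a}\rfloor,\lceil n\rho_{\tt a}\rceil\}$.

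For the necessary direction, the balance property of $\uu$ (Sturmian sequences are $1$-balanced) together with the existence of the frequency $\rho_{\tt a}$ forces any factor $w$ of length $n$ to satisfy $|w|_{\tt a}\in\{\lfloor n\rho_{\tt a}\rfloor,\lceil n\rho_{\tt a}\rceil\}$; this is a standard consequence of considering $w$ and a very long factor whose number of ${\tt a}$'s approaches $n\rho_{\tt a}$ arbitrarily (up to the balance defect $1$). For the sufficient direction, I would argue that the factor complexity $\C_\uu(n)=n+1$ leaves exactly $n+1$ factors of length $n$, while a simple counting by Parikh vectors shows that at least $n$ of them have Parikh vector $\bigl({\lfloor n\rho_{\tt a}\rfloor \atop n-\lfloor n\rho_{\tt a}\rfloor}\bigr)$ or $\bigl({\lceil n\rho_{\tt a}\rceil \atop n-\lceil n\rho_{\tt a}\rceil}\bigr)$, and (since $n\rho_{\tt a}\notin\mathbb Z$) both vectors must actually occur.

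The main subtlety is justifying that both values $\lfloor n\rho_{\tt a}\rfloor$ and $\lceil n\rho_{\tt a}\rceil$ are attained, rather than just one of them, which is where aperiodicity (irrationality of $\theta$) enters decisively; this can also be deduced from the fact that $\uu$ has a right special factor of every length, so the ${\tt a}$-content of factors of length $n$ cannot be constant. The rest of the argument is a purely arithmetic translation between the two forms of the inequality.
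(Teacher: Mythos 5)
Your proof is correct and follows essentially the same route as the paper's: both reduce the inequality~\eqref{pocet} to the statement that the number of occurrences of a fixed letter in a factor of length $n=k+\ell$ is exactly $\lfloor n\rho\rfloor$ or $\lceil n\rho\rceil$ (you work with $\rho_{\tt a}$ and $\ell$, the paper with $\rho_{\tt b}=\tfrac{1}{1+\theta}$ and $k$), and both derive this from the balance property combined with the irrationality of the frequency. Your complexity-counting remark for the attainment of both values is muddled and dispensable, but your right-special-factor argument for that same point is sound and in fact makes explicit a step the paper only asserts (the existence of factors of length $n$ with letter counts on both sides of $n\rho$).
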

	\begin{proof}
		As $\theta = \frac{\rho_{\tt a}}{\rho_{\tt b}}$ and $1 = \rho_{\tt a} + \rho_{\tt b}$, the density of the letter ${\tt b}$ in $\uu$ is $\rho_{\tt b} = \frac{1}{1 + \theta} \notin \Q$.
		For every length $n \in \N$, with $n > 0$, there exist factors $u^{(1)}$ and $u^{(2)}$ of length $n$ such that $|u^{(1)}|_{\tt b} > n \rho_{\tt b} > |u^{(2)}|_{\tt b}$.
		Since $\uu$ is balanced, necessarily
		$|u^{(1)}|_{\tt b} - |u^{(2)}|_{\tt b} \leq 1$.
		Therefore,
		\begin{equation}
			\label{pocet2}
			|u^{(1)}|_{\tt b} = \lceil n \rho_{\tt b} \rceil
			\quad \text{ and } \quad
			|u^{(2)}|_{\tt b} = \lfloor n \rho_{\tt b} \rfloor\,.
		\end{equation}
		Equation~\eqref{pocet} can be rewritten as
		\begin{equation}
			\label{pocet3}
			\frac{1}{1 + \theta}(k + \ell) - 1
			< k <
			\frac{1}{1 + \theta}(k + \ell) +1\,.
		\end{equation}
		Let us write $n = k + \ell$.
		Then Equation~\eqref{pocet3} is equivalent to
		$$
		\rho_{\tt b} n - 1
		< k <
		\rho_{\tt b} n + 1.
		$$
		Since $\rho_{\tt b}$ is irrational, we can write
		$$
		\lfloor n \rho_{\tt b} \rfloor
		=
		\lceil n \rho_{\tt b} \rceil - 1
		\leq k \leq
		\lfloor n \rho_{\tt b} \rfloor + 1
		=
		\lceil n \rho_{\tt b} \rceil\,.
		$$
		Equation~\eqref{pocet2} says that either $u^{(1)}$ or $u^{(2)}$ is a factor of length $n = k + \ell$ containing $k$ times the letter ${\tt b}$ and, consequently, $\ell$ times the letter ${\tt a}$.
	\end{proof}

	\section{Shortest return words to bispecial factors in  balanced sequences}
	\label{sec:shortest}
	
	As seen in Proposition~\ref{ParikhRSB1}, the length of the return words to factors of a Sturmian sequence $\uu$ is well-known.
	The aim of this section is to find a formula for the length of the shortest return words to factors of a colouring of $\uu$.
	
	As occurrences of a factor $u$ in a Sturmian sequence $\uu$ and occurrences of factors from $\pi^{-1}(u)$ in every colouring of $\uu$ coincide, we are able to give a formula based on the knowledge of the length of return words in $\uu$.
	Proposition~\ref{zeroParikh1} and Observation~\ref{observation} justify the following definition.
	
	\begin{definition}
		\label{def:setS}
		Let $\vv = \barva(\uu, \yy, \yy')$.
		Let  $u \in \LL(\uu)$ and  $r$ be the most and $s$ the least frequent return word to $u$ in $\uu$.
		We write
		$\S(u) = \S_1(u) \cap \S_2(u) \cap \S_3\,,$
		where
		$$
		\begin{array}{l}
			\S_1(u) =
			\left\{ \begin{pmatrix} \ell \\ k \end{pmatrix} : \begin{pmatrix} \ell \\ k \end{pmatrix} \text{ is the Parikh vector of a factor of} \ \dd_{\uu}(u) \right\}; \\
			\S_2(u) = \displaystyle
			\bigcup_{n \in {\rm{gap}}(\yy, |u|_{\tt a}) } \;
			\bigcup_{n' \in {\rm{gap}}(\yy', |u|_{\tt b}) }
			\left\{ \begin{pmatrix} \ell \\ k \end{pmatrix} : \ k \Parickh(r) + \ell \Parickh(s) \equiv \begin{pmatrix} 0 \\ 0 \end{pmatrix} \pmod {\begin{pmatrix} n \\ n' \end{pmatrix}} \right\}; \\
			\S_3 = \, \left\{ \begin{pmatrix} \ell \\ k \end{pmatrix} : 1 \leq k + \ell \leq {\rm Per}(\yy) {\rm Per}(\yy') \right\}.
		\end{array}
		$$
	\end{definition}
	
	\begin{remark}
		\label{longBS}
		The formula defining $\S_2(u)$ can be simplified when   $|u|_{\tt a} > \bs(\yy)$ and $|u|_{\tt b} > \bs(\yy')$.
		For such a factor $u$, one has ${\rm{gap}}(\yy, |u|_{\tt a}) = \{ \rm{Per}(\yy) \}$ and ${\rm{gap}}(\yy', |u|_{\tt b}) = \{ \rm{Per}(\yy') \}$ and thus
		$$
		\S_2(u) =
		\left\{ \begin{pmatrix} \ell \\ k \end{pmatrix} : \ k \Parickh(r) + \ell \Parickh(s) \equiv \begin{pmatrix} 0 \\ 0 \end{pmatrix} \pmod {\begin{pmatrix} {\rm Per}(\yy) \\ {\rm Per}(\yy') \end{pmatrix}} \right\}\,.
		$$
		Consequently, for a sufficiently long word $u$, the set $\S(u)$ depends only on $\rm{Per}(\yy)$ and $\rm{Per}(\yy')$ and does not depend on the structure of $\yy$ and $\yy'$ themselves.
	\end{remark}
	
	Lemma~\ref{lem_kl} helps us to recognise which vector is the Parikh vector of a factor of a given Sturmian sequence.
	This is important to decide whether $\left( \begin{smallmatrix} \ell \\ k \end{smallmatrix} \right)$ belongs to $\S_1(u)$.
	
	\begin{example}
		\label{ex:Lubka_S(b^3)}
		Let us colour the sequence $\uu$ from Example~\ref{ex:illustrative_prefix} by the constant gap sequences $\yy = ({\tt 01})^{\omega}$ and $\yy'=({\tt 234235})^{\omega}$.
		We get a balanced sequence $\vv = \barva{(\uu,\yy,\yy')}$.
		\begin{enumerate}
			
			\item Consider the bispecial factor ${\tt b^3}$ of $\uu$.
			Let us examine the set $\S({\tt b^3})$.
			Using Example~\ref{ex:ParikhRetwords}, we know the slope $\theta'\doteq 0.442$ of $\dd_\uu({\tt b^3})$ and the Parikh vectors $\Parickh(r)$ and $\Parickh(s)$.
			Moreover, $\gap{\yy}{|{\tt b^3}|_{\tt a}} = \gap{\yy}{0} = \{ 1 \}$ and $\gap{\yy'}{|{\tt b^3}|_{\tt b}} = \gap{\yy'}{3} = \{ 6 \}$.
			Thus $\left( \begin{smallmatrix} \ell \\ k \end{smallmatrix} \right) \in \S(\tt b^3)$ satisfies
			$$
			(k-1) \theta' - 1 <
			\ell <
			(k+1) \theta' + 1
			\ \ \text{and} \
			k, \ell \in \N;
			$$
			$$
			k \Parickh(r) + \ell \Parickh(s) =
			k \begin{pmatrix} 1 \\ 3 \end{pmatrix} + \ell \begin{pmatrix} 0 \\ 1 \end{pmatrix}
			\equiv
			\begin{pmatrix} 0 \\ 0 \end{pmatrix}
			\pmod
			{\begin{pmatrix} 1 \\ 6 \end{pmatrix}};
			$$
			$$
			1 \leq k + \ell \leq 12.
			$$
			Examining the three conditions above, we get
			$$
			\S({\tt b^3}) =
			\left\{
			\begin{pmatrix} 0\\ 2 \end{pmatrix},
			\begin{pmatrix} 3 \\ 5 \end{pmatrix},
			\begin{pmatrix} 3 \\ 7 \end{pmatrix},
			\begin{pmatrix} 3 \\ 9 \end{pmatrix}
			\right\}.
			$$
			
			\item Let us consider the factor ${\tt ab}$ of $\uu$, which is not bispecial.
			Let us examine the set $\S({\tt ab})$.
			The shortest bispecial factor containing ${\tt ab}$ is ${\tt b^3 a b^3}$.
			By Remark~\ref{rem:retwords_extension_to_BS} the derived sequences $\dd_\uu({\tt ab}) = \dd_\uu({\tt b^3ab^3})$, and the Parikh vectors of the corresponding return words coincide.
			In Example~\ref{ex:ParikhRetwords} we determined the slope $\theta'\doteq 0.791$ of $\dd_\uu({\tt b^3ab^3})$ and the Parikh vectors
			$\Parickh(r)$ and $\Parickh(s)$.
			Moreover, $\gap{\yy}{|{\tt ab}|_{\tt a}} = \gap{\yy}{1} = \{ 2 \}$ and $\gap{\yy'}{|{\tt ab}|_{\tt b}} = \gap{\yy'}{1} = \{ 3,6 \}$.
			Thus $\left( \begin{smallmatrix} \ell \\ k \end{smallmatrix} \right) \in \S({\tt ab})$ satisfies
			$$
			(k-1) \theta' - 1 < \ell < (k+1) \theta' + 1
			\ \ \text{and} \
			k, \ell \in \N;
			$$
			$$
			k \Parickh(r) + \ell \Parickh(s) =
			k \begin{pmatrix} 1 \\ 3 \end{pmatrix} + \ell \begin{pmatrix} 1 \\ 4 \end{pmatrix}
			\equiv
			\begin{pmatrix} 0 \\ 0 \end{pmatrix}
			\pmod
			{\begin{pmatrix} 2 \\ 3\ \text{or} \ 6 \end{pmatrix}};
			$$
			$$
			1 \leq k + \ell \leq 12.
			$$
			The second condition reduces to $k+\ell \equiv 0 \pmod 2$ and $\ell \equiv 0 \pmod 3$. 
			Examining the three conditions above, we get
			$$
			\S({\tt ab}) =
			\left\{
			\begin{pmatrix} 0 \\ 2 \end{pmatrix},
			\begin{pmatrix} 3 \\ 3 \end{pmatrix},
			\begin{pmatrix} 3 \\ 5 \end{pmatrix},
			\begin{pmatrix} 6 \\ 6 \end{pmatrix}
			\right\}.
			$$
		\end{enumerate}
	\end{example}
	
	Using the formula provided in Theorem~\ref{Prop_FormulaForCR}, we can treat all bispecial factors of the same length simultaneously.
	
	\begin{theorem}
		\label{prop:ShortestReturn}
		Let $\vv = \barva( \uu, \yy, \yy')$ and $u \in \LL(\uu)$.
		The shortest words in the set
		$\{ v : v \in  \R_{\vv}(w) \text{ and } \pi(w) = u \}$
		have length
		$$
		|v|
		=
		\min \left\{ k |r| + \ell |s| : \begin{pmatrix} \ell \\ k \end{pmatrix} \in \S(u) \right\}.
		$$
	\end{theorem}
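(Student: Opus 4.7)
The plan is to prove equality by matching two bounds on the length of a shortest return word to any $w\in\pi^{-1}(u)$, using Observation~\ref{observation} and the two directions of Proposition~\ref{zeroParikh1} as the dictionary between $\R_\vv(w)$ and Parikh-vector conditions on factors of $\uu$ having $u$ as both prefix and suffix.

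The first step shows that every return word in $\{v\in\R_\vv(w):\pi(w)=u\}$ has length $k|r|+\ell|s|$ for some $\binom{\ell}{k}\in\S_1(u)\cap\S_2(u)$ with $k+\ell\ge 1$. I fix $w\in\pi^{-1}(u)$ and $v\in\R_\vv(w)$, set $f=\pi(v)$, and observe that $u$ is both a prefix and a suffix of $fu=\pi(vw)\in\LL(\uu)$. Observation~\ref{observation} gives $\Parickh(f)=k\Parickh(r)+\ell\Parickh(s)$ with $\binom{\ell}{k}$ the Parikh vector of a factor of $\dd_\uu(u)$, so $\binom{\ell}{k}\in\S_1(u)$. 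The implication $(1\Rightarrow 2)$ of Proposition~\ref{zeroParikh1} yields $n\in\gap{\yy}{|u|_{\tt a}}$ and $n'\in\gap{\yy'}{|u|_{\tt b}}$ with $k\Parickh(r)+\ell\Parickh(s)\equiv\binom{0}{0}\pmod{\binom{n}{n'}}$, so $\binom{\ell}{k}\in\S_2(u)$. Non-emptiness of $v$ yields $k+\ell\ge 1$, and $|v|=|f|=k|r|+\ell|s|$ as claimed.

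The second step shows that any $\binom{\ell}{k}\in\S(u)$ yields a return word to some $w\in\pi^{-1}(u)$ of length at most $k|r|+\ell|s|$. Membership in $\S_1(u)$ supplies a factor $x$ of $\dd_\uu(u)$ with this Parikh vector, which decodes in $\uu$ to a word $f$ concatenating $k$ copies of $r$ and $\ell$ copies of $s$ with $fu\in\LL(\uu)$, $u$ a prefix of $fu$, and $|f|=k|r|+\ell|s|$. Membership in $\S_2(u)$ then invokes the implication $(2\Rightarrow 1)$ of Proposition~\ref{zeroParikh1}, lifting $fu$ to some $vw\in\LL(\vv)$ with $\pi(w)=u$, $w$ a prefix of $vw$, and $|v|=|f|$; truncating at the first inner occurrence of $w$, if any, produces a genuine return word of length at most $k|r|+\ell|s|$. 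Combining the two steps identifies the shortest return-word length with $\min\{k|r|+\ell|s|:\binom{\ell}{k}\in\S_1(u)\cap\S_2(u),\ k+\ell\ge 1\}$.

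The main subtlety that remains is to identify this minimum with the minimum over $\S(u)$, i.e., to show the constraint $k+\ell\le{\rm Per}(\yy){\rm Per}(\yy')$ defining $\S_3$ is non-binding. I plan to close this by pigeonhole: if a minimiser $\binom{\ell}{k}$ had $k+\ell>{\rm Per}(\yy){\rm Per}(\yy')$ witnessed by a factor $x$ of $\dd_\uu(u)$, then among the $k+\ell+1$ partial weighted Parikh-sums of prefixes of $x$, two would coincide modulo $\binom{n}{n'}$ at indices strictly inside the range, since $nn'\le{\rm Per}(\yy){\rm Per}(\yy')$; the sub-factor of $x$ between those indices yields $\binom{\ell'}{k'}\in\S_1(u)\cap\S_2(u)$ with $k'+\ell'\ge 1$ and $k'|r|+\ell'|s|<k|r|+\ell|s|$, contradicting minimality. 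Finiteness of $\S(u)$, automatic from $\S_3$, then ensures the minimum is attained.
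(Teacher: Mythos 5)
Your proposal is correct and takes essentially the same route as the paper's proof: both directions of Proposition~\ref{zeroParikh1} together with Observation~\ref{observation} identify return-word lengths with $\bigl\{k|r|+\ell|s| : \left(\begin{smallmatrix}\ell\\ k\end{smallmatrix}\right)\in\S_1(u)\cap\S_2(u)\bigr\}$, and the identical pigeonhole argument on partial weighted Parikh sums shows the bound $k+\ell\le{\rm Per}(\yy)\,{\rm Per}(\yy')$ from $\S_3$ is non-binding. Your explicit truncation at the first inner occurrence of $w$ is a slightly more careful handling of a step the paper dismisses as evident, but the argument is the same.
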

	\begin{proof}
		First, let us show that the length of every return word in $\vv$ to a factor from $\pi^{-1}(u)$ is contained in the set $\left\{ k |r| + \ell |s| : \left( \begin{smallmatrix} \ell \\ k \end{smallmatrix} \right) \in \S_1 (u) \cap \S_2(u) \right\}$.
		By Proposition~\ref{zeroParikh1} and Observation~\ref{observation}, a vector $\left( \begin{smallmatrix} \ell \\ k \end{smallmatrix} \right)$ belongs to $\S_1(u) \cap \S_2(u)$ if and only if $k \Parickh(r) + \ell \Parickh(s)$ is the Parikh vector of $\pi(v)$, where $v$ is a factor between two (possibly not consecutive) occurrences of a factor $w \in \pi^{-1}(u)$ in $\vv$.
		Obviously, the length of $v$ is $k |r| + \ell |s|$.
		It is evident that if we consider above $|v| = \min\{ k |r| + \ell |s| \}$, where $\left( \begin{smallmatrix} \ell \\ k \end{smallmatrix} \right) \in \S_1(u) \cap \S_2(u)$, then $v$ is a return word to a factor $w \in \pi^{-1}(u)$.
		
		To finish the proof, we have to show that the minimum value of $|v|$ is attained for $k$ and $\ell$ satisfying $1 \leq k + \ell \leq {\rm Per}(\yy) {\rm Per}(\yy')$.
		Let $\left( \begin{smallmatrix} \ell \\ k \end{smallmatrix} \right) \in \S_1(u) \cap \S_2(u)$ and $k + \ell > {\rm Per}(\yy) {\rm Per}(\yy')$.
		Thus $\Parickh(d) = \left( \begin{smallmatrix} \ell \\ k \end{smallmatrix} \right)$ for some $d = d_1 d_2 d_3 \cdots d_{k+\ell} \in \LL(\dd_\uu(u))$.
		For every $i = 1, 2, \ldots, k + \ell$, we write $\left( \begin{smallmatrix} \ell_i \\ k_i \end{smallmatrix} \right) = \Parickh(d_1 d_2 \cdots d_i)$.
		We assign to each $i$ the vector $X_i = k_i \Parickh(r) + \ell_i \Parickh(s)$.
		Since the number of equivalence classes $\bmod \left( \begin{smallmatrix} n \\ n' \end{smallmatrix} \right)$ is $n n' \leq {\rm Per}(\yy) {\rm Per}(\yy')$, there exist $i, j$ with $1 \leq i <  j \leq k + \ell$ such that $X_i \equiv X_j \pmod {\left( \begin{smallmatrix} n \\ n' \end{smallmatrix} \right)}$.
		Let $\left( \begin{smallmatrix} \ell' \\ k' \end{smallmatrix} \right)$ be the Parikh vector of $d_{i+1} d_{i+2} \cdots d_j$.
		Obviously, $\left( \begin{smallmatrix} \ell' \\ k' \end{smallmatrix} \right) \in \S_1(u)$, $1 \leq j-i = k' + \ell' < k + \ell$ and $k' \leq k$ and $\ell' \leq \ell$.
		Hence $k'|r| + \ell'|s| <  k|r| + \ell|s|$.
		Since $k' \Parickh(r) + \ell' \Parickh(s) = X_j - X_i \equiv \left( \begin{smallmatrix} 0 \\ 0 \end{smallmatrix} \right) \pmod {\left( \begin{smallmatrix} n \\ n' \end{smallmatrix} \right)}$, the vector $\left( \begin{smallmatrix} \ell' \\ k' \end{smallmatrix} \right) \in \S_2(u)$.
		Therefore, the minimum length cannot be achieved for $k + \ell > {\rm Per}(\yy) {\rm Per}(\yy')$.
	\end{proof}
	
	\begin{example}
		\label{ex:Lubka_shortest_retword}
		Let us consider the sequence $\vv = \barva(\uu, \yy, \yy')$ as given in Example~\ref{ex:Lubka_S(b^3)}.
		Let us write down a prefix of $\vv$,
		$$
		\vv = {\tt 234 0 235 1 2342 0 352 1 3423 0 523 1 4235 0 234 1 235 0 2342 1 352}\cdots .
		$$
		Using Theorem~\ref{prop:ShortestReturn}, we find the length of the shortest word in the set $\{ v : v \in  \R_{\vv}(w) \text{ and } \pi(w) = {\tt b^3} \}$.
		The set $\S(\tt b^3)$ was examined in Example~\ref{ex:Lubka_S(b^3)}.
		We have
		$$
		|v|
		=
		\min \left\{ 4k  + \ell : \begin{pmatrix} \ell \\ k \end{pmatrix} \in \left\{
		\begin{pmatrix} 0 \\ 2 \end{pmatrix},
		\begin{pmatrix} 3 \\ 5 \end{pmatrix},
		\begin{pmatrix} 3 \\ 7 \end{pmatrix},
		\begin{pmatrix} 3 \\ 9 \end{pmatrix}
		\right\} \right\} = 8.
		$$
		Indeed, for instance the prefix ${\tt 234}$ with projection $\pi({\tt 234}) = {\tt b^3}$ has the return word ${\tt 234 0 235 1}$ of length $8$.
	\end{example}
	
	\begin{remark}
		\label{rem:tildeS}
		In Theorem~\ref{prop:ShortestReturn} instead of $\S(u)$ it is sufficient to consider the set $\hat{\S}(u)$ containing all integer vectors of the form $\left( \begin{smallmatrix} \ell \\ k \end{smallmatrix} \right)$ such that $\left( \begin{smallmatrix} \ell \\ k \end{smallmatrix} \right)\in \S(u)$ and no other vector $\left( \begin{smallmatrix} \ell' \\ k' \end{smallmatrix} \right) \in \S(u)$ satisfies $\ell' \leq \ell$ and $k' \leq k$.
		This follows from the fact that for $\ell' \leq \ell$ and $k' \leq k$, we have $k'|r|+\ell'|s| \leq k|r|+\ell|s|$.
		For instance, in Example~\ref{ex:Lubka_shortest_retword} it was sufficient to consider $\hat{\S}({\tt b^3}) = \{ \left( \begin{smallmatrix} 0 \\ 2 \end{smallmatrix} \right) \}$.
	\end{remark}
	
	If a projection of a bispecial factor $w$ in $\vv$ is bispecial in $\LL(\uu)$, we can deduce an explicit formula for $1 + \tfrac{|w|}{|v|}$, where $|v|$ is the length of a shortest return word to $w$ in $\vv$.
	These values are crucial for the computation of $E(\vv)$ and $E^*(\vv)$.
	
	The following statement is a direct consequence of Proposition~\ref{ParikhRSB1} and Theorem~\ref{prop:ShortestReturn}. Recall that $Q_N=p_N+q_N$, see~\eqref{eq:convergents}.
	
	\begin{corollary}
		\label{indexlong}
		Let $\vv = \barva(\uu, \yy, \yy')$ and $\left( \tfrac{p_N}{q_N} \right)$ be the sequence of convergents to the slope $\theta$ of $\uu$.
		Let $b \in \LL(\uu)$ be a bispecial factor and $(N,m)$ be the pair assigned in Proposition~\ref{ParikhRSB1} to $b$.
		Then a shortest word in the set $\{ v : v \in \R_{\vv}(w) \text{ and } \pi(w) = b \}$ satisfies
		$$
		1 + \frac{|w|}{|v|} =
		1 + \max \left\{ \frac{(1+m) Q_N+Q_{N-1} - 2}{(k+ \ell m) Q_{N} + \ell Q_{N-1} } : \begin{pmatrix} \ell \\ k \end{pmatrix} \in \S(b) \right\}\,.
		$$
	\end{corollary}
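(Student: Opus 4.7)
The plan is to unpack the three named results and substitute. By Proposition~\ref{ParikhRSB1}, summing the two components of the Parikh vectors gives $|r| = p_N + q_N = Q_N$, $|s| = m p_N + p_{N-1} + m q_N + q_{N-1} = m Q_N + Q_{N-1}$, and consequently $|b| = |r| + |s| - 2 = (1+m)Q_N + Q_{N-1} - 2$. Since a colouring replaces each letter of $\uu$ by a single letter, the projection $\pi$ preserves lengths, so for every $w \in \pi^{-1}(b)$ we have $|w| = |\pi(w)| = |b| = (1+m)Q_N + Q_{N-1} - 2$, and this value is the numerator appearing in the claimed formula.

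Next I would apply Theorem~\ref{prop:ShortestReturn} to $u = b$. Writing $|v|$ for the length of a shortest word in $\{v : v \in \R_\vv(w),\ \pi(w) = b\}$, the theorem gives
\[
|v| = \min\!\left\{ k|r| + \ell|s| : \begin{pmatrix} \ell \\ k \end{pmatrix} \in \S(b) \right\}
    = \min\!\left\{ (k + \ell m)Q_N + \ell Q_{N-1} : \begin{pmatrix} \ell \\ k \end{pmatrix} \in \S(b) \right\}\!,
\]
by the expressions for $|r|$ and $|s|$ just computed.

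Finally, taking the reciprocal of a minimum of positive quantities yields the maximum of the reciprocals over the same index set, so
\[
\frac{|w|}{|v|} = \max\!\left\{ \frac{(1+m)Q_N + Q_{N-1} - 2}{(k + \ell m)Q_N + \ell Q_{N-1}} : \begin{pmatrix} \ell \\ k \end{pmatrix} \in \S(b) \right\}\!,
\]
and adding $1$ to both sides gives the stated identity. The argument is essentially a substitution, so I expect no real obstacle; the only point that deserves care is the observation that colourings preserve length, which justifies replacing $|w|$ by $|b|$ uniformly for all $w \in \pi^{-1}(b)$.
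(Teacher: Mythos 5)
Your proposal is correct and follows exactly the route the paper intends: the paper offers no separate proof, stating only that the corollary is a direct consequence of Proposition~\ref{ParikhRSB1} and Theorem~\ref{prop:ShortestReturn}, and your substitution of $|r|=Q_N$, $|s|=mQ_N+Q_{N-1}$, $|b|=|r|+|s|-2$ into the minimum from Theorem~\ref{prop:ShortestReturn} is precisely that verification. The one point you flag — that the colouring preserves lengths, so $|w|=|b|$ for every $w\in\pi^{-1}(b)$ — is indeed the only detail worth making explicit, and you handle it correctly.
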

	
	Due to Theorem~\ref{Prop_FormulaForCR}, the formula given in the previous corollary plays an important  role in computation of the (asymptotic) critical exponent.
	Therefore we introduce the notation
	\begin{equation}
		\label{modified}
		I(N,m)
		:=
		1 + \max \left\{ \frac{1 + m + \tfrac{Q_{N-1}-2}{Q_N}}{k + \ell m + \ell\tfrac{Q_{N-1}}{Q_N} } : \begin{pmatrix} \ell \\ k \end{pmatrix} \in \S(b) \right\}\,.
	\end{equation}
	
	\begin{example}
		\label{ex:Lubka_I(N,m)}
		Let us consider the sequence $\vv$ as in Example~\ref{ex:Lubka_shortest_retword}.
		Let us determine $I(1,0)$ from Equation~\eqref{modified}.
		We already know that the pair $(1,0)$ is associated with the bispecial factor ${\tt b^3}$ of $\uu$.
		The first values of $Q_N = p_N + q_N$ are $Q_0 = 1$ and $Q_1 = 4$.
		$$
		I(1,0)
		=
		1 + \max \left\{ \frac{1 + \tfrac{Q_{0}-2}{Q_1}}{k + \ell\tfrac{Q_{0}}{Q_1} } : \begin{pmatrix} \ell \\ k \end{pmatrix} \in \hat{\S}({\tt b^3}) = \left\{ \begin{pmatrix} 0 \\ 2 \end{pmatrix} \right\} \right\}
		=
		1 + \frac{3}{8}.
		$$
		This is in correspondence with Example~\ref{ex:Lubka_shortest_retword}, where we have seen that a shortest return word $v$ to a factor $w$ with projection $\pi(w) = {\tt b^3}$ is of length $8$.
		Thus $1 + \frac{|w|}{|v|} = 1 + \frac{3}{8}$.
	\end{example}

	\section{Bounds on (asymptotic) critical exponent\ of balanced sequences}
	\label{sec:bounds}
	
	The critical and asymptotic critical exponent of a Sturmian sequence $\uu$ can be computed from the continued fraction expansion of the slope $\theta = [0, a_1, a_2, \ldots]$ of $\uu$ by the following formulae (see~\cite{Cade} and \cite{DaLe}):
	$$
	E(\uu)
	=
	2 + \sup \left\{ a_{N+1} + \frac{Q_{N-1} - 2}{Q_N} \; : \; N \in \N \right\}
	\qquad \mbox{and}
	$$
	$$
	E^*(\uu)
	=
	2 + \limsup_{N \to \infty} \left( a_{N+1} + \frac{Q_{N-1} }{Q_N} \right)\,.
	$$
	Hence, the (asymptotic) critical exponent of a Sturmian sequence is finite if and only if the sequence of coefficients in the continued fraction expansion of $\theta$ is bounded.
	The projection $\uu$ of a balanced sequence $\vv$ satisfies
	$E(\uu)\geq E(\vv)$ \ and \ $E^*(\uu)\geq E^*(\vv)$.
	In other words, if the critical exponent of a Sturmian sequence $\uu$ is finite, then every colouring of $\uu$ has a finite critical exponent as well.
	The following lemma shows the opposite implication.
	
	\begin{lemma}
		\label{lem:bigCoeff}
		Let $\vv = \barva(\uu, \yy, \yy')$ be a balanced sequence and $\theta = [0, a_1, a_2, a_3, \ldots ]$ be the slope of $\uu$.
		If there exists $N \in \N$, $N > 1$ such that $1 + a_{N+1} \geq P$, where $P = \lcm \left\{ {\rm Per}(\yy), {\rm Per}(\yy') \right\}$, then for some bispecial factor $w$ of $\vv$ and a return word $v$ to $w$ one has
		$$
		1 + \frac{|w|}{|v|}
		\geq
		\frac{2 + a_{N+1}}{P} + \frac{Q_{N-1} - 2}{P Q_{N}}
		\geq
		\frac{2 + a_{N+1}}{P}\,.
		$$
	\end{lemma}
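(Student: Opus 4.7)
The plan is to apply Corollary~\ref{indexlong} to a carefully chosen bispecial factor $w$ of $\vv$ obtained as a lift of a suitable bispecial factor $b \in \LL(\uu)$, together with a carefully chosen vector in the set $\S(b)$. I first treat the generic case $P \geq 2$; the case $P = 1$ forces $\yy$ and $\yy'$ to be constant sequences, so $\vv$ is (up to renaming) a Sturmian sequence and the bound is the classical Sturmian formula recalled at the beginning of Section~\ref{sec:bounds}. Setting $m := a_{N+1} - P + 1$, the hypothesis $1 + a_{N+1} \geq P$ gives $m \geq 0$ and $P \geq 2$ gives $m < a_{N+1}$, so by Proposition~\ref{ParikhRSB1} and Remark~\ref{rem:N,m} the pair $(N, m)$ corresponds to a well-defined bispecial factor $b \in \LL(\uu)$. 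By Lemma~\ref{lem:special}(1), any $w \in \pi^{-1}(b)$ is a bispecial factor of $\vv$.

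The heart of the argument is to verify that the vector $\bigl(\begin{smallmatrix} 0 \\ P \end{smallmatrix}\bigr)$ belongs to $\S(b)$. For $\S_1(b)$, Proposition~\ref{ParikhRSB1}(2) implies that the slope of the derived Sturmian sequence $\dd_\uu(b)$ equals $\theta' = [0, P - 1, a_{N+2}, a_{N+3}, \ldots]$, whence $\theta' < 1/(P-1)$; applying Lemma~\ref{lem_kl} with $k = P$ and $\ell = 0$, the condition $(P-1)\theta' - 1 < 0$ certifies that ${\tt r}^P$ is a factor of $\dd_\uu(b)$, so $\bigl(\begin{smallmatrix} 0 \\ P \end{smallmatrix}\bigr) \in \S_1(b)$. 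For $\S_2(b)$, every $n \in \gap{\yy}{|b|_{\tt a}}$ divides ${\rm Per}(\yy)$ which divides $P$, and similarly for $n'$, so $P\,\Parickh(r) \equiv \bigl(\begin{smallmatrix} 0 \\ 0 \end{smallmatrix}\bigr) \pmod{\bigl(\begin{smallmatrix} n \\ n' \end{smallmatrix}\bigr)}$. For $\S_3$, we have $1 \leq P \leq {\rm Per}(\yy){\rm Per}(\yy')$ since $P$ is an lcm of these two periods.

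Once $\bigl(\begin{smallmatrix} 0 \\ P \end{smallmatrix}\bigr) \in \S(b)$ is established, Corollary~\ref{indexlong} (choosing this particular vector in the maximum) combined with Theorem~\ref{prop:ShortestReturn} yields a return word $v$ to some bispecial $w \in \pi^{-1}(b)$ in $\vv$ satisfying $|v| \leq P|r| = P Q_N$. Substituting $|w| = |b| = (1+m) Q_N + Q_{N-1} - 2 = (a_{N+1} - P + 2) Q_N + Q_{N-1} - 2$ gives
$$
1 + \frac{|w|}{|v|} \;\geq\; \frac{P Q_N + (a_{N+1} - P + 2) Q_N + Q_{N-1} - 2}{P Q_N} \;=\; \frac{2 + a_{N+1}}{P} + \frac{Q_{N-1} - 2}{P Q_N},
$$
proving the first inequality. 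The second inequality follows from $Q_{N-1} \geq 2$, which holds for $N > 1$ since the sequence $(Q_n)$ is non-decreasing and $Q_1 = a_1 + 1 \geq 2$. The main obstacle is the verification of $\bigl(\begin{smallmatrix} 0 \\ P \end{smallmatrix}\bigr) \in \S_1(b)$: this is precisely where the specific choice $m = a_{N+1} - P + 1$, which tunes the leading partial quotient of $\theta'$ to exactly $P - 1$, becomes indispensable — any larger value of $m$ would make $\theta'$ too large to accommodate the power ${\tt r}^P$ in the derived sequence.
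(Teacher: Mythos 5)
Your proof is correct and follows essentially the same route as the paper's: the same choice $m = a_{N+1}-P+1$, the same candidate vector $\left(\begin{smallmatrix} 0 \\ P\end{smallmatrix}\right)$ verified to lie in $\S(b)$ via Lemma~\ref{lem_kl} and the derived slope $\theta'=[0,P-1,a_{N+2},\ldots]$, and the same application of Corollary~\ref{indexlong}. You are in fact slightly more careful than the paper in two places: you isolate the degenerate case $P=1$ (where $m=a_{N+1}$ would fall outside the admissible range of Proposition~\ref{ParikhRSB1}) and you justify the second inequality via $Q_{N-1}\geq 2$ for $N>1$, both of which the paper leaves implicit.
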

	\begin{proof}
		Put $m = a_{N+1} - P+1 $ and consider the bispecial factor $b$ of $\uu$ to which the pair $(N,m)$ is assigned by Proposition~\ref{ParikhRSB1}.
		By the same proposition, the slope $\theta'$ of the derived sequence $\dd_\uu(b)$ equals ${\theta}' = [0, P-1, a_{N+2}, a_{N+3}, \ldots]$.
		In particular $\frac{1}{\theta'} > P-1$.
		
		As $P$ is divisible by every $n \in {\rm{gap}}(\yy, |b|_{\tt a})$ and every $n' \in {\rm{gap}}(\yy', |b|_{\tt b})$, by Observation~\ref{obs:per}, the vector $\left( \begin{smallmatrix} \ell \\ k \end{smallmatrix} \right) = \left( \begin{smallmatrix} 0 \\ P \end{smallmatrix} \right)$ obviously belongs to $\S_2(b) \cap \S_3$.
		To show that it belongs to $\S_1(b)$ as well, we have to check, by Lemma~\ref{lem_kl}, that $(P-1) \theta' - 1 < 0 < (P+1) \theta' + 1$ or, equivalently, that $P - 1 < \tfrac{1}{\theta'}$.
		
		Let $w$ be a bispecial factor of $\vv$ such that $\pi(w) = b$ and $v$ be a shortest return word to $w$ in $\vv$.
		By Corollary~\ref{indexlong}, we have
		$$
		1 + \frac{|w|}{|v|}
		\geq
		1 + \frac{(2+a_{N+1}-P) Q_N+Q_{N-1} - 2}{P Q_{N}}\,.
		$$
	\end{proof}
	
	We can thus deduce the following result.
	
	\begin{corollary}
		\label{finiteAsympt}
		Let $\vv = \barva(\uu, \yy, \yy')$.
		Then $E^*(\vv)$ is finite if and only if $E^*(\uu)$ is finite.
	\end{corollary}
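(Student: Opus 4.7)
The plan is to prove the two implications separately. The forward direction is immediate from a general fact about letter-to-letter projections, while the converse rests on Lemma~\ref{lem:bigCoeff}.

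For $E^*(\uu) < +\infty \Rightarrow E^*(\vv) < +\infty$, I would simply cite the inequality $E^*(\uu) \geq E^*(\vv)$ recalled in the text preceding Lemma~\ref{lem:bigCoeff}. The justification is that $\pi : \LL(\vv) \to \LL(\uu)$ is a letter-to-letter (hence length-preserving) morphism, so any identity $x^e = y$ in $\LL(\vv)$ projects to $\pi(x)^e = \pi(y) \in \LL(\uu)$ with $|\pi(x)| = |x|$; every repetition of exponent $e$ on a factor of length $n$ in $\vv$ therefore produces one of the same exponent and length in $\uu$, and the $\limsup$ defining $E^*(\uu)$ dominates that of $E^*(\vv)$.

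For the converse I would argue by contrapositive. Suppose $E^*(\uu) = +\infty$. By the closed formula for the asymptotic critical exponent of a Sturmian sequence recalled at the beginning of Section~\ref{sec:bounds}, and because $Q_{N-1}/Q_N \in (0,1)$, the sequence $(a_N)$ of partial quotients of the slope $\theta$ is unbounded. Set $P = \lcm\{{\rm Per}(\yy), {\rm Per}(\yy')\}$ and choose an increasing sequence of indices $(N_j)$ with $N_j > 1$, $1 + a_{N_j+1} \geq P$ and $a_{N_j+1} \to +\infty$. For each $j$, Lemma~\ref{lem:bigCoeff} supplies a bispecial factor $w^{(j)}$ of $\vv$---one whose projection is the bispecial factor of $\uu$ attached by Proposition~\ref{ParikhRSB1} to the pair $(N_j,\, a_{N_j+1} - P + 1)$, and hence genuinely bispecial in $\vv$ by Item~1 of Lemma~\ref{lem:special}---together with a return word $v^{(j)}$ satisfying
\[
1 + \frac{|w^{(j)}|}{|v^{(j)}|} \;\geq\; \frac{2 + a_{N_j+1}}{P}.
\]
Replacing $v^{(j)}$ by a shortest return word to $w^{(j)}$ can only increase this ratio, so the same bound holds for the shortest-return-word lengths used in Theorem~\ref{Prop_FormulaForCR}.

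Since the projected pairs $(N_j, m_j)$ are all distinct with $N_j \to \infty$, the $w^{(j)}$ are pairwise distinct with $|w^{(j)}| \to +\infty$, and by Lemma~\ref{Lem_URbispecial_retwords} their shortest return words satisfy $|v^{(j)}| \to +\infty$ as well. Enumerating all bispecial factors of $\vv$ as $(w_n)$ in order of length and invoking Theorem~\ref{Prop_FormulaForCR}, the subsequence indexed by the $w^{(j)}$ yields
\[
E^*(\vv) \;\geq\; 1 + \limsup_{j \to \infty} \frac{|w^{(j)}|}{|v^{(j)}|} \;\geq\; \limsup_{j \to \infty} \frac{2 + a_{N_j+1}}{P} \;=\; +\infty,
\]
contradicting $E^*(\vv) < +\infty$. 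No single step is a serious obstacle: Lemma~\ref{lem:bigCoeff} already carries the technical weight of the converse, and the only care needed is the bookkeeping that aligns its output with the characterisation of $E^*(\vv)$ given by Theorem~\ref{Prop_FormulaForCR} and the fact that unboundedness of $(a_N)$ is equivalent to $E^*(\uu) = +\infty$.
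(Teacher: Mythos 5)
Your proposal is correct and follows exactly the route the paper intends: the paper states this corollary without proof as an immediate consequence of the inequality $E^*(\uu)\geq E^*(\vv)$ (for the forward implication) and of Lemma~\ref{lem:bigCoeff} combined with Theorem~\ref{Prop_FormulaForCR} (for the converse, via unboundedness of the partial quotients when $E^*(\uu)=+\infty$). Your bookkeeping --- distinct pairs $(N_j,m_j)$, passing to shortest return words, and extracting a subsequence of the ordered list of bispecial factors --- fills in precisely the details the paper leaves implicit.
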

	
	The (asymptotic) critical exponent of a colouring $\vv = \barva(\uu, \yy, \yy')$ is bounded from above by the (asymptotic) critical exponent of the Sturmian sequence $\uu$.
	Here we give a lower bound on $E^*(\vv)$.

	\begin{theorem}
		\label{thm:lowerBound}
		Let $\uu$ be a Sturmian sequence, $\yy, \yy'$ two constant gap sequences and $\vv = \barva(\uu, \yy, \yy')$.
		One has
		$$
		\CR(\vv)
		\geq
		\CR^*(\vv)
		\geq
		1 + \frac{1}{{{\rm Per}(\yy)}{{\rm Per}(\yy') }}\,.
		$$
		Moreover, $\CR^*(\vv)$ depends only on ${{\rm Per}(\yy)}$ and ${{\rm Per}(\yy')}$ (not on the structure of $\yy$ and $\yy'$).
	\end{theorem}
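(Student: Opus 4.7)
The inequality $\CR(\vv)\ge\CR^*(\vv)$ is immediate from the definitions, so I focus on the two remaining claims. Write $P={\rm Per}(\yy){\rm Per}(\yy')$. My plan for the bound $\CR^*(\vv)\ge 1+\tfrac{1}{P}$ is to apply Theorem~\ref{Prop_FormulaForCR}, which expresses $\CR^*(\vv)$ as $1+\limsup_n |w_n|/|v_n|$ over bispecial factors $w_n$ of $\vv$ ordered by length. I will show that for every bispecial factor $b$ of $\uu$ satisfying $|b|_{\tt a}>\bs(\yy)$ and $|b|_{\tt b}>\bs(\yy')$ (infinitely many exist), any lift $w\in\pi^{-1}(b)$ --- bispecial in $\vv$ by Lemma~\ref{lem:special}(1) --- admits a return word of length at most $P\max(|r|,|s|)$, where $r,s$ are the two Sturmian return words to $b$ in $\uu$.

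The key step is a pigeonhole argument. By Lemma~\ref{lem:cardpi-1} one has $\#\pi^{-1}(b)=P$. Starting from an occurrence of a chosen $w\in\pi^{-1}(b)$ in $\vv$ and tracking the $P+1$ successive $\vv$-occurrences of factors in $\pi^{-1}(b)$ (including $w$), pigeonhole yields two equal coloured lifts among them; hence some lift of $b$ has a return word whose length is a sum of at most $P$ Sturmian return words, i.e.\ at most $P\max(|r|,|s|)$. By Corollary~\ref{coro:retvw_long}, all lifts of $b$ share the same set of return-word lengths, so this bound transfers to the originally chosen $w$. Since $|r|+|s|=|b|+2$ by Proposition~\ref{ParikhRSB1}, one has $\max(|r|,|s|)\le|b|+1$, giving $|w|/|v|\ge |b|/(P(|b|+1))\to 1/P$ as $|b|\to\infty$, as required.

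For the ``moreover'' assertion I again apply Theorem~\ref{Prop_FormulaForCR} and restrict attention to large $n$. By Lemma~\ref{lem:special}(2), $\pi(w_n)$ is then bispecial in $\uu$, so $|w_n|=|\pi(w_n)|$ is determined by $\uu$ alone; by Corollary~\ref{coro:retvw_long} the length $|v_n|$ depends only on $\pi(w_n)$; and by Theorem~\ref{prop:ShortestReturn} together with Remark~\ref{longBS}, for long projections this length depends only on $\uu$ and on ${\rm Per}(\yy),{\rm Per}(\yy')$. Hence so does $\CR^*(\vv)$.

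The main obstacle I anticipate is the pigeonhole step itself: it delivers a coinciding pair $j_a=j_b$ possibly with $a>0$, so a priori the short return is exhibited only for $w^{(j_a)}$ and not for the particular lift $w$ one started with. Corollary~\ref{coro:retvw_long} is essential for bridging this gap --- without it one would control only a single lift per projection of $b$, which is not enough since the limsup in Theorem~\ref{Prop_FormulaForCR} runs over all bispecial factors of $\vv$.
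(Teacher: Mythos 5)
Your proposal is correct and follows essentially the same route as the paper: both reduce to Theorem~\ref{Prop_FormulaForCR}, bound the shortest return word to a long bispecial factor of $\vv$ by a concatenation of at most ${\rm Per}(\yy){\rm Per}(\yy')$ Sturmian return words, and combine this with $|b|=|r|+|s|-2$ from Proposition~\ref{ParikhRSB1}, while the ``moreover'' part rests in both cases on Remark~\ref{longBS}. The only difference is cosmetic: you re-derive the cap $k+\ell\le {\rm Per}(\yy){\rm Per}(\yy')$ by a direct pigeonhole on the $P$ lifts of $b$ (plus Corollary~\ref{coro:retvw_long} to transfer it to the chosen lift), whereas the paper simply cites Theorem~\ref{prop:ShortestReturn}, whose proof contains the same pigeonhole.
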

	\begin{proof}
		To find a lower bound on the asymptotic critical exponent we use Theorem~\ref{Prop_FormulaForCR}, i.e., we have to consider the values of the form $1 + \frac{|w|}{|v|}$, where $v$ is a shortest return word to a bispecial factor $w \in \LL(\vv)$.
		Obviously, only long bispecial factors $w$ play a role for $\CR^*(\vv)$.
		By Lemma~\ref{lem:special}, every sufficiently long bispecial factor $w$ belongs to $\pi^{-1}(b)$, where $b$ is a bispecial factor in $\LL(\uu)$.
		By Theorem~\ref{prop:ShortestReturn}, a shortest return word $v$ to $w$ has length $|v| = k_0 |r| + \ell_0 |s|$,
		where $\left( \begin{smallmatrix} \ell_0 \\ k_0 \end{smallmatrix} \right) \in \S(b)$.
		In particular, $k_0 + \ell_0 \leq {{\rm Per}(\yy) {\rm Per}(\yy')}$.
		Due to the relation between the Parikh vectors of $b$, $r$ and $s$ given in Proposition~\ref{ParikhRSB1}, we know that $|b| = |r| + |s| - 2$.
		Therefore
		$$
		\frac{|w|}{|v|} =
		\frac{|b|}{k_0 |r| + \ell_0 |s|} =
		\frac{|r |+ |s| }{k_0 |r| + \ell_0 |s|} - \frac{2}{|v|} \geq
		\frac{1}{k_0 + \ell_0} - \frac{2}{|v|} \geq
		\frac{1}{{{\rm Per}(\yy) {\rm Per}(\yy')}} - \frac{2}{|v|}\,.
		$$
		By Remark~\ref{longBS}, the values $k_0$ and $\ell_0$ depend only on $\rm{Per}(\yy)$ and $\rm{Per}(\yy')$.
		In other words, for long bispecial factors $w$ the ratio $|w|/|v|$ is independent from the structure of $\yy$ and $\yy'$ themselves.
		
		For every $n \in \N$ let $v_n$ denote a shortest return word to the $n^{\text{th}}$ bispecial factor $w_n$ in $\vv$.
		The above inequality says that
		$$
		\frac{|w_n|}{|v_n|}
		\geq
		\frac{1}{{{\rm Per}(\yy) {\rm Per}(\yy')}} - \frac{2}{|v_n|}.
		$$
		Since $\vv$ is uniformly recurrent, Lemma~\ref{Lem_URbispecial_retwords} implies that $\lim\limits_{n\to \infty} |v_n| = \infty$.
		Hence
		$$
		\limsup\limits_{n \to \infty} \frac{|w_n|}{|v_n|}
		\geq
		\frac{1}{{{\rm Per}(\yy) {\rm Per}(\yy')}}
		$$
		and Theorem~\ref{Prop_FormulaForCR} gives the required inequality.
	\end{proof}
	
	Let us note that the obtained lower bound is not optimal, at least over small alphabets (see Table~\ref{NaradTable}).
	
	\begin{remark}
		Corollary~\ref{finiteAsympt} implies that the asymptotic critical exponent of a balanced sequence $\vv = \barva(\uu, \yy, \yy')$ is finite if and only if the slope $\theta = [0, a_1, a_2, \ldots]$ of the Sturmian sequence $\uu$ has coefficients $a_n$ bounded by a constant, say $K$.
		Let us outline a method for computing $ \CR^*(\vv)$ in this case.
		
		As the set $\S(b)$ described in Definition~\ref{def:setS} is a subset of the finite set $\S_3$ and the values $m$ are bounded by $m < a_{N+1} \leq K$, the set of all pairs $(N,m)$ corresponding to the projection $b=\pi(w)$ of a sufficiently long bispecial factor $w \in \LL(\vv)$ can be split into a finite number of subsets such that the parameter $m$ and $\S(b)$ are the same in the whole subset.
		For each infinite subset, say $C$, we then compute
		$$
		E^*(C) := \limsup\limits_{N \to \infty, \; (N,m) \in C} I(N,m)
		$$
		and choose the maximal value among $E^*(C)$.
		Due to Formula~\eqref{modified} and monotony of $\frac{1 + m + x}{k + \ell m + \ell x}$ (increasing if $k > \ell$, decreasing if $k < \ell$ and constant if $k = \ell$), we just need to determine $\limsup\limits_{N \to \infty}\tfrac{Q_{N-1}}{Q_N}$ or $\liminf\limits_{N \to \infty}\tfrac{Q_{N-1}}{Q_N}$ in each subset.
		In the next section we will see that if the continued fraction expansion of $\theta$ is eventually periodic, then the partition  of pairs $(N,m)$ into described subsets can be done and the relevant limits can be computed explicitly.
	\end{remark}

	\section{Computation of the asymptotic critical exponent} \label{sec:ComputationUltCriticalExponent}
	From now on we consider a standard Sturmian sequence $\uu$ with slope $\theta$ having eventually periodic continued fraction expansion.
	The goal of this section is to compute the asymptotic critical exponent of a sequence $\vv$ obtained by colouring of $\uu$.
	By Theorem~\ref{Prop_FormulaForCR}, to determine $E^*(\vv)$ we only need to consider sufficiently long bispecial factors $w$.
	
	For this purpose, we write the continued fraction expansion of $\theta$ as
	\begin{equation}
		\label{fixedTheta}
		\theta  = [0, a_1, a_2, \ldots, a_h, \overline{z_0, z_1, \ldots, z_{M-1}}]\,,
	\end{equation}
	where the preperiod $h$ is chosen so that each bispecial factor $b$ associated with $(N,m)$, $N\geq h$, satisfies $|b|_{\tt a}>\beta(\yy)$ and $|b|_{\tt b}>\beta(\yy')$. 
	Let us stress that in the sequel, $M$ will always denote a period of the continued fraction. 
	
	We then decompose the set $\W$ of all nonempty bispecial factors of $\vv=\barva(\uu, \yy, \yy')$ into two subsets:
	\medskip
	
	$\W^{\text{long}} ~:=
	\left\{ w \in \W: \pi(w) \text{ bispecial in } \uu \ \text{ assigned to } (N,m) \text{ with }\ N \geq h \right\}\,;$
	
	$\W^{\text{short}} := \W \setminus \W^{\text{long}}$.
	
	\begin{remark}
		\label{rem:preperiod}
		Let us explain how to find a minimal preperiod length $h$ of the continued fraction of $\theta$ from Equation~\eqref{fixedTheta}.
		By Proposition~\ref{ParikhRSB1}, one has to find the smallest $h$ satisfying $p_h + p_{h-1} - 1 > \beta(\yy)$ and $q_h + q_{h-1} - 1 > \beta(\yy')$ and at the same time $h$ has to be longer than or equal to the shortest possible preperiod of the continued fraction of $\theta$.
	\end{remark}
	
	\begin{example}
		\label{Ex:valueh}
		Let us consider the sequence $\vv = \barva(\uu, \yy, \yy')$ from Example~\ref{ex:Lubka_S(b^3)}, i.e., we have
		$\theta = [0, 3, 2, \overline{3, 1}]$,
		$\yy = ({\tt 01})^{\omega}$ and
		$\yy' = ({\tt 234235})^{\omega}$,
		$$
		\uu = {\tt bbb a bbb a bbbb a bbb a bbbb a bbb a bbbb a bbb a bbb a bbbb a bbb} \cdots,
		$$
		$$
		\vv = {\tt 2 \textcolor{green}{3} 4 \textcolor{red}{0} 235 1 2342  \textcolor{red}{0} \textcolor{green}{3} 52 1 3423 0 523 1 4 \textcolor{blue}{235 0 234} 1 \textcolor{blue}{235 0 234} 2 1 352} \cdots.
		$$
		Clearly, $\beta(\yy) = 0$, the only bispecial factor in $\yy$ being $\varepsilon$, and $\beta(\yy') = 2$, the longest bispecial factor in $\yy'$ being ${\tt 23}$.
		By Remark~\ref{rem:preperiod}, we have $h \geq 2$.
		In fact $h = 2$, since  Table~\ref{tab:Lubka_parameters} gives  $p_2 + p_{1} - 1 = 2 > 0 = \beta(\yy)$ and $q_2 + q_{1} - 1 = 9 > 2 = \beta(\yy')$.
		
		Consider the following bispecial factors in $\vv$:
		${\tt \textcolor{red}{0}}$,
		${\tt \textcolor{green}{3}}$,
		${\tt \textcolor{blue}{2350234}}$
		and
		${\tt 23402351234}$.
		
		\begin{itemize}
			\item ${\tt \textcolor{red}{0}} \in \W^{\text{short}}$, because $0$ is bispecial in $\vv$, but $\pi({\tt 0}) = {\tt a}$ is not bispecial in $\uu$;
			
			\item ${\textcolor{green}{\tt 3} } \in \W^{\text{short}}$, because $\pi({\tt 3}) = {\tt b}$ is the first bispecial factor in $\uu$, hence assigned to $(N,m) = (0,1)$, i.e., $N < h = 2$;
			
			\item ${\textcolor{blue}{\tt 235 0 234}} \in \W^{\text{short}}$: even though $\pi({\tt 235 0 234}) = {\tt b^3ab^3}$ satisfies $|{\tt b^3ab^3}|_{\tt a}=1>0=\beta(\yy)$ and $|{\tt b^3ab^3}|_{\tt b}=6>2=\beta(\yy')$, the factor ${\tt b^3ab^3}$ is the fourth bispecial factor in $\uu$, hence assigned to $(N,m) = (1,1)$, i.e., $N < h = 2$;
			
			\item ${\tt 234 0 235 1 234} \in \W^{\text{long}}$, because $\pi({\tt 234 0 235 1 234}) = {\tt b^3ab^3ab^3}$ is the fifth bispecial factor in $\uu$, hence assigned to $(N,m) = (2,0)$, i.e., $N \geq h = 2$.
		\end{itemize}
	\end{example}
	
	To solve the task of this section, namely to compute $E^*(\vv)$, we will apply Corollary~\ref{indexlong} to sufficiently long bispecial factors and manipulate the numbers $I(N,m)$ defined in Equation~\eqref{modified}.
	
	Our approach consists in partitioning the set $\W^{\text{long}}$ into a finite number of subsets such that the set $\S(\pi(w))$ will be the same for all factors $w$ in the same subset.
	The partition will be based on partition of the pairs $(N,m)$ assigned to their projection $\pi(w)$.
	A suitable partition of $\W^{\text{long}}$ (described later in Definition~\ref{def:classes}) uses the following equivalence relation on the first component of the pairs $(N,m)$.
	
	\begin{definition}
		\label{def:equiv}
		Let $N_1, N_2 \in \N$ and $N_1, N_2 \geq h$.
		We say that $N_1$ is equivalent to $N_2$, and write $N_1 \sim N_2$, if the following three conditions are satisfied:
		\begin{enumerate}
			\item $N_1 \equiv N_2 \pmod M$,
			\item $\left( \begin{smallmatrix}{p_{N_1 - 1}} \\ q_{N_1 - 1} \end{smallmatrix} \right) \equiv \left( \begin{smallmatrix}{p_{N_2 - 1}} \\ q_{N_2 - 1} \end{smallmatrix} \right) \pmod {\left( \begin{smallmatrix} \rm{Per}(\yy) \\ \rm{Per}(\yy') \end{smallmatrix} \right)}$,
			\item $\left( \begin{smallmatrix}{p_{N_1}} \\ q_{N_1} \end{smallmatrix} \right) \equiv \left( \begin{smallmatrix}{p_{N_2}} \\ q_{N_2} \end{smallmatrix} \right) \pmod {\left( \begin{smallmatrix} \rm{Per}(\yy) \\ \rm{Per}(\yy') \end{smallmatrix} \right)}$.
		\end{enumerate}
	\end{definition}
	
	The properties of the equivalence $\sim$ are summarised in the following lemma.
	They follow from the definition of convergents to $\theta$ and from the periodicity of the continued fraction expansion of $\theta$.
	
	\begin{lemma}
		\label{propertyEquiv}
		Let $\sim$ be the equivalence on the set $\{ N \in \N: N \geq h\}$ introduced in Definition~\ref{def:equiv} and let $H$ denote the number of equivalence classes.
		\begin{enumerate}
			\item If $N_1 \sim N_2$, then $a_{N_1 + 1} = a_{N_2 + 1}$.
			\item $N_1 \sim N_2$ if and only if $N_1 + 1 \sim N_2 + 1$.
			\item $N_1 \sim N_2$ if and only if $N_2 \equiv N_1 \pmod H$.
			\item $H = \min \left\{ i \in \N, i > 0 : h + i \sim h \right\} \leq M {\rm Per}(\yy)^2 {\rm Per}(\yy')^2$.
			\item $H$ is divisible by $M$.
		\end{enumerate}
	\end{lemma}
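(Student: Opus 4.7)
The plan is to establish the five items in order, each building on the previous. For item~(1), I would argue directly from the definition: condition~(1) of Definition~\ref{def:equiv} forces $N_1 \equiv N_2 \pmod{M}$, and since the continued fraction of $\theta$ has preperiod $h$ and period $M$, the partial quotient $a_{n+1}$ for $n \geq h$ depends only on the residue of $n$ modulo $M$. Hence $a_{N_1+1} = a_{N_2+1}$.

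For item~(2), the key observation is that the convergent recurrence $X_{N+1} = a_{N+1} X_N + X_{N-1}$ for $X \in \{p,q\}$ is compatible with reduction modulo any integer. From $N_1 \sim N_2$ and item~(1) we have $a_{N_1+1} = a_{N_2+1}$, and applying the recurrence modulo ${\rm Per}(\yy)$ to $p$ and modulo ${\rm Per}(\yy')$ to $q$ yields
$$\left(\begin{smallmatrix} p_{N_1+1} \\ q_{N_1+1} \end{smallmatrix}\right) \equiv \left(\begin{smallmatrix} p_{N_2+1} \\ q_{N_2+1} \end{smallmatrix}\right) \pmod{\left(\begin{smallmatrix} {\rm Per}(\yy) \\ {\rm Per}(\yy') \end{smallmatrix}\right)}.$$
Together with the congruences at index $N_i$ inherited from $N_1 \sim N_2$, this delivers the three conditions required for $N_1+1 \sim N_2+1$. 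The reverse implication uses the inverse recurrence $X_{N-1} = X_{N+1} - a_{N+1} X_N$, which is equally valid modulo any integer.

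For item~(3), iterating item~(2) establishes shift-invariance of $\sim$ on $\{N \in \N : N \geq h\}$: one has $N_1 \sim N_2 \iff N_1 + k \sim N_2 + k$ whenever all four indices are $\geq h$. Defining $H$ as the least positive $i$ with $h + i \sim h$, an induction on $j$ shows $h + jH \sim h$ for every $j \in \N$; conversely, if $h + k \sim h$ and we write $k = qH + r$ with $0 \leq r < H$, then shift-invariance and transitivity force $h + r \sim h$, so $r = 0$ by minimality of $H$. Thus $h + k \sim h \iff H \mid k$. For arbitrary $N_1, N_2 \geq h$, WLOG $N_1 \leq N_2$, shift-invariance translates $N_1 \sim N_2$ into $h \sim h + (N_2 - N_1)$, i.e., $H \mid (N_2 - N_1)$.

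Item~(4) will follow from a pigeonhole argument on the map
$$\psi(N) = \left( N \bmod M,\; \left(\begin{smallmatrix} p_{N-1} \\ q_{N-1} \end{smallmatrix}\right) \bmod \left(\begin{smallmatrix} {\rm Per}(\yy) \\ {\rm Per}(\yy') \end{smallmatrix}\right),\; \left(\begin{smallmatrix} p_{N} \\ q_{N} \end{smallmatrix}\right) \bmod \left(\begin{smallmatrix} {\rm Per}(\yy) \\ {\rm Per}(\yy') \end{smallmatrix}\right) \right),$$
which by Definition~\ref{def:equiv} determines the $\sim$-class of $N$ and takes at most $M \cdot {\rm Per}(\yy)^2 \cdot {\rm Per}(\yy')^2$ values. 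Among the $M \cdot {\rm Per}(\yy)^2 \cdot {\rm Per}(\yy')^2 + 1$ consecutive indices starting at $h$, two must share a value of $\psi$, and their positive difference $i \leq M \cdot {\rm Per}(\yy)^2 \cdot {\rm Per}(\yy')^2$ satisfies $h + i \sim h$ after shifting. Item~(5) is then immediate: applying condition~(1) of Definition~\ref{def:equiv} to $h + H \sim h$ gives $H \equiv 0 \pmod{M}$. The main work is the bookkeeping in item~(2), especially verifying that the inverse recurrence preserves all three conditions; once shift-invariance is secured, items~(3)--(5) are routine combinatorial consequences.
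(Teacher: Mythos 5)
Your proof is correct, and it follows exactly the route the paper intends: the paper gives no detailed argument for this lemma, merely asserting that the properties "follow from the definition of convergents to $\theta$ and from the periodicity of the continued fraction expansion," which is precisely what you work out (recurrence mod $\mathrm{Per}(\yy)$, $\mathrm{Per}(\yy')$ and its inverse for shift-invariance, then pigeonhole for the bound on $H$). The only presentational quibble is that the existence of some $i>0$ with $h+i\sim h$, which you need already in item (3) to define the minimal period, is only justified by the pigeonhole argument you give in item (4), so that step should logically come first.
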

	
	\begin{corollary}
		\label{classes}
		Let $b^{(1)}$ and $b^{(2)}$ be bispecial factors of $\uu$ and $(N_1, m_1)$ and $(N_2, m_2)$, with $N_1 \geq h$ and $N_2 \geq h$, be the pairs assigned to $b^{(1)}$ and $b^{(2)}$ respectively.
		
		If $N_1 \sim N_2$ and $m_1 = m_2$, then $\S(b^{(1)}) = \S(b^{(2)})$.
	\end{corollary}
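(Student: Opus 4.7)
The proof will be a straightforward verification, checking each of the three components in the definition $\S(u)=\S_1(u)\cap\S_2(u)\cap\S_3$ separately.

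First, $\S_3$ depends only on ${\rm Per}(\yy){\rm Per}(\yy')$, so it is identical for $b^{(1)}$ and $b^{(2)}$ with nothing to prove. Second, I would handle $\S_1$ using Proposition~\ref{ParikhRSB1}(2), which says the derived sequences $\dd_\uu(b^{(i)})$ are standard Sturmian with slopes
\[
\theta'_i = [0,\, a_{N_i+1}-m_i,\, a_{N_i+2},\, a_{N_i+3},\ldots].
\]
Since $m_1=m_2$ and $N_1\sim N_2$, applying Lemma~\ref{propertyEquiv}(1) inductively via Lemma~\ref{propertyEquiv}(2) gives $a_{N_1+k+1}=a_{N_2+k+1}$ for every $k\ge 0$, so $\theta'_1=\theta'_2$. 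By Lemma~\ref{lem_kl} the set of Parikh vectors of factors of a Sturmian sequence is determined entirely by its slope, hence $\S_1(b^{(1)})=\S_1(b^{(2)})$.

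Third, for $\S_2$, since $N_1,N_2\ge h$, both $b^{(1)}$ and $b^{(2)}$ are ``long'' bispecial factors in the sense of Remark~\ref{longBS}, so the defining condition simplifies to a single congruence modulo $\bigl(\begin{smallmatrix}{\rm Per}(\yy)\\{\rm Per}(\yy')\end{smallmatrix}\bigr)$ involving only $\Parickh(r^{(i)})$ and $\Parickh(s^{(i)})$. Using Proposition~\ref{ParikhRSB1}(1):
\[
\Parickh(r^{(i)})=\begin{pmatrix}p_{N_i}\\ q_{N_i}\end{pmatrix},\qquad
\Parickh(s^{(i)})=m\begin{pmatrix}p_{N_i}\\ q_{N_i}\end{pmatrix}+\begin{pmatrix}p_{N_i-1}\\ q_{N_i-1}\end{pmatrix},
\]
conditions (2) and (3) of Definition~\ref{def:equiv} give $\Parickh(r^{(1)})\equiv\Parickh(r^{(2)})$ and $\Parickh(s^{(1)})\equiv\Parickh(s^{(2)})$ modulo $\bigl(\begin{smallmatrix}{\rm Per}(\yy)\\{\rm Per}(\yy')\end{smallmatrix}\bigr)$. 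The linear congruence defining $\S_2$ is therefore identical for the two factors, yielding $\S_2(b^{(1)})=\S_2(b^{(2)})$.

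The only mildly subtle point is the $\S_1$ step, where one must be sure that the coincidence of slopes of the two derived sequences really does force the coincidence of their factor Parikh vectors; but this is immediate from Lemma~\ref{lem_kl}, which characterises these vectors purely in terms of $\theta$. All other steps are direct substitutions from earlier results, so no new obstacle arises.
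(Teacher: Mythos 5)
Your proposal is correct and follows essentially the same route as the paper: split $\S = \S_1 \cap \S_2 \cap \S_3$, note $\S_3$ is universal, get $\S_2(b^{(1)}) = \S_2(b^{(2)})$ from Remark~\ref{longBS} together with conditions (2)--(3) of Definition~\ref{def:equiv} applied to the Parikh vectors of $r$ and $s$ from Proposition~\ref{ParikhRSB1}, and get $\S_1(b^{(1)}) = \S_1(b^{(2)})$ by showing the two derived sequences have the same slope. The only cosmetic difference is that the paper deduces the equality of slopes directly from $N_1 \equiv N_2 \pmod M$ and the periodic form of the continued fraction, whereas you iterate Lemma~\ref{propertyEquiv}(1)--(2); both are valid.
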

	\begin{proof}
		Let us recall that the set $\S(b)$ defined in   Definition~\ref{def:setS} for a bispecial factor $b$ of a Sturmian sequence equals  $\S_1(b) \cap \S_2(b) \cap \S_3$.
		
		Let $s^{(1)}$ and $r^{(1)}$ denote the return words to $b^{(1)}$ in $\uu$, and $s^{(2)}$ and $r^{(2)}$ the return words to $b^{(2)}$ in $\uu$.
		Let $\theta_1'$ and $\theta_2'$ denote the slopes of the derived sequences $\dd_\uu(b^{(1)})$ and $\dd_\uu(b^{(2)})$ respectively.
		
		By Proposition~\ref{ParikhRSB1} and the definition of $\sim$, we have
		$$
		\Parickh (s^{(1)})
		\equiv
		\Parickh(s^{(2)}) \pmod {\begin{pmatrix} \rm{Per}(\yy) \\ \rm{Per}(\yy')\end{pmatrix}}
		\ \ \text{and}\ \
		\Parickh(r^{(1)})
		\equiv
		\Parickh(r^{(2)}) \pmod {\begin{pmatrix} \rm{Per}(\yy) \\ \rm{Per}(\yy')\end{pmatrix}}.
		$$
		
		As $N_1 \geq h$, we have $|b^{(1)}|_{\tt a}>\beta(y)$ and $|b^{(1)}|_{\tt b}>\beta(\yy')$ and similarly for $b^{(2)}$.
		Thus Remark~\ref{longBS} implies  $\S_2(b^{(1)}) = \S_2(b^{(2)})$.
		
		Since $N_1 \equiv N_2 \pmod M$ and $m_1 = m_2$, Proposition~\ref{ParikhRSB1} says that for some $i \in \{ 0, 1, \ldots, M-1 \}$ we have
		$$
		\theta_1' =
		\theta_2' =
		[0, z_i - m_1, z_{i+1}, \ldots, z_{M-1}, \overline{z_0,z_1, \ldots,z_{M-1}}]\,.
		$$
		Thus $\dd_\uu(b^{(1)}) = \dd_\uu(b^{(2)})$ and $\S_1(b^{(1)}) = \S_1(b^{(2)})$ too.
	\end{proof}
	
	Now we define a partition of the set $\W^{\text{long}}$ of long bispecial factors of a balance sequence.
	
	\begin{definition}
		\label{def:classes}
		Let $\vv = \barva(\uu, \yy, \yy')$.
		Let $\sim$ be the equivalence given in Definition~\ref{def:equiv} and $H$ be the number of its equivalence classes.
		For $0 \le i < H$ and $0 \le m < z_{i \bmod M}$ we define the set $C(i,m)$ as follows:
		a bispecial factor $w \in \W^{\text{long}}$ belongs to $C(i,m)$ if the pair assigned in Proposition~\ref{ParikhRSB1} to the bispecial factor $b = \pi(w)$ of $\uu$ is $(h + i + NH, m)$ for some $N \in \N$.
	\end{definition}
	
	Clearly, the sets $C(i,m)$ form a partition of $\W^{\text{long}}$.
	
	\begin{example}
		\label{ex:Lubka_H}
		Let us consider the sequence $\vv = \barva(\uu, \yy, \yy')$ from Example~\ref{ex:Lubka_S(b^3)}, where ${\rm Per}(\yy) = 2$, ${\rm Per}(\yy') = 6$ and $\theta = [0, 3, 2, \overline{3, 1}]$.
		Therefore $M = 2$ and $h = 2$ as determined in Example~\ref{Ex:valueh}.
		Let us find the number $H$ of equivalence classes from Definition~\ref{def:equiv}.
		Observing Table~\ref{tab:Lubka_parameters}, we have
		$$
		H = \min \left\{ i \in \N, i > 0 : 2 + i \sim 2 \right\} = 6.
		$$
		Indeed, we have
		$\textcolor{green}{2} \equiv \textcolor{green}{8} \pmod 2$,
		$\left( \begin{smallmatrix}{\textcolor{red}{p_{1}}} \\ {\textcolor{red}{q_{1}}}\end{smallmatrix} \right) \equiv \left( \begin{smallmatrix}{\textcolor{red}{p_{7}}} \\ {\textcolor{red}{q_{7}}} \end{smallmatrix} \right) \pmod {\left( \begin{smallmatrix} 2 \\ 6 \end{smallmatrix} \right)}$
		and
		$\left( \begin{smallmatrix}\textcolor{blue}{{p_{2}}} \\ \textcolor{blue}{q_{2}} \end{smallmatrix} \right) \equiv \left( \begin{smallmatrix} \textcolor{blue}{p_{8}} \\ \textcolor{blue}{q_{8}} \end{smallmatrix} \right) \pmod {\left( \begin{smallmatrix} 2 \\ 6\end{smallmatrix} \right)}$.
		According to Definition~\ref{def:classes}, we have 12 subsets $C(i,m)$, where $(i,m)$ belongs to the set
		$$
		\left\{ (0,0), (0,1), (0,2), (1,0), (2,0), (2,1), (2,2), (3,0), (4,0), (4,1), (4,2), (5,0) \right\}.
		$$
		
		\begin{table}[htb!]
			\centering
			\setlength{\tabcolsep}{5pt}
			\renewcommand{\arraystretch}{1.3}
			\resizebox{\textwidth}{!} {
				\begin{tabular}{|l|l|l|l|l|l|l|l|l|l|l|}
					\hline
					$N$ & $0$ & $1$ & $\textcolor{green}{2}$ & $3$ & $4$ & $5$ & $6$ & $7$ & $\textcolor{green}{8}$ & $9$ \\
					\hline
					$a_N$ & $0$ & $3$ & $2$ & $3$ & $1$ & $3$ & $1$ & $3$ & $1$ & $3$ \\
					\hline
					$p_{N}$ & $0$ & $1$ & $2$ & $7$ & $9$ & $34$ & $43$ & $163$ & $206$ & $781$\\
					\hline
					$p_{N} \bmod {\rm Per(\yy)}$ & $0$ & $\textcolor{red}{1}$ & $\textcolor{blue}{0}$ & $1$ & $1$ & $0$ & $1$ & $\textcolor{red}{1}$ & $\textcolor{blue}{0}$ & $1$\\
					\hline
					$q_{N}$ & $1$ & $3$ & $7$ & $24$ & $31$ & $117$ & $148$ & $561$ & $709$ & $2688$ \\
					\hline
					$q_{N} \bmod {\rm Per(\yy')}$ & $1$ & $\textcolor{red}{3}$ & $\textcolor{blue}{1}$ & $0$ & $1$ & $3$ & $4$ & $\textcolor{red}{3}$ & $\textcolor{blue}{1}$ & $0$ \\
					\hline
					$Q_{N}$ & $1$ &  $4$ & $9$ & $31$ & $40$ & $151$ & $191$ & $724$ & $915$ & $3469$ \\
					\hline
				\end{tabular}
			}
			\vspace{0.3cm}
			\caption{The first values of $a_N, p_N, q_N, Q_N$ for $\uu$ with $\theta = [0, 3, 2, \overline{3, 1}]$.}
			\label{tab:Lubka_parameters}
		\end{table}
	\end{example}
	
	To show the advantages of the chosen partition into the sets $C(i,m)$ we use a property of primitive matrices.
	Let us recall that a matrix $A$ with non-negative entries is said to be primitive if there exists an exponent $k \in \N$ such that all entries of $A^k$ are positive.
	
	\begin{lemma}
		\label{lem:append1}
		Let $A \in \N^{2\times 2}$ be a primitive matrix with $\det A = \pm 1$, and $\left( S_N \right)$, $\left( T_N \right)$ be two sequences of integers given by the recurrence relation
		$(S_{N+1}, T_{N+1}) = (S_{N}, T_{N}) A$
		for each $N \in \N$, with
		$S_0, T_0 \in \N$ such that
		$S_0 + T_0 > 0$.
		Let
		$\left( \begin{smallmatrix} x \\ y \end{smallmatrix} \right)$
		be an eigenvector of $A$ associated with the non-dominant eigenvalue $\lambda$.
		Then
		\begin{enumerate}
			\item $\lim\limits_{N \to \infty} \tfrac{S_N}{T_N} = -\tfrac{y}{x}$, and
			\item $S_N + \tfrac{y}{x} T_N = \lambda^N (S_0 + \tfrac{y}{x} T_0)$ for each $N \in \N$.
		\end{enumerate}
	\end{lemma}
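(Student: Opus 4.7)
The plan is to first establish item 2 directly from the eigenvalue relation, and then deduce item 1 by dividing through by $T_N$ once we have sufficient control on the growth of $T_N$ and on the non-vanishing of $x$.

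For item 2, I would exploit the fact that $(x,y)^T$ is a right eigenvector of $A$, i.e., $A\left(\begin{smallmatrix} x \\ y \end{smallmatrix}\right) = \lambda \left(\begin{smallmatrix} x \\ y \end{smallmatrix}\right)$. Multiplying the row recurrence $(S_{N+1}, T_{N+1}) = (S_N, T_N)A$ on the right by $\left(\begin{smallmatrix} x \\ y \end{smallmatrix}\right)$ yields
\[
S_{N+1}x + T_{N+1}y = (S_N, T_N)\, A \begin{pmatrix} x \\ y \end{pmatrix} = \lambda\bigl(S_N x + T_N y\bigr).
\]
A straightforward induction then gives $S_N x + T_N y = \lambda^N(S_0 x + T_0 y)$, which after dividing by $x$ (once its non-vanishing is verified) is precisely the formula of item 2.

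For item 1, I first collect the ingredients about $A$ coming from its primitivity and from $|\det A| = 1$. Let $\lambda_1$ denote the other (dominant) eigenvalue. Since $A$ is a primitive matrix with entries in $\N$, the Perron--Frobenius theorem gives $\lambda_1 > |\lambda|$ and a strictly positive dominant eigenvector; combined with $\lambda_1 \lambda = \det A = \pm 1$, this yields $\lambda_1 > 1$ and $|\lambda| = 1/\lambda_1 < 1$. Next, I argue that $x \neq 0$: if we had $x = 0$, then $(0,y)^T$ with $y \neq 0$ would be a non-dominant eigenvector of $A$, which forces $a_{12} = 0$; but then $A$ is upper triangular with a zero in the off-diagonal, and no power of such a matrix can have all entries strictly positive, contradicting primitivity. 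Finally, choose $k \in \N$ such that all entries of $A^k$ are positive. Since $(S_0, T_0) \in \N^2$ with $S_0 + T_0 > 0$, the vector $(S_k, T_k) = (S_0, T_0) A^k$ has both entries strictly positive, and so do $(S_N, T_N)$ for all $N \geq k$.

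Combining these observations, for $N \geq k$ we may divide the formula of item 2 by $T_N$ to obtain
\[
\frac{S_N}{T_N} + \frac{y}{x} = \frac{\lambda^N}{T_N}\left(S_0 + \frac{y}{x}T_0\right).
\]
Since $|\lambda| < 1$ and $T_N \geq 1$ for $N \geq k$, the right-hand side tends to $0$ as $N \to \infty$, which gives item 1. I expect no genuine obstacle in this proof; the only slightly subtle point is the verification that $x \neq 0$, which I handle via the primitivity argument above, and the observation that for dividing by $T_N$ one merely needs $T_N$ to be bounded away from zero (not growth estimates), since $\lambda^N$ already vanishes geometrically.
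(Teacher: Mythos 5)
Your proposal is correct and follows essentially the same route as the paper: item 2 by right-multiplying the recurrence by the eigenvector, and item 1 by dividing that identity by $T_N$ and using $|\lambda|<1$. The only differences are that you fill in details the paper leaves implicit (why $|\lambda|<1$, why $x\neq 0$ via primitivity rather than the Perron--Frobenius opposite-signs fact, and eventual positivity of $T_N$), and you conclude from $T_N\geq 1$ where the paper invokes $T_N\to\infty$; both are valid.
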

	\begin{proof}
		Since $A$ is a primitive matrix with non-negative entries, the components $x$ and $y$ of an eigenvector corresponding to the non-dominant eigenvalue have opposite signs.
		In particular $x, y \neq 0$.
		Obviously,
		$$
		(S_{N}, T_{N}) = (S_{0}, T_{0}) A^N \text{ for each $N \in \N$.}
		$$
		Multiplying both sides of the previous equation by the eigenvector $\left( \begin{smallmatrix} x \\ y \end{smallmatrix} \right)$, we obtain
		$$
		x S_N + y T_N = \lambda^N (xS_0 + yT_0),
		$$
		i.e., Item 2 is proven.
		
		As $|\lambda| < 1$, Item 2 implies that
		$$
		\lim\limits_{N \to \infty} x T_N \left( \tfrac{S_N}{T_N} + \tfrac{y}{x} \right) = \lim\limits_{N \to \infty} (x S_N + y T_N) = 0.
		$$
		Since $\lim\limits_{N \to \infty } T_N =  +\infty$, necessarily $\lim\limits_{N \to \infty} \left( \tfrac{S_N}{T_N} + \tfrac{y}{x} \right) = 0$.
		This proves Item 1.
	\end{proof}
	
	\begin{corollary}
		\label{cor:limits}
		Let $\frac{p_N}{q_N}$ denote the $N^{\text{th}}$ convergent to $\theta$ defined by Equation~\eqref{fixedTheta} and $Q_N = p_N + q_N$.
		Fix $i \in \{0,1, \ldots, M-1\}$.
		Then
		\begin{equation}
			\label{Limits}
			L_i:=\lim_{N \to \infty} \frac{Q_{{MN} + h + i - 1}}{Q_{MN + h + i}} =  -\frac{y_i}{x_i},
		\end{equation}
		where $\left( \begin{smallmatrix} x_i \\ y_i \end{smallmatrix} \right)$ is an eigenvector of the matrix
		\begin{equation}
			\label{A}
			A^{(i)} =
			\begin{pmatrix} 0 & 1 \\ 1 & z_i \end{pmatrix}
			\begin{pmatrix} 0 & 1 \\ 1 & z_{i+1} \end{pmatrix}
			\cdots
			\begin{pmatrix} 0 & 1 \\ 1 & z_{M-1} \end{pmatrix}
			\begin{pmatrix} 0 & 1 \\ 1 & z_0 \end{pmatrix}
			\cdots
			\begin{pmatrix} 0 & 1 \\ 1 & z_{i-1} \end{pmatrix}
		\end{equation}
		corresponding to the non-dominant eigenvalue $\lambda$. Moreover,
		\begin{equation}
			\label{discrepancy}
			{Q_{{MN} + h + i - 1}} - L_i{Q_{MN + h + i}} = \lambda^N \bigl( {Q_{ h + i - 1}} - L_i{Q_{ h + i}}\bigr)\,.
		\end{equation}
	\end{corollary}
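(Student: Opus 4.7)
The plan is to reduce the corollary to a direct application of Lemma~\ref{lem:append1}. Summing the recurrence $X_N = a_N X_{N-1} + X_{N-2}$ for $X = p$ and $X = q$, the quantity $Q_N = p_N + q_N$ satisfies the same second-order linear recurrence
$$
Q_N = a_N Q_{N-1} + Q_{N-2},
$$
which in matrix form reads
$$
(Q_{N-1}, Q_N) = (Q_{N-2}, Q_{N-1}) \begin{pmatrix} 0 & 1 \\ 1 & a_N \end{pmatrix}.
$$
By the eventual periodicity of $\theta$ given in~\eqref{fixedTheta}, for $N \geq h+1$ we have $a_N = z_{(N-h-1) \bmod M}$. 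Iterating the one-step transition $M$ times, starting from the pair at position $M(N-1)+h+i$, the partial quotients encountered are in order $z_i, z_{i+1}, \ldots, z_{i-1}$ (indices mod~$M$), so their transition matrices multiply to exactly $A^{(i)}$ as defined in~\eqref{A}. Hence
$$
(Q_{MN+h+i-1}, Q_{MN+h+i}) = (Q_{M(N-1)+h+i-1}, Q_{M(N-1)+h+i}) \, A^{(i)}.
$$

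Setting $S_N := Q_{MN+h+i-1}$ and $T_N := Q_{MN+h+i}$, these sequences satisfy $(S_{N+1}, T_{N+1}) = (S_N, T_N) A^{(i)}$, and $S_0 + T_0 = Q_{h+i-1} + Q_{h+i} > 0$. To invoke Lemma~\ref{lem:append1} it remains to verify that $A^{(i)}$ is primitive and that $\det A^{(i)} = \pm 1$. The determinant of each factor $\bigl( \begin{smallmatrix} 0 & 1 \\ 1 & z_j \end{smallmatrix} \bigr)$ is $-1$, so $\det A^{(i)} = (-1)^M \in \{-1, +1\}$. For primitivity, since every $z_j$ is a positive integer, a short induction on the number of factors shows that the product $A^{(i)}$ has strictly positive entries (already the product of any two such matrices is strictly positive when $M \geq 2$, and when $M = 1$ the single matrix squared is positive); thus $A^{(i)}$ is primitive with a dominant Perron eigenvalue $>1$ and a non-dominant eigenvalue $\lambda$ of modulus $<1$.

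Applying Lemma~\ref{lem:append1} with $A = A^{(i)}$ and $(x,y) = (x_i, y_i)$ the eigenvector associated with $\lambda$, Item~1 yields
$$
L_i = \lim_{N \to \infty} \frac{S_N}{T_N} = -\frac{y_i}{x_i},
$$
which is Equation~\eqref{Limits}. Substituting $L_i = -y_i/x_i$ into Item~2 gives
$$
S_N - L_i T_N = \lambda^N (S_0 - L_i T_0),
$$
which is exactly Equation~\eqref{discrepancy} after substituting back the definitions of $S_N$ and $T_N$.

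No step is especially delicate; the only point that warrants attention is matching the order of the factors in $A^{(i)}$ with the matrix interpretation of the recurrence, and confirming primitivity of $A^{(i)}$, both of which are straightforward once the recurrence has been cast in matrix form.
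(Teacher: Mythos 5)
Your proposal is correct and follows essentially the same route as the paper: the paper's proof likewise applies Lemma~\ref{lem:append1} to $S_N = Q_{MN+h+i-1}$ and $T_N = Q_{MN+h+i}$, invoking periodicity of the continued fraction to get the recurrence $(S_{N+1},T_{N+1})=(S_N,T_N)A^{(i)}$. You simply spell out the details the paper leaves implicit (the matrix form of the recurrence, the ordering of the factors, and the verification that $\det A^{(i)}=\pm 1$ and that $A^{(i)}$ is primitive), all of which are accurate.
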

	\begin{proof}
		We apply the previous lemma to the sequences $S_N := Q_{{MN} + h + i - 1}$ and $T_N := Q_{MN + h + i}$.
		Periodicity of the continued fraction expansion of $\theta$ ensures that the sequences $(S_N)$ and $(T_N)$ satisfy the recurrence relation $(S_{N+1}, T_{N+1}) = (S_N, T_N)A^{(i)}$.
	\end{proof}
	
	\begin{remark}
		The matrices $A^{(i)}$ defined in Equation~\eqref{A} are mutually similar.
		In particular, they share the same spectrum.
	\end{remark}
	
	Let us point out two important properties of the partition into subsets $C(i,m)$:
	
	\begin{enumerate}
		\item
		By Corollary~\ref{classes}, the sets $\S(\pi(w))$ are the same for all $w \in C(i,m)$.
		Therefore, we put for $0 \le i < H$ and $0 \le m < z_{i \bmod M}$
		\begin{equation}
			\label{SetS(im)}
			\S(i,m) := \S(\pi(w))\,,
			\text{ \ where }
			w \in C(i,m)\,.
		\end{equation}
		
		\item As $M$ divides $H$, Corollary~\ref{cor:limits} ensures existence of the limit
		\begin{equation}
			\label{valueL}
			L_i :=
			\lim_{N \to \infty} \frac{Q_{{HN} + h + i - 1}}{Q_{HN + h + i}}
		\end{equation}
		for each $i =0,1, \ldots, H-1$.
		Moreover, if $i =0, 1, \ldots, M-1$, the limit $L_i$ coincides with the one given in~\eqref{Limits}.
		For $j \geq M$ we have $L_j = L_{j \bmod M}$.
	\end{enumerate}
	
	\begin{example}
		\label{ex:Lubka_L_i}
		Consider the sequence $\vv = \barva(\uu, \yy, \yy')$ from Example~\ref{ex:Lubka_S(b^3)}.
		Since $\theta = [0, 3, 2, \overline{3, 1}]$, we have $$
		A^{(0)} =
		\begin{pmatrix} 0 & 1 \\ 1 & 3 \end{pmatrix}
		\begin{pmatrix} 0 & 1 \\ 1 & 1 \end{pmatrix} =
		\begin{pmatrix} 1& 1 \\ 3 & 4 \end{pmatrix}.
		$$
		The non-dominant eigenvalue of $A^{(0)}$ is $\lambda = \frac{5 - \sqrt{21}}{2}$.
		An eigenvector of $A^{(0)}$ corresponding to $\lambda$ is, for instance, $\left(\begin{smallmatrix} x_0 \\ y_0 \end{smallmatrix}\right) = \left(\begin{smallmatrix} 2 \\ 3-\sqrt{21} \end{smallmatrix}\right)$.
		Therefore
		$$
		L_0 =
		\lim_{N \to \infty} \frac{Q_{2N+1}}{Q_{2N+2}} =
		-\frac{y_0}{x_0} =
		\frac{\sqrt{21} - 3}{2}.
		$$
		Similarly,
		$$
		A^{(1)} =
		\begin{pmatrix} 0 & 1 \\ 1 & 1 \end{pmatrix}
		\begin{pmatrix} 0 & 1 \\ 1 & 3 \end{pmatrix}=(A^{(0)})^T.
		$$
		The matrices $A^{(1)}$ and $A^{(0)}$ are similar; thus they have the same eigenvalues.
		An eigenvector of $A^{(1)}$ corresponding to $\lambda$ is, for instance, $\left(\begin{smallmatrix} x_1 \\ y_1 \end{smallmatrix}\right) = \left(\begin{smallmatrix} 6 \\ 3-\sqrt{21} \end{smallmatrix}\right)$.
		Therefore
		$$
		L_1 =
		\lim_{N \to \infty} \frac{Q_{2N+2}}{Q_{2N+3}} =
		-\frac{y_1}{x_1} =
		\frac{\sqrt{21}-3}{6}.
		$$
		In Example~\ref{ex:Lubka_H} we have seen that $H=6$, thus we need to know $L_i$ for $i \in \{0, \ldots, 5\}$.
		Since $M=2$, we get $L_i = L_0$ for $i$ even and $L_i = L_1$ for $i$ odd.
	\end{example}
	
	Combining Theorem~\ref{Prop_FormulaForCR}, Corollary~\ref{indexlong} and notation of Formula~\eqref{modified}, we can transform our task to determine $E^*(\vv)$ into looking for
	$$
	E^*(i,m) :=
	\limsup\limits_{N \to \infty} I(h + i + NH, \ m)
	$$
	as
	\begin{equation}
		\label{asymptE}
		E^*(\vv) =
		\max \left\{ E^*(i,m) \ : \ 0\leq i < H,\ \ 0 \leq m < z_{i \bmod M} \right\}\,.
	\end{equation}
	
	Formula~\eqref{modified} immediately gives
	\begin{equation}
		\label{asymptE(i,m)}
		E^*(i,m) =
		1 + \max \left\{ \frac{1 + m + L_i}{k+ \ell m + \ell L_i }
		\ : \
		\left( \begin{smallmatrix} \ell \\ k \end{smallmatrix} \right) \in \S(i,m) \right\} \,.
	\end{equation}
	
	Equations~\eqref{asymptE} and~\eqref{asymptE(i,m)} provide an algorithm for computing $E^*(\vv)$.
	The only detail we need to recall is how to find the set $\S(i,m)$ defined in Formula~\eqref{SetS(im)}.
	We find it as $\S(\pi(w))$ for the shortest bispecial factor $w$ in the set $C(i,m)$.
	Its projection $b = \pi(w)$ corresponds to the pair $(h+i, m)$.
	
	\medskip
	
	By Proposition~\ref{ParikhRSB1} and the continued fraction expansion in Formula~\eqref{fixedTheta} of $\theta$, we have the following facts:
	
	\begin{itemize}
		\item the derived sequence $\dd_{\uu}(b)$ to $b$ in $\uu$ is a Sturmian sequence with the slope
		\begin{equation}
			\label{thetaIM}
			\theta_{i,m} : =
			[0, z_i-m, z_{i+1}, \ldots, z_{M-1}, \overline{z_0, z_1 \ldots, z_{M-1}}]\,;
		\end{equation}
		
		\item the Parikh vectors of the prefix return word $r$ and of the non-prefix return word $s$ to $b$ in $\uu$ are
		$$
		\Parickh(r) =
		\begin{pmatrix} p_{h+i} \\ q_{h+i} \end{pmatrix}
		\ \ \text{and} \ \
		\Parickh(s) =
		\begin{pmatrix} m \, p_{h+i} + p_{h+i-1} \\ m \, q_{h+i} + q_{h+i-1} \end{pmatrix}.
		$$
	\end{itemize}
	
	Now we have all ingredients needed in Definition~\ref{def:setS} for describing $\S(i,m) = \S(b)$.
	Lemma~\ref{lem_kl} helps to decide which vectors occur as the Parikh vectors of factors of the derived sequence $\dd_{\uu}(b)$.
	
	\begin{corollary}
		\label{coro:S(i,m)}
		Let $\theta_{i,m}$ be as in Formula~\eqref{thetaIM}.
		Then $\left(\begin{smallmatrix} \ell \\ k \end{smallmatrix}\right) \in \S(i,m)$ if and only if
		\begin{enumerate}
			\item
			$\Bigl( \begin{smallmatrix}p_{h+i-1} & p_{h+i} \\ q_{h+i - 1} & q_{h+i} \end{smallmatrix} \Bigr)
			\Bigl( \begin{smallmatrix} 1 & 0 \\ m & 1 \end{smallmatrix} \Bigr)
			\Bigl( \begin{smallmatrix} \ell \\ k  \end{smallmatrix} \Bigr)
			\equiv
			\Bigl( \begin{smallmatrix} 0 \\0 \end{smallmatrix} \Bigr)
			\pmod
			{\left( \begin{smallmatrix} {\rm Per}(\yy) \\ {\rm Per}(\yy')\end{smallmatrix} \right)}$;
			
			\medskip
			
			\item
			$(k-1)\theta_{i,m}-1 \leq \ell \leq (k+1)\theta_{i,m} + 1$;
			
			\medskip
			
			\item
			$1 \leq \ell + k \leq {{{\rm Per}(\yy) {\rm Per}(\yy')}}$.
		\end{enumerate}
	\end{corollary}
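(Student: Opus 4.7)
The plan is to verify the three conditions in the statement by unpacking the definition
$\S(i,m)=\S(b)=\S_1(b)\cap\S_2(b)\cap\S_3$ from Definition~\ref{def:setS}, where $b=\pi(w)$ is the projection of the shortest bispecial factor $w\in C(i,m)$, so that the pair assigned to $b$ by Proposition~\ref{ParikhRSB1} is $(h+i,m)$. Condition~3 is just the definition of $\S_3$, so nothing needs to be done there. For condition~2, I would first invoke Proposition~\ref{ParikhRSB1}(2) together with the expansion~\eqref{fixedTheta} of $\theta$ to conclude that the derived sequence $\dd_\uu(b)$ is a standard Sturmian sequence whose slope is exactly $\theta_{i,m}$ as given by~\eqref{thetaIM}. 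Then Lemma~\ref{lem_kl} applied to $\dd_\uu(b)$ characterises $\S_1(b)$ as the set of integer vectors $\left(\begin{smallmatrix}\ell\\k\end{smallmatrix}\right)$ satisfying $(k-1)\theta_{i,m}-1<\ell<(k+1)\theta_{i,m}+1$. Because $\theta_{i,m}$ is irrational, the bounds never equal an integer, and the strict inequalities can be replaced by the non-strict inequalities stated in condition~2.

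For condition~1, the work is to reduce $\S_2(b)$ to a single matrix congruence. Since $b$ corresponds to the pair $(h+i,m)$ with $h+i\ge h$, the choice of $h$ in~\eqref{fixedTheta} (cf.\ Remark~\ref{rem:preperiod}) ensures $|b|_{\tt a}>\bs(\yy)$ and $|b|_{\tt b}>\bs(\yy')$. Thus Remark~\ref{longBS} applies and
$$
\S_2(b)=\left\{\begin{pmatrix}\ell\\k\end{pmatrix}:k\Parickh(r)+\ell\Parickh(s)\equiv\begin{pmatrix}0\\0\end{pmatrix}\!\!\!\!\!\pmod{\begin{pmatrix}{\rm Per}(\yy)\\{\rm Per}(\yy')\end{pmatrix}}\right\}.
$$
Substituting the explicit Parikh vectors $\Parickh(r)=\bigl(\begin{smallmatrix}p_{h+i}\\q_{h+i}\end{smallmatrix}\bigr)$ and $\Parickh(s)=\bigl(\begin{smallmatrix}mp_{h+i}+p_{h+i-1}\\mq_{h+i}+q_{h+i-1}\end{smallmatrix}\bigr)$ from Proposition~\ref{ParikhRSB1}(1), the congruence becomes
$$
\begin{pmatrix}p_{h+i-1}&p_{h+i}\\q_{h+i-1}&q_{h+i}\end{pmatrix}\begin{pmatrix}\ell\\k+\ell m\end{pmatrix}\equiv\begin{pmatrix}0\\0\end{pmatrix}\!\!\!\pmod{\begin{pmatrix}{\rm Per}(\yy)\\{\rm Per}(\yy')\end{pmatrix}},
$$
and observing that $\bigl(\begin{smallmatrix}\ell\\k+\ell m\end{smallmatrix}\bigr)=\bigl(\begin{smallmatrix}1&0\\m&1\end{smallmatrix}\bigr)\bigl(\begin{smallmatrix}\ell\\k\end{smallmatrix}\bigr)$ yields exactly condition~1.

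The proof is thus essentially a direct translation of the definitions, with no substantial obstacle. The only point that requires attention is justifying that the simplified form of $\S_2(b)$ from Remark~\ref{longBS} is indeed available, which relies precisely on the choice of the preperiod $h$ in~\eqref{fixedTheta} being large enough to guarantee $|b|_{\tt a}>\bs(\yy)$ and $|b|_{\tt b}>\bs(\yy')$ for every bispecial $b$ indexed by $(h+i,m)$ with $i\ge 0$; matching the conventions (least-frequent-first component) between Definition~\ref{def:setS}, Proposition~\ref{ParikhRSB1} and Lemma~\ref{lem_kl} also deserves a brief mention to avoid any ambiguity.
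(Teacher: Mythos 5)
Your proposal is correct and follows exactly the route the paper intends: the corollary is presented there as a direct unpacking of Definition~\ref{def:setS} using the slope $\theta_{i,m}$ of the derived sequence, the Parikh vectors of $r$ and $s$ from Proposition~\ref{ParikhRSB1}, the simplified form of $\S_2$ from Remark~\ref{longBS} (available because the choice of $h$ guarantees $|b|_{\tt a}>\bs(\yy)$ and $|b|_{\tt b}>\bs(\yy')$), and Lemma~\ref{lem_kl} for $\S_1$. Your matrix rewriting of the congruence and the remark on replacing strict by non-strict inequalities via irrationality of $\theta_{i,m}$ are both accurate.
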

	
	\begin{example}
		\label{ex:Lubka_AsymptoticCR}
		Let us consider the sequence $\vv = \barva(\uu, \yy, \yy')$ from Example~\ref{ex:Lubka_S(b^3)}.
		Let us determine $E^*(\vv)$.
		See also Examples~\ref{ex:Lubka_H} and~\ref{ex:Lubka_L_i} for important ingredients.
		By Remark~\ref{rem:tildeS}, it is sufficient to describe $\hat \S(i,m)$ for $(i,m)$ in the set
		$$
		\left\{ (0,0), (0,1), (0,2), (1,0), (2,0), (2,1), (2,2), (3,0), (4,0), (4,1), (4,2), (5,0) \right\}.
		$$
		Then $E^*(i,m) =
		1 + \max \left\{ \cfrac{1 + m + L_i}{k+ \ell m + \ell L_i }
		\ : \
		\left( \begin{smallmatrix} \ell \\ k \end{smallmatrix} \right) \in \hat\S(i,m) \right\} \,.$
		
		The reader is invited to verify the following calculations. 
		
		\begin{itemize}
			\item $\hat\S(0,0)$:
			We have $\theta_{0,0} = [0, \overline{3,1}] = \frac{\sqrt{21}-3}{6}$.
			By Corollary~\ref{coro:S(i,m)} and Remark~\ref{rem:tildeS}, if $\left(\begin{smallmatrix} \ell \\ k \end{smallmatrix}\right) \in \hat\S(0,0)$, then
			$$
			k \begin{pmatrix} 0 \\ 1 \end{pmatrix} + \ell \begin{pmatrix} 1 \\ 3 \end{pmatrix}
			\equiv
			\begin{pmatrix} 0 \\ 0 \end{pmatrix}
			\pmod
			{\begin{pmatrix} 2 \\ 6 \end{pmatrix}}
			\quad \text{and} \quad
			(k-1) \theta_{0,0} - 1 \leq \ell \leq (k+1) \theta_{0,0} + 1.
			$$
			We get $\hat\S(0,0) = \left\{ \left( \begin{smallmatrix} 2 \\ 6 \end{smallmatrix} \right) \right\}$ and $E^*(0,0) = 1 + \frac{1+L_0}{6+2L_0} = 1 + \frac{\sqrt{21}-1}{6+2\sqrt{21}} \doteq 1.236$.
			
			\item $\hat\S(0,1)$:
			We have $\theta_{0,1} = [0, 2, \overline{1,3}] = \frac{\sqrt{21}-3}{9-\sqrt{21}}$.
			If $\begin{pmatrix} \ell \\ k \end{pmatrix} \in \hat\S(0,1)$, then
			$$
			k \begin{pmatrix} 0 \\ 1 \end{pmatrix} + \ell \begin{pmatrix} 1 \\ 4 \end{pmatrix}
			\equiv
			\begin{pmatrix} 0 \\ 0 \end{pmatrix}
			\pmod
			{\begin{pmatrix} 2 \\ 6 \end{pmatrix}}
			\quad \text{and} \quad
			(k-1) \theta_{0,1}-1 \leq \ell \leq (k+1) \theta_{0,1} + 1.
			$$
			We get $\hat\S(0,1) = \left\{ \left( \begin{smallmatrix} 2 \\ 4 \end{smallmatrix} \right) \right\}$ and $E^*(0,1) = 1 + \frac{2+L_0}{6+2L_0} = 1 + \frac{\sqrt{21}+1}{6+2\sqrt{21}} \doteq 1.368$.
			
			\item $\hat\S(0,2)$:
			We have $\theta_{0,2} = [0, 1, \overline{1,3}] = \frac{\sqrt{21}-3}{12-2\sqrt{21}}$.
			If $\left( \begin{smallmatrix} \ell \\ k \end{smallmatrix} \right) \in \hat\S(0,2)$, then
			$$
			k \begin{pmatrix} 0 \\ 1 \end{pmatrix} + \ell \begin{pmatrix} 1 \\ 5 \end{pmatrix}
			\equiv
			\begin{pmatrix} 0 \\0 \end{pmatrix}
			\pmod
			{\begin{pmatrix} 2 \\ 6 \end{pmatrix}}
			\quad \text{and} \quad
			(k-1) \theta_{0,2} - 1 \leq \ell \leq (k+1) \theta_{0,2} + 1.
			$$
			We get $\hat\S(0,2) = \left\{ \left( \begin{smallmatrix} 2 \\ 2 \end{smallmatrix} \right) \right\}$ and $E^*(0,2) = 1 + \frac{3+L_0}{6+2L_0} = 1.5$.
			
			\item $\hat\S(1,0)$:
			We have $\theta_{1,0} = [0, \overline{1,3}] = \frac{\sqrt{21}-3}{2}$.
			If $\left( \begin{smallmatrix} \ell \\ k \end{smallmatrix} \right) \in \hat\S(1,0)$, then
			$$
			k \begin{pmatrix} 1 \\ 0 \end{pmatrix} + \ell \begin{pmatrix} 0 \\ 1 \end{pmatrix}
			\equiv
			\begin{pmatrix} 0 \\ 0 \end{pmatrix}
			\pmod
			{\begin{pmatrix} 2 \\ 6 \end{pmatrix}}
			\quad \text{and} \quad
			(k-1) \theta_{1,0} - 1 \leq \ell \leq (k+1) \theta_{1,0} + 1.
			$$
			We get $\hat\S(1,0) = \left\{ \left( \begin{smallmatrix} 0 \\ 2 \end{smallmatrix} \right) \right\}$ and $E^*(1,0) = 1 + \frac{1+L_1}{2} = 1 + \frac{\sqrt{21}+3}{12} \doteq 1.63$.
		\end{itemize}
		
		Note that the values of $E^*(i,m)$ periodically repeat, i.e., $E^*(i,m) = E^*(0,m)$ for $i$ even and $E^*(i,0) = E^*(1,0)$ for $i$ odd.
		We conclude that $E^*(\vv) = 1 + \frac{\sqrt{21}+3}{12} \doteq 1.63$.
	\end{example}
	
	As we have already mentioned, the asymptotic critical exponent depends only on the length of the periods of $\yy$ and $\yy'$ and it does not depend on their structure.
	On the other hand, the asymptotic critical exponent depends on the matrix $\Bigl( \begin{smallmatrix} p_{h-1} & p_{h} \\ q_{h - 1} & q_{h} \end{smallmatrix} \Bigr)$, i.e., on the preperiod of the continued fraction of $\theta$, in contrast to the asymptotic critical exponent of the associated Sturmian sequence, see the beginning of Section~\ref{sec:bounds}. 
	
	\begin{example}
		Let us consider the sequence $\vv = \barva(\uu, \yy, \yy')$ associated with $\theta = [0, \overline{2}]$, and such that ${\rm Per}(\yy) = 1$ and ${\rm Per}(\yy') = 2$.
		One can check that
		$E^*(\vv) = 3 + \sqrt{2} \doteq 4.41$.
		For $\vv' = \barva(\uu', \yy, \yy')$ associated with $\theta' = [0, 1, \overline{2}]$, one has $E^*(\vv') = 2 + \frac{\sqrt{2}}{2} \doteq 2.7$.
	\end{example}

	\section{Computation of the critical exponent}
	
	We again consider a colouring of a Sturmian sequence $\uu$ having slope $\theta$ with eventually periodic continued fraction expansion fixed in Formula~\eqref{fixedTheta}.
	In order to evaluate the critical exponent of $\vv = \barva(\uu, \yy, \yy')$, we use Theorem~\ref{Prop_FormulaForCR}.
	We have to determine
	$$
	E(\vv) = 1 + \sup \left\{ \tfrac{|w|}{|v|} : w  \in \W \text{ and } v \in  \R_{\vv}(w) \right\},
	$$
	where $\W$ denotes the set of bispecial factors in $\vv$.
	This set was written in the previous section in the form $\W = \W^{\text{short}} \cup \W^{\text{long}}$.
	Moreover the set $\W^{\text{long}}$ was partitioned into the subsets $C(i,m)$.
	Thanks to Remark~\ref{rem:supset}, in the formula for $E(\vv)$ one can replace the set $\W$ by any its superset.
	As we have no tool for finding elements of $\W^{\text{short}}$ we will use Lemma~\ref{lem:special} and consider instead its superset
	$$
	\LL^{\text{short}} := \W^{\text{short}} \cup \{w \in \LL(\vv) : |\pi(w)|_{\tt a} \leq \bs(\yy) \text{ or }|\pi(w)|_{\tt b} \leq \bs(\yy')\}.
	$$
	Let us define the numbers
	\begin{itemize}
		\item
		$E^{\text{short}}(\vv) := 1 + \max \left\{ {|w|}/{|v|}: w \in \LL^{\text{short}} \text{ and } v \in \R_{\vv}(w) \right\}$;
		
		\item
		$E(i,m) := 1 + \sup \left\{ {|w|}/{|v|} : w \in C(i,m) \text{ and } v \in  \R_{\vv}(w) \right\}$, where $i, m \in \N$, $0 \leq i < H$ and $0 \leq m < z_{i \bmod M}$.
	\end{itemize}
	
	Obviously, $E(\vv)$ is the maximum value from the finite list formed by  $E^{\text{short}}(\vv)$ and   $E(i,m)$, with $0 \leq i < H$ and $0\leq m < z_{i \bmod M}$.
	Let us comment on the individual steps of the computation.
	
	\begin{description}
		\item[$E^{\text{short}}(\vv) $\ :]
		Let us point out that if two factors $w^{(1)}, w^{(2)} \in \LL^{\text{short}}$ have the same length and $w^{(1)}$ has a shorter return word than the length of each return word to $w^{(2)}$, then only the factor $w^{(1)}$ may influence $E^{\text{short}}(\vv)$.
		To compute the relevant ratio $1 + \frac{|w|}{|v|}$ we use Theorem~\ref{prop:ShortestReturn} if the projection $\pi(w)$ of $w \in \LL^{\text{short}}$ is not a bispecial factor in $\uu$.  Otherwise, Corollary~\ref{indexlong} facilitates our computation.
		
		\item[$E(i,m) $\ :]
		Unlike the previous case, now the set of bispecial factors we have to take into consideration is infinite.
		Proposition~\ref{pro:onlySmall} we will present below shows that the knowledge of the value $E^*(i,m)$ reduces our task to examination of only a finite number of bispecial factors.
		By Corollary~\ref{indexlong}, we have
		$$
		E(i,m)
		=\sup\{I(h + i + NH,m): N\in \N \} \geq \limsup\limits_{N \to \infty} I(h + i + NH, m) = E^*(i,m).
		$$
		
		In fact, $I(h + i + NH, m)$ may exceed $E^*(i,m)$ only for a finite number of indices $N \in \N$.
	\end{description}
	
	\begin{proposition}
		\label{pro:onlySmall}
		Let $i \in \{0, 1, \ldots, H-1\}$.
		Let $L_i$ be the limit given in Equation~\eqref{valueL} and  $\lambda$ the non-dominant eigenvalue of the matrix ${A^{(0)}}$ from Corollary~\ref{cor:limits}.
		Assume that $w \in C(i,m)$ and let $(h+i+NH, m)$ be the pair assigned to $\pi(w)$.
		If for $N_0 \in \N$
		$$
		{|\lambda|}^{N_0H/M} \left| Q_{h+i -1} -L_i Q_{h+i} \right| \ \leq \ 2 L_i, \ \text{then} \ I(h+i+NH, m) \ \leq \ E^*(i,m)
		$$ for all $N \geq N_0$.
	\end{proposition}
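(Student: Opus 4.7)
The plan is to reduce the claim to a pointwise comparison, for each fixed $\left( \begin{smallmatrix} \ell \\ k \end{smallmatrix} \right) \in \S(i,m)$, between the fraction appearing in $I(h+i+NH,m)$ and the one appearing in $E^*(i,m)$. By Corollary~\ref{classes} together with Equation~\eqref{SetS(im)}, the set $\S(\pi(w))$ equals $\S(i,m)$ for every $w \in C(i,m)$, so the maxima defining $I(h+i+NH,m)$ and $E^*(i,m)$ (see Formulas~\eqref{modified} and~\eqref{asymptE(i,m)}) range over the very same finite set. A pointwise inequality therefore suffices.

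Set $n := h + i + NH$, $x := Q_{n-1}/Q_n$ and $\epsilon := x - L_i$. A direct computation, placing the two fractions over the common positive denominator $(k + \ell m + \ell x)(k + \ell m + \ell L_i)$ and cancelling, shows that the numerator of their difference equals
$$ \epsilon (k - \ell) - \frac{2(k + \ell m + \ell L_i)}{Q_n}. $$
Establishing non-positivity of this quantity is therefore the whole content of the required pointwise inequality.

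Next, Equation~\eqref{discrepancy} of Corollary~\ref{cor:limits}, applied with its index $N$ replaced by $NH/M$ (legitimate since $M\mid H$ by Item~5 of Lemma~\ref{propertyEquiv}), gives $\epsilon\, Q_n = \lambda^{NH/M}\bigl(Q_{h+i-1} - L_i Q_{h+i}\bigr)$. Consequently $|\epsilon Q_n| = |\lambda|^{NH/M}|Q_{h+i-1} - L_i Q_{h+i}|$, which for $N \geq N_0$ is, by the hypothesis, bounded above by $2 L_i$. Multiplying the target inequality through by $Q_n > 0$, it becomes $\epsilon Q_n (k - \ell) \leq 2(k + \ell m + \ell L_i)$; this is automatic when the left-hand side is non-positive, and otherwise the bound on $|\epsilon Q_n|$ reduces it to $L_i|k - \ell| \leq k + \ell m + \ell L_i$.

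The proof then concludes with a short case analysis on the sign of $k - \ell$. If $k \leq \ell$, the inequality collapses to $-L_i k \leq k + \ell m$, which is trivial. If $k > \ell$, it rearranges to $k(L_i - 1) \leq \ell(m + 2L_i)$; the left-hand side is non-positive because the partial quotients $a_n$ of $\theta$ are positive integers, so $Q_{n+1} = a_{n+1}Q_n + Q_{n-1} > Q_n$ for every $n$, forcing $L_i \leq 1$. Taking the maximum over $\left( \begin{smallmatrix} \ell \\ k \end{smallmatrix} \right) \in \S(i,m)$ yields $I(h + i + NH, m) \leq E^*(i, m)$ for every $N \geq N_0$, as required. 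The main obstacle is to identify the precise algebraic form of the numerator in the second step: only by isolating both the factor $k - \ell$ (on which the sign of $\epsilon Q_n (k - \ell)$ can be controlled) and the penalty term $2(k + \ell m + \ell L_i)/Q_n$ (which absorbs unfavourable signs of $\epsilon$) can the discrepancy estimate $|\lambda|^{N_0 H/M}|Q_{h+i-1} - L_i Q_{h+i}| \leq 2L_i$ from the hypothesis be plugged in directly.
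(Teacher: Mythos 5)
Your proof is correct and follows essentially the same route as the paper: the same algebraic identity for the difference of the two fractions (the paper writes it as $(k-\ell)(S-LT) - 2(k+\ell m+\ell L)$ with $S=Q_{n-1}$, $T=Q_n$, which is your numerator multiplied by $Q_n$), the same use of Equation~\eqref{discrepancy} with the monotone decay of $|\lambda|^{NH/M}$, and the same key bound $L_i|k-\ell|\leq k+\ell m+\ell L_i$ (which the paper asserts in one line and you verify by the sign case analysis). The only difference is that the paper argues by contraposition ($I>E^*$ forces $|S-L_iT|>2L_i$) while you argue directly; this is cosmetic.
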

	\begin{proof}
		Having in mind that $H$ is divisible by $M$, Equation~\eqref{discrepancy} gives for each $N \in \N$
		$$
		Q_{h+i+NH -1} - L_i Q_{h+i+NH}
		=
		{\lambda}^{NH/M} \left( Q_{h+i -1} - L_i Q_{h+i} \right)\,.
		$$
		Since $|\lambda|<1$, the sequence $\left| Q_{h+i+NH -1} - L_i Q_{h+i+NH} \right|$ is decreasing in $N$.
		Hence, it is enough to show the implication:
		$$
		I(h+i+NH, m) > E^*(i,m)
		\quad \Longrightarrow \quad
		\left| Q_{h+i+NH -1} - L_i Q_{h+i+NH} \right| > 2L_i\,.
		$$
		For this sake, we abbreviate the notation by putting
		$$
		S = Q_{h+i+NH -1},
		\quad
		T = Q_{h+i+NH}
		\quad \text{and} \quad
		L = L_i.
		$$
		Recall that $0 < L <1$ and $\S(\pi(w)) = \S(i,m)$ for every $w \in C(i,m)$.
		Choose $\left( \begin{smallmatrix} \ell \\ k \end{smallmatrix} \right) \in \S(i,m)$ such that
		$$
		I(h+i+NH, m) =
		1 + \frac{(1 + m) T + S-2}{(k + \ell m) T + \ell S}
		>
		E^*(i,m)
		\geq
		1 + \frac{1 + m + L}{k + \ell m + \ell L}\,.
		$$
		Thus we have $(k-\ell)(S-LT) > 2(k + \ell m + \ell L) \geq 2 L |k-\ell|$.
		Hence $|S-LT| > 2L$.
	\end{proof}
	
	\begin{example}
		Let us consider the sequence $\vv = \barva(\uu, \yy, \yy')$ from Example~\ref{ex:Lubka_S(b^3)}, i.e., $\theta = [0, 3, 2, \overline{3, 1}]$, $\yy = ({\tt 01})^{\omega}$, $\yy' = ({\tt 234235})^{\omega}$,
		$$
		\uu = {\tt bbb a bbb a bbbb a bbb a bbbb a bbb a bbbb a bbb a bbb a bbbb a bbb  }\cdots,
		$$
		$$
		\vv = {\tt 2340 235 1 2342  0  352 1 3423 0 523 1 4235 0 234 1235 0 2342 1 352}\cdots.
		$$
		In this example, we will determine $\CR(\vv)$.
		
		\vspace{0.3cm}
		
		\noindent First, let us inspect $E^{\text{short}}(\vv)$.
		We know that $\beta(\yy) = 0$ and $\beta(\yy') = 2$.
		Thus, we have the following set of short factors in $\vv$:
		$$
		\LL^{\text{short}} = \left \{w \in \LL(\vv) : \pi(w) \in \{\tt b, b^2, b^3, b^3ab^3, b^4, a, ab, ba, ab^2, bab, b^2a \} \right\}\,.
		$$
		The first four projections ${\tt b, b^2, b^3, b^3ab^3}$ are bispecial, with $(N,m)$ satisfying $N < h = 2$.
		The other projections are not bispecial in $\uu$.
		By Theorem~\ref{prop:ShortestReturn}, we have to examine $\S(\pi(w))$ for all $w \in \LL^{\text{short}}$.
		In fact, we examine $\hat\S(\pi(w))$ instead (see Remark~\ref{rem:tildeS}).
		The sets $\S({\tt b^3})$ and $\S({\tt ab})$ were already described in Example~\ref{ex:Lubka_S(b^3)}; we have
		$
		\hat\S({\tt b^3}) =
		\hat\S({\tt ab}) =
		\left\{
		\left( \begin{smallmatrix} 0 \\ 2 \end{smallmatrix} \right) \right \}.
		$
		Moreover, the shortest return word to $w \in \LL(\vv)$ with $\pi(w) = {\tt b^3}$ is of length $8$ (see Example~\ref{ex:Lubka_shortest_retword}).
		The shortest return word to $w \in \LL(\vv)$ with $\pi(w) = {\tt ab}$ satisfies, by Theorem~\ref{prop:ShortestReturn},
		$$
		|v| =
		\min \left\{ 4k + 5\ell : \begin{pmatrix} \ell \\ k \end{pmatrix} \in \hat\S({\tt ab}) \right\}
		= 8.
		$$
		We will inspect in a similar manner the lengths of the shortest return words to all remaining factors $w$ in $\LL^{\text{short}}$.
		Let us distinguish the following cases according to the projection $\pi(w)$.
		
		\begin{itemize}
			\item ${\tt b}$:
			Checking the prefix of $\uu$, we can see that $r = {\tt b}$ and $s = {\tt ba}$.
			Thus, the corresponding derived sequence starts as follows: $\dd_\uu({\tt b}) = {\tt rrsrrsrrrs} \cdots$, where we remind that ${\tt r}$ and ${\tt s}$ are the letters corresponding to the return words $r$ and $s$ respectively.
			Moreover, $\gap{\yy}{|{\tt b}|_{\tt a}} = \gap{\yy}{0} = \{ 1 \}$ and $\gap{\yy'}{|{\tt b}|_{\tt b}} = \gap{\yy'}{1} = \{ 3,6 \}$.
			If $\left( \begin{smallmatrix} \ell \\ k \end{smallmatrix} \right) \in \hat\S(\tt b)$, then
			$$
			\begin{pmatrix} \ell \\ k \end{pmatrix} \ \text{is a Parikh vector of a factor in} \ \dd_\uu({\tt b});
			$$
			$$
			k \Parickh(r) + \ell \Parickh(s) =
			k \begin{pmatrix} 0 \\ 1 \end{pmatrix} + \ell \begin{pmatrix} 1 \\ 1 \end{pmatrix}
			\equiv
			\begin{pmatrix} 0 \\ 0 \end{pmatrix}
			\pmod
			{\begin{pmatrix} 1 \\ 3 \ \text{or} \ 6 \end{pmatrix}}\,.
			$$
			
			Examining the above conditions, we get
			$
			\hat\S({\tt b}) =
			\left\{
			\left( \begin{smallmatrix} 1 \\ 2 \end{smallmatrix} \right),
			\left( \begin{smallmatrix} 0 \\ 3 \end{smallmatrix} \right) \right\}.
			$
			Indeed, ${\tt rrs}$ and ${\tt rrr}$ are factors of $\dd_\uu({\tt b})$, thus $\left( \begin{smallmatrix} 1 \\ 2 \end{smallmatrix} \right), \left( \begin{smallmatrix} 0 \\ 3 \end{smallmatrix} \right)$ are Parikh vectors of some factors in $\dd_\uu({\tt b})$.
			The shortest return word to $w \in \LL(\vv)$ with $\pi(w) = {\tt b}$ satisfies
			$$
			|v| =
			\min \left\{  k + 2\ell : \begin{pmatrix} \ell \\ k \end{pmatrix} \in \hat\S({\tt b}) \right\}
			= 3.
			$$
			
			\item ${\tt b^2}$:
			The Parikh vectors $\Parickh({r}), \ \Parickh({s})$ and the slope $\theta'$ were determined in Example~\ref{ex:ParikhRetwords}.
			Moreover, $\gap{\yy}{|{\tt b^2}|_{\tt a}} = \gap{\yy}{0} = \{ 1 \}$ and $\gap{\yy'}{|{\tt b^2}|_{\tt b}} = \gap{\yy'}{2} = \{ 3,6 \}$.
			If $\left( \begin{smallmatrix} \ell \\ k \end{smallmatrix} \right) \in \hat\S(\tt b^2)$, then
			$$
			(k-1) \theta' - 1 < \ell < (k+1) \theta' + 1
			\ \ \text{and} \
			k, \ell \in \N;
			$$
			$$
			k \Parickh({r}) + \ell \Parickh({s}) =
			k \begin{pmatrix} 0 \\ 1 \end{pmatrix} + \ell \begin{pmatrix} 1 \\ 2 \end{pmatrix}
			\equiv
			\begin{pmatrix} 0 \\ 0 \end{pmatrix}
			\pmod
			{\begin{pmatrix} 1 \\ 3 \ \text{or} \ 6\end{pmatrix}}\,.
			$$
			
			Examining the above conditions, we get
			$
			\hat\S({\tt b^2}) =
			\left\{ \left(\begin{smallmatrix} 1 \\ 1 \end{smallmatrix}\right) \right\}.
			$
			The shortest return word to $w \in \LL(\vv)$ with $\pi(w) = {\tt b^2}$ satisfies
			$$
			|v| =
			\min \left\{ k + 3\ell : \begin{pmatrix} \ell \\ k \end{pmatrix} \in \hat\S({\tt b^2}) \right\}
			= 4.
			$$
			
			\item ${\tt b^3ab^3}$:
			The Parikh vectors $\Parickh({r}), \ \Parickh({s})$ and the slope $\theta'$ were determined in Example~\ref{ex:ParikhRetwords}.
			Moreover, $\gap{\yy}{|{\tt b^3ab^3}|_{\tt a}} = \gap{\yy}{1} = \{ 2 \}$ and $\gap{\yy'}{|{\tt b^3ab^3}|_{\tt b}} = \gap{\yy'}{6} = \{ 6 \}$.
			If $\left( \begin{smallmatrix} \ell \\ k \end{smallmatrix} \right) \in \hat\S({\tt b^3ab^3})$, then
			$$
			(k-1) \theta' - 1 < \ell < (k+1) \theta' + 1
			\ \ \text{and} \
			k, \ell \in \N;
			$$
			$$
			k \Parickh({r}) + \ell \Parickh({s}) =
			k \begin{pmatrix} 1 \\ 3 \end{pmatrix} + \ell \begin{pmatrix} 1 \\ 4 \end{pmatrix}
			\equiv
			\begin{pmatrix} 0 \\ 0 \end{pmatrix}
			\pmod
			{\begin{pmatrix} 2 \\6\end{pmatrix}}\,.
			$$
			
			Examining the above three conditions, we get
			$
			\hat\S({\tt b^3ab^3}) =
			\left\{
			\left(\begin{smallmatrix} 0 \\ 2 \end{smallmatrix}\right)
			\right\}.
			$
			The shortest return word to $w \in \LL(\vv)$ with $\pi(w) = {\tt b^3ab^3}$ satisfies
			$$
			|v| =
			\min \left\{ 4k + 5\ell : \begin{pmatrix} \ell \\ k \end{pmatrix} \in \hat\S({\tt b^3ab^3}) \right\}
			= 8.
			$$
			
			\item ${\tt b^4}$:
			The shortest bispecial factor containing ${\tt b^4}$ is ${\tt b^3 a b^3 a b^4 a b^3 a b^3}$.
			By Remark~\ref{rem:retwords_extension_to_BS}, the derived sequences satisfy $\dd_\uu({\tt b^4}) = \dd_\uu({\tt b^3 a b^3 a b^4 a b^3 a b^3})$ and the Parikh vectors of the corresponding return words coincide.
			It is the $6^{\text th}$ bispecial factor of $\uu$, hence associated with $(N,m) = (2,1)$ (see Remark~\ref{rem:N,m}).
			The slope of $\dd_\uu({\tt b^4})$ is, according to Proposition~\ref{ParikhRSB1}, equal to $\theta' = [0, 2,\overline{1, 3}]$.
			One can easily determine that $\theta' = \frac{2}{\sqrt{21}+1} \doteq 0.358$.
			By Proposition~\ref{ParikhRSB1} and using Table~\ref{tab:Lubka_parameters}, the return words $r$ and $s$ to ${\tt b^4}$ have the Parikh vectors
			$$
			\Parickh({r}) =
			\begin{pmatrix} p_2 \\ q_2 \end{pmatrix} =
			\begin{pmatrix} 2 \\ 7 \end{pmatrix}
			\quad \mbox{and} \quad
			\Parickh(s) =
			\begin{pmatrix} p_2+p_1 \\ q_2+q_1\end{pmatrix} =
			\begin{pmatrix} 3 \\ 10 \end{pmatrix}.
			$$
			Moreover, $\gap{\yy}{|{\tt b^4}|_{\tt a}} = \gap{\yy}{0} = \{ 1 \}$ and $\gap{\yy'}{|{\tt b^4}|_{\tt b}} = \gap{\yy'}{4} = \{ 6 \}$.
			If $\left( \begin{smallmatrix} \ell \\ k \end{smallmatrix} \right) \in \hat\S(\tt b^4)$, then
			$$
			(k-1) \theta' - 1 <
			\ell <
			(k+1) \theta' + 1
			\ \ \text{and} \
			k, \ell \in \N;
			$$
			$$
			k \Parickh({r}) + \ell \Parickh({s}) =
			k \begin{pmatrix} 2 \\ 7 \end{pmatrix} + \ell \begin{pmatrix} 3 \\ 10 \end{pmatrix}
			\equiv
			\begin{pmatrix} 0 \\ 0 \end{pmatrix}
			\pmod
			{\begin{pmatrix} 1 \\ 6 \end{pmatrix}}\,.
			$$
			
			Examining the above conditions, we get
			$\hat\S({\tt b^4}) =
			\left\{
			\left( \begin{smallmatrix} 1\\ 2 \end{smallmatrix} \right)
			\right\}.$
			
			The shortest return word to $w \in \LL(\vv)$ with $\pi(w) = {\tt b^4}$ satisfies
			$$
			|v| =
			\min \left\{ 9k + 13\ell : \begin{pmatrix} \ell \\ k \end{pmatrix} \in \hat\S({\tt b^4}) \right\}
			= 31.
			$$
			
			\item ${\tt a}$:
			We have $r = {\tt ab^3}$ and $s = {\tt ab^4}$.
			Let us write down a short prefix of $\dd_\uu({\tt a}) = {\tt rsrsrsrr} \cdots$.
			Moreover, $\gap{\yy}{|{\tt a}|_{\tt a}} = \gap{\yy}{1} = \{ 2\}$ and $\gap{\yy'}{|{\tt a}|_{\tt b}} = \gap{\yy'}{0} = \{1\}$.
			If $\left( \begin{smallmatrix} \ell \\ k \end{smallmatrix} \right) \in \hat\S(\tt a)$, then
			$$
			\left( \begin{smallmatrix} \ell \\ k \end{smallmatrix} \right) \ \text{is a Parikh vector of a factor in} \ \dd_\uu({\tt a});
			$$
			$$
			k \Parickh(r) + \ell \Parickh(s) =
			k \begin{pmatrix} 1 \\ 3 \end{pmatrix} + \ell \begin{pmatrix} 1 \\ 4 \end{pmatrix}
			\equiv
			\begin{pmatrix} 0 \\ 0 \end{pmatrix}
			\pmod
			{\begin{pmatrix} 2 \\ 1\end{pmatrix}}\,.
			$$
			Examining the above conditions, we get
			$
			\hat\S({\tt a}) =
			\left\{
			\left(\begin{smallmatrix} 0 \\ 2 \end{smallmatrix}\right),
			\left(\begin{smallmatrix} 1 \\ 1 \end{smallmatrix}\right)
			\right\}.
			$
			Indeed, ${\tt rr}$ and ${\tt rs}$ are factors of $\dd_\uu({\tt a})$, thus $\left( \begin{smallmatrix} 0 \\ 2 \end{smallmatrix} \right), \left( \begin{smallmatrix} 1 \\ 1 \end{smallmatrix} \right)$ are Parikh vectors of some factors in $\dd_\uu({\tt a})$.
			The shortest return word to $w \in \LL(\vv)$ with $\pi(w) = {\tt a}$ satisfies
			$$
			|v| =
			\min \left\{  4k + 5\ell  : \begin{pmatrix} \ell \\ k \end{pmatrix} \in \hat\S({\tt a}) \right\}
			= 8.
			$$
			
			\item ${\tt ab}$:
			The shortest bispecial factor containing ${\tt ab}$ is ${\tt b^3 a b^3}$.
			By Remark~\ref{rem:retwords_extension_to_BS}, the derived sequences satisfy $\dd_\uu({\tt ab}) = \dd_\uu({\tt b^3ab^3})$ and the Parikh vectors of the corresponding return words coincide.
			The only new parameters we have to determine in order to calculate $\hat\S({\tt ab})$ are the gaps: $\gap{\yy}{|{\tt ab}|_{\tt a}} = \gap{\yy}{1} = \{ 2 \}$ and $\gap{\yy'}{|{\tt ab}|_{\tt b}} = \gap{\yy'}{1} = \{ 3,6 \}$.
			If $\left( \begin{smallmatrix} \ell \\ k \end{smallmatrix} \right) \in \hat\S({\tt ab})$, then
			$$
			\begin{pmatrix} \ell \\ k \end{pmatrix} \ \text{is a Parikh vector of a factor in} \ \dd_\uu({\tt b^3ab^3});
			$$
			$$
			k \Parickh(r) + \ell \Parickh(s) =
			k \begin{pmatrix} 1 \\ 3 \end{pmatrix} + \ell \begin{pmatrix} 1 \\ 4 \end{pmatrix}
			\equiv
			\begin{pmatrix} 0 \\ 0 \end{pmatrix}
			\pmod
			{\begin{pmatrix} 2 \\ 3 \ \text{or} \ 6\end{pmatrix}}\,.
			$$
			
			Examining the above conditions, we get
			$
			\hat\S({\tt ab}) =
			\left\{
			\left( \begin{smallmatrix} 0 \\ 2 \end{smallmatrix} \right)
			\right\}.
			$
			The shortest return word to $w \in \LL(\vv)$ with $\pi(w) = {\tt ab}$ satisfies
			$$
			|v| =
			\min \left\{ 4k + 5\ell : \begin{pmatrix} \ell \\ k \end{pmatrix} \in \hat\S({\tt ab}) \right\}
			= 8.
			$$
			
			\item ${\tt ab^2}$:
			The shortest bispecial factor containing ${\tt ab^2}$ is ${\tt b^3 a b^3}$.
			Thus proceeding exactly as for ${\tt ab}$ we deduce that the shortest return word to $w \in \LL(\vv)$ with $\pi(w) = {\tt ab^2}$ satisfies
			$$
			|v| =
			\min \left\{ 4k + 5\ell : \begin{pmatrix} \ell \\ k \end{pmatrix} \in \hat\S({\tt ab^2}) \right\}
			= 8.
			$$
			
			\item ${\tt ba}, \ {\tt bab}, \ {\tt b^2a}$:
			Using similar arguments as for $\hat\S({\tt ab})$, we get $\hat\S({\tt ba}) = \hat\S({\tt ab})$ and $\hat\S({\tt bab}) = \hat\S({\tt b^2a}) = \hat\S({\tt ab^2})$.
			The lengths of the shortest return words to factors in $\vv$ with projections ${\tt ba}$ and ${\tt ab}$ (resp., ${\tt bab, b^2a}$ and ${\tt ab^2}$) are the same by Theorem~\ref{prop:ShortestReturn}.
		\end{itemize}
		
		Finally, we have
		$$
		\begin{array}{rcl}
			E^{\text{short}}(\vv) & = & 1 + \max \left\{ {|w|}/{|v|}: w \in \LL^{\text{short}} \text{ and } v \in \R_{\vv}(w) \right \} \\
			& = & 1 + \max\left\{ \cfrac{1}{8}\,, \ \cfrac{4}{31}\,, \ \cfrac{3}{8}\,, \ \cfrac{1}{4}\,, \ \cfrac{1}{3}\,, \ \cfrac{1}{2}\,, \ \cfrac{7}{8} \right\} = 1+\cfrac{7}{8}\,.
		\end{array}
		$$
		
		\vspace{0.3cm}
		\noindent Second, we will describe $E(i,m)$ for $(i,m)$ determined in Example~\ref{ex:Lubka_H}.
		Let us recall all needed ingredients:
		$L_0 = \frac{\sqrt{21}-3}{2}, \
		L_1 = \frac{\sqrt{21}-3}{6}$ and
		$\lambda = \frac{5-\sqrt{21}}{2}$
		(see Example~\ref{ex:Lubka_L_i}).
		The values of $Q_N$ are given in Table~\ref{tab:Lubka_parameters}.
		Let us apply Proposition~\ref{pro:onlySmall} in order to determine, which values $I(h+i+NM, m)$ influence $E(i,m)$ besides the value $E^*(i,m)$.
		
		\begin{enumerate}
			\item $i = 0$:
			As  
			$|\lambda|^{N_0 H/M} |Q_{h-1}-L_0 Q_h| = |\lambda|^{3N_0} |Q_{h-1}-L_0 Q_h| \leq 2 L_0 $
			for $N_0 = 1$, we have
			$E(0,m) = \max\{ E^*(0,m), I(h,m)\}$
			for $0 \leq m < z_0 = 3$.
			Since $h = 2$, we need to treat separately $I(2,0), I(2,1), I(2,2)$.
			
			\item $i = 1$:
			As
			$|\lambda|^{3N_0} |Q_{h}-L_1 Q_{h+1}| \leq 2 L_1$
			holds for $N_0 = 1$ we have
			$E(1,m) = \max\{ E^*(1,m), I(h+1,m) \}$
			for $0 \leq m < z_1 = 1$.
			Thus we have to treat separately $I(3,0)$.
			
			\item $i \in \{2, 4\}$:
			Since
			$|\lambda|^{3N_0} |Q_{h+i-1}-L_0 Q_{h+i}| \leq 2 L_i = 2L_0$
			holds for $N_0 = 0$, we have
			$E(i,m) = E^*(i,m)$
			for each admissible $m$.
			
			\item $i \in  \{3, 5\}$:
			Since
			$|\lambda|^{3N_0} |Q_{h+i-1}-L_1 Q_{h+i}| \leq 2 L_i = 2 L_1$
			holds for $N_0 = 0$, we have
			$E(i,m) = E^*(i,m)$
			for each admissible $m$.
		\end{enumerate}
		
		To find $I(h+i,m)$ for $(i,m) \in \{ (0,0), (0,1), (0,2), (1,0) \}$, we use Formula~\eqref{modified}.
		The sets $\hat\S(i,m)$ required by this formula and $E^*(i,m)$ have been determined in Example~\ref{ex:Lubka_AsymptoticCR}. 
		Thus we have all we need to compute $I(h+i, m)$ and to compare it with $E^*(i,m)$.
		
		\begin{itemize}
			\item $I(2, 0) =
			1 + \max \left\{ \cfrac{Q_{2} + Q_{1}-2}{k Q_{2} + \ell Q_{1} } : \begin{pmatrix} \ell \\ k \end{pmatrix} \in \hat\S(0,0) \right\}
			= 1 + \frac{11}{62} < E^*(0,0).$
			\item $I(2, 1) =
			1 + \max \left\{ \cfrac{2 Q_{2} + Q_{1}-2}{(k + \ell )Q_{2} + \ell Q_{1} } : \begin{pmatrix} \ell \\ k \end{pmatrix} \in \hat\S(0,1) \right\}
			= 1 + \frac{10}{31} < E^*(0,1).$
			\item $I(2,2) =
			1 + \max \left\{ \cfrac{3 Q_{2} + Q_{1}-2}{(k + 2\ell )Q_{2} + \ell Q_{1} } : \begin{pmatrix} \ell \\ k \end{pmatrix} \in \hat\S(0,2) \right\}
			= 1 + \frac{29}{62} < E^*(0,2).$
			\item $I(3,0) =
			1 + \max \left\{ \cfrac{Q_{3} + Q_{2}-2}{k Q_{3} + \ell Q_{2} } : \begin{pmatrix} \ell \\ k \end{pmatrix} \in \hat\S(1,0) \right\}
			= 1 + \frac{19}{31} < E^*(1,0).$
		\end{itemize}
		
		To summarise, we have shown that $\max E(i,m) = \max E^*(i,m) = E^*(\vv)$.
		
		\vspace{0.3cm}
		\noindent To conclude,
		$$
		E(\vv) =
		\max\{ E^{\text{short}}(\vv), E^*(\vv) \} =
		\max \left\{ 1 + \frac{7}{8}, 1 + \frac{\sqrt{21}+3}{12} \right \} =
		1 + \frac{7}{8}\,.
		$$
	\end{example}

	\section{Balanced sequences with minimal critical exponent}
	\label{sec:balanced_minimal}
	
	Rampersad, Shallit and Vandomme~\cite{RaShVa} 
	focused on  balanced sequences with the least critical exponent. 
	For every $d \in \{ 3, 4, \ldots, 10 \}$ they defined a balanced sequence $\xx_d$ and conjectured that such sequence has the least critical exponent among all balanced sequences over a $d$-letter alphabet.
	They also proved the conjecture for $d = 3$ and $d = 4$.~\footnote{More precisely, the minimality in the case $d = 4$ was proved by Peltom\"{a}ki in a private communication to Rampersad.}
	Exploiting computer assistance, they found for $5 \leq d \leq 10$ that $\frac{d-2}{d-3}$ would be the least possible critical exponent for balanced sequences over a $d$-letter alphabet. 
	Later, Baranwal and Shallit~\cite{BaShBalanced,BaranThesis}
	confirmed that $E({\bf x}_d) = \frac{d-2}{d-3}$ for alphabets of size $5$ to $8$.
	We used the algorithm described in the previous chapters and implemented by our student Daniela Opo\v censk\'a to show that $E({\bf x}_d) = \frac{d-2}{d-3} $ also for $d = 9$ and $d = 10$.
	A detailed computation for the case $d = 9$ can be found in Appendix.
	
	The above mentioned results on the least critical exponent are summarised in the table, which is taken from~\cite{RaShVa} (we use the slope $\theta$ instead of the parameter $\alpha=\frac{1}{1+\theta}$ used in the original table).
	We erased the question marks accompanying the values $E(\xx_{9})$ and $E(\xx_{10})$ in the original table and we also added to the table a column containing the asymptotic critical exponent $E^*(\xx_{d})$.
	
	\begin{table}[htb!]
		\centering
		\setlength{\tabcolsep}{5pt}
		\renewcommand{\arraystretch}{1.3}
		\resizebox{\textwidth}{!} {
			\begin{tabular}{|c|l|l|l|l|l|}
				\hline
				$d$ & $\theta$ & $\yy$ & $\yy'$ & $\CR(\vv)$ & $\CR^{*}(\vv)$ \\
				\hline
				\hline
				3 & $[0, 1, \overline{2}]$ & ${\tt 0}^{\omega}$ & $({\tt 12})^{\omega}$ & $2 + \frac{1}{\sqrt{2}}$ & $2 + \frac{1}{\sqrt{2}}$ \\
				\hline
				4 & $[0, \overline{1}]$ & $({\tt 01})^{\omega}$ & $({\tt 23})^{\omega}$ & $1 + \frac{1 + \sqrt{5}}{4}$ & $1 + \frac{1 + \sqrt{5}}{4}$ \\
				\hline
				5 & $[0, 1, \overline{2}]$ & $({\tt 01})^{\omega}$ & $({\tt 2324})^{\omega}$ & $\frac{3}{2}$ & $\frac{3}{2}$ \\
				\hline
				6 & $[0, 2, 1, 1, \overline{1, 1, 1, 2}]$& ${\tt 0}^{\omega}$ & $({\tt 123415321435})^{\omega}$ & $\frac{4}{3}$ & $\frac{4}{3}$ \\
				\hline
				7 & $[0, 1, 3,\overline{1, 2, 1}]$ & $({\tt 01})^{\omega}$ & $({\tt 234526432546})^{\omega}$ & $\frac{5}{4}$ & $\frac{5}{4}$ \\
				\hline
				8 & $[0, 3, 1, \overline{2}]$ & $({\tt 01})^{\omega} $ & $({\tt 234526732546237526432576})^{\omega}$ & $\frac{6}{5} = 1.2$ & $\frac{12 + 3\sqrt{2}}{14} \doteq 1.16$ \\
				\hline
				9 & $[0, 2, 3,\overline{2}]$ & $({\tt 01})^{\omega} $ & $({\tt 234567284365274863254768})^{\omega}$ & $\frac{7}{6} \doteq 1.167$ & $1 + \frac{2 \sqrt{2} - 1}{14} \doteq 1.13$ \\
				\hline
				10 & $[0, 4, 2,\overline{3}]$ & $({\tt 01})^{\omega} $ & $({\tt 234567284963254768294365274869})^{\omega}$ & $\frac{8}{7} \doteq 1.14$ & $1 + \frac{\sqrt{13}}{26} \doteq 1.139$ \\
				\hline
			\end{tabular}
		}
		\vspace{0.3cm}
		\caption{The balanced sequences with the least critical exponent over alphabets of size $d$.}
		\label{NaradTable}
	\end{table}
	
	We see that $\CR^*({\bf x}_d) = \CR({\bf x}_d)$ for $d = 3, 4, 5, 6, 7$.
	Observing Table~\ref{NaradTable}, we can deduce that there exists a balanced sequence $\xx$ over an $8$-letter alphabet with $\CR^*(\xx) < \CR^*(\xx_8)$.
	The sequence $\xx$ uses the same pair $\yy$ and $\yy'$ as $\xx_8 $.
	The slope of $\xx$ is $\theta = [0, 2, 3, \overline{2}]$.
	Since $\xx$	and $\xx_9$ have the same slope and the same period of constant gap sequences, we have $\CR^*({\bf x}) = \CR^*({\bf x}_9) < \CR^*({\bf x}_8)$.
	
	In the course of the referee process, the conjecture by Rampersad, Shallit and Vandomme was disproved by introducing $d$-ary balanced sequences with the critical exponent equal to $\frac{d-1}{d-2}$ for $d=11$ and also for all even $d$'s larger than 10 (see~\cite{DvOpPeSh2022}). Moreover, it was shown ibidem that $\frac{d-1}{d-2}$ is a lower bound on the critical exponent for all $d$-ary balanced sequences. It thus remains open to prove or disprove that the value $\frac{d-1}{d-2}$ is the least critical exponent of $d$-ary balanced sequences for all odd $d$'s larger than 11.  
	
	In addition, a new method for determining the least asymptotic critical exponent of balanced sequences was introduced in~\cite{DvOpPe2022} and the least asymptotic critical exponent was computed for alphabets of size 3 to 10.
	It follows that the least critical exponent and the least asymptotic critical exponent of balanced sequences are equal for $d\in \{3,4,5\}$, but the asymptotic version is smaller for larger $d$.
	It remains as an open problem to find the least asymptotic critical exponent over larger alphabets.

	\section{Appendix}
	\label{sec:appendix}
	
	In this appendix we illustrate our method for computing the critical exponent on the balanced sequence $\xx_9$.
	The sequence was introduced in~\cite{RaShVa} as a candidate for the balanced sequence having the least critical exponent over a $9$-letter alphabet.
	It was shown ibidem that $E(\vv) \geq \frac{7}{6}$ for every balanced sequence over a $9$-letter alphabet.
	In the sequel we will show that $E(\xx_9) = \frac{7}{6}$ and we will thus confirm the minimality of the critical exponent for this sequence.
	
	Let us consider the following constant gap sequences
	$$
	\yy = ({\tt 01})^\omega\,
	\qquad \text{and} \qquad
	\yy' = ({\tt 234567284365274863254768})^{\omega}\,.
	$$
	
	The sequence $\yy'$ is a constant gap sequence because ${\rm gap}_{\yy'}({i}) = 6$ for $i \in \{ {\tt 2}, {\tt 4}, {\tt 6} \}$ and ${\rm gap}_{\yy'}({j}) = 8$ for $j \in \{ {\tt 3}, {\tt 5}, {\tt 7}, {\tt 8} \}$.
	Moreover, ${\rm gap}(\yy', n) = \{ 24 \}$ for all $n \ge 2$.
	The minimal period of $\yy'$ is ${\rm Per}(\yy') = 24$.
	
	The only bispecial factor in $\yy$ is the empty word, while the only bispecial factors in $\yy'$ are the empty word $\varepsilon$ and the letters.
	Consequently, $\beta(\yy) = 0$ and $\beta(\yy') = 1$.
	
	We define the sequence $\xx_{9} = \barva(\uu_9, \yy, \yy')$, where $\uu_9$ is the standard Sturmian sequence with slope $\theta_9 =  [0, 2, 3, \overline{2}]$.
	Here are the prefixes of the studied sequences:
	$$
	\begin{array}{rcl}
		\uu_9 &=&
		{\tt bbabbabbabbbabbabbabbbabbabbabbabbbabbabbabbbabbabb}\cdots, \\
		\xx_9 &=& {\tt 230451670284136052174806312504716820341560728143065}\cdots.
	\end{array}
	$$
	
	The sequence $\xx_9$ is balanced according to Theorem~\ref{Hubert}.
	Moreover, it is easy to check that $\uu_9 \in \{ {\tt b^2 a},  {\tt b^3a} \}^{\N}$.

	\subsection*{Asymptotic critical exponent of $\xx_9$}
	
	In Table~\ref{tab:x9_parameters} we write all needed parameters for the sequence $\uu_9$.
	
	\begin{table}[htb!]
		\centering
		\setlength{\tabcolsep}{5pt}
		\renewcommand{\arraystretch}{1.3}
		\resizebox{\textwidth}{!} {
			\begin{tabular}{|l|l|l|l|l|l|l|l|l|l|l|l|}
				\hline
				$N$ & $0$ & $1$ & $\textcolor{green}{2}$ & $3$ & $4$ & $5$ & $6$ & $7$ & $8$ & $9$ & $\textcolor{green}{10}$ \\
				\hline
				$a_N$ & $0$ & $2$ & $3$ & $2$ & $2$ & $2$ & $2$ & $2$ & $2$ & $2$ & $2$ \\
				\hline
				$p_{N}$ & $0$ & $1$ & $3$ & $7$ & $17$ & $41$ & $99$ & $239$ & $577$ & $1393$ & $3363$\\
				\hline
				$p_{N} \bmod {\rm Per(\yy)}$ & $0$ & $\textcolor{red}{1}$ & $\textcolor{blue}{1}$ & $1$ & $1$ & $1$ & $1$ & $1$ & $1$ & $\textcolor{red}{1}$ & $\textcolor{blue}{1}$\\
				\hline
				$q_{N}$ & $1$ & $2$ & $7$ & $16$ & $39$ & $94$ & $227$ & $548$ & $1323$ & $3194$ & $7711$ \\
				\hline
				$q_{N} \bmod {\rm Per(\yy')}$ & $1$ & $\textcolor{red}{2}$ & $\textcolor{blue}{7}$ & $16$ & $15$ & $22$ & $11$ & $20$ & $3$ & $\textcolor{red}{2}$ & $\textcolor{blue}{7}$\\
				\hline
				$Q_{N}$ & $1$ &  $3$ & $10$ & $23$ & $56$ & $135$ & $326$ & $787$ & $1900$ & $4587$  & $11074$\\
				\hline
			\end{tabular}
		}
		\vspace{0.3cm}
		\caption{The first values of $a_N, p_N, q_N, Q_N$ for $\uu_9$ with $\theta = [0, 2,3, \overline{2}]$.}
		\label{tab:x9_parameters}
	\end{table}
	
	By Remark~\ref{rem:preperiod}, we have $p_2 + p_{1} - 1 = 3 > 0 = \beta(\yy)$ and $q_2 + q_{1} - 1 = 8 > 1 = \beta(\yy')$, hence $h = 2$ is the minimal preperiod length of the continued fraction of $\theta$ from~\eqref{fixedTheta}.
	
	Let us find the number $H$ of equivalence classes from Definition~\ref{def:equiv}.
	Observing Table~\ref{tab:x9_parameters}, we have
	$$
	H = \min \left\{ i \in \N, i > 0 : 2 + i \sim 2 \right\} = 8.
	$$
	Indeed, we have
	$\textcolor{green}{2} \equiv \textcolor{green}{10} \pmod 1$,
	$\left( \begin{smallmatrix}{\textcolor{red}{p_{1}}} \\ {\textcolor{red}{q_{1}}}\end{smallmatrix} \right) \equiv \left( \begin{smallmatrix}{\textcolor{red}{p_{9}}} \\ {\textcolor{red}{q_{9}}} \end{smallmatrix} \right) \pmod {\left( \begin{smallmatrix} 2 \\ 24 \end{smallmatrix} \right)}$
	and
	$\left( \begin{smallmatrix}\textcolor{blue}{{p_{2}}} \\ \textcolor{blue}{q_{2}} \end{smallmatrix} \right) \equiv \left( \begin{smallmatrix} \textcolor{blue}{p_{10}} \\ \textcolor{blue}{q_{10}} \end{smallmatrix} \right) \pmod {\left( \begin{smallmatrix} 2 \\ 24\end{smallmatrix} \right)}$.
	By Definition~\ref{def:classes}, we have 16 subsets $C(i,m)$, where $0 \leq i \leq 7$ and $0 \leq m \leq 1$.
	
	Since the period length is $M = 1$, we have $L = L_i = \lim_{N \to \infty} \frac{Q_{N-1}}{Q_N} = \sqrt{2} - 1$ for all $i \in \{ 0, 1, \dots, 7 \}$.
	
	We have $E^*(\xx_9) = \max \{ E^*(i,m) : 0 \leq i \leq 7,\ 0 \leq m \leq 1 \}$, where
	$$
	E^*(i,m) = 1 + \max \left\{ \cfrac{1 + m + L}{k+ \ell m + \ell L }
	\ : \
	\begin{pmatrix} \ell \\ k \end{pmatrix} \in \hat\S(i,m) \right\} \,.
	$$
	
	The reader is invited to verify the following calculations that we obtained using our computer program.
	
	\begin{itemize}
		\item $\hat\S(0,0)$:
		We have $\theta_{0,0} = [0, \overline{2}] = \sqrt{2} - 1 = L$.
		By Corollary~\ref{coro:S(i,m)} and Remark~\ref{rem:tildeS}, if $\left(\begin{smallmatrix} \ell \\ k \end{smallmatrix}\right) \in \hat\S(0,0)$, then
		$$
		k \begin{pmatrix} p_h \\ q_h \end{pmatrix} + \ell \begin{pmatrix} p_{h-1} \\ q_{h-1} \end{pmatrix} =
		k \begin{pmatrix} 1 \\ 7 \end{pmatrix} + \ell \begin{pmatrix} 1 \\ 2 \end{pmatrix}
		\equiv
		\begin{pmatrix} 0 \\ 0 \end{pmatrix}
		\pmod
		{\begin{pmatrix} 2 \\ 24 \end{pmatrix}};
		$$
		$$
		(k-1) \theta_{0,0} - 1 \leq \ell \leq (k+1) \theta_{0,0} + 1.
		$$
		We get $\hat\S(0,0) = \left\{ \left( \begin{smallmatrix} 6 \\ 12 \end{smallmatrix} \right) \right\}$ and $E^*(0,0) = 1 + \frac{1+L}{12+6L} = 1 + \frac{2-\sqrt{2}}{6} \doteq 1.0976$.
		
		\item $\hat\S(0,1)$:
		We have $\theta_{0,1} = [0, 1, \overline{2}] = \frac{1}{\sqrt{2}}$.
		If $\left( \begin{smallmatrix} \ell \\ k \end{smallmatrix} \right) \in \hat\S(0,1)$, then
		$$
		k \begin{pmatrix} p_h \\ q_h \end{pmatrix} + \ell \begin{pmatrix} p_h + p_{h-1} \\ q_h + q_{h-1} \end{pmatrix} =
		k \begin{pmatrix} 1 \\ 7 \end{pmatrix} + \ell \begin{pmatrix} 0 \\ 9 \end{pmatrix}
		\equiv
		\begin{pmatrix} 0 \\ 0 \end{pmatrix}
		\pmod
		{\begin{pmatrix} 2 \\ 24 \end{pmatrix}};
		$$
		$$
		(k-1) \theta_{0,1} - 1 \leq \ell \leq (k+1) \theta_{0,1} + 1.
		$$
		We get $\hat\S(0,1) = \left\{ \left( \begin{smallmatrix} 16 \\ 24 \end{smallmatrix} \right) \right\}$ and $E^*(0,1) = 1 + \frac{2+L}{40+16L} = 1 + \frac{1+\sqrt{2}}{24+16\sqrt{2}} \doteq 1.0518$.
		
		\item $\hat\S(1,0)$:
		We have $\theta_{1,0} = [0, \overline{2}] = \theta_{0,0}$.
		If $\left( \begin{smallmatrix} \ell \\ k \end{smallmatrix} \right) \in \hat\S(1,0)$, then
		$$
		k \begin{pmatrix} p_{h+1} \\ q_{h+1} \end{pmatrix} + \ell \begin{pmatrix} p_h \\ q_h \end{pmatrix} =
		k \begin{pmatrix} 1 \\ 16 \end{pmatrix} + \ell \begin{pmatrix} 1 \\ 7 \end{pmatrix}
		\equiv
		\begin{pmatrix} 0 \\ 0 \end{pmatrix}
		\pmod
		{\begin{pmatrix} 2 \\ 24 \end{pmatrix}};
		$$
		$$
		(k-1) \theta_{1,0} - 1 \leq \ell \leq (k+1) \theta_{1,0} + 1.
		$$
		We get $\hat\S(1,0) = \left\{ \left( \begin{smallmatrix} 8 \\ 16 \end{smallmatrix} \right) \right\}$ and $E^*(1,0) = 1 + \frac{1+L}{16+8L} = 1 + \frac{2-\sqrt{2}}{8} \doteq 1.0732$.
		
		\item $\hat\S(1,1)$:
		We have $\theta_{1,1} = [0, 1, \overline{2}] = \theta_{0,1}$.
		If $\left( \begin{smallmatrix} \ell \\ k \end{smallmatrix} \right) \in \hat\S(1,1)$, then
		$$
		k \begin{pmatrix} p_{h+1} \\ q_{h+1} \end{pmatrix} + \ell \begin{pmatrix} p_{h+1} + p_h \\ q_{h+1} + q_h \end{pmatrix} =
		k \begin{pmatrix} 1 \\ 16 \end{pmatrix} + \ell \begin{pmatrix} 0 \\ 23 \end{pmatrix}
		\equiv
		\begin{pmatrix} 0 \\ 0 \end{pmatrix}
		\pmod
		{\begin{pmatrix} 2 \\ 24 \end{pmatrix}};
		$$
		$$
		(k-1) \theta_{1,1} - 1 \leq \ell \leq (k+1) \theta_{1,1} + 1.
		$$
		We get $\hat\S(1,1) = \left\{ \left( \begin{smallmatrix} 16 \\ 22 \end{smallmatrix} \right) \right\}$ and $E^*(1,1) = 1 + \frac{2+L}{38+16L} = 1 + \frac{\sqrt{2}+1}{22+16\sqrt{2}} \doteq 1.0541$.
		
		\item $\hat\S(2,0)$:
		We have $\theta_{2,0} = [0, \overline{2}] = \theta_{0,0}$.
		If $\left( \begin{smallmatrix} \ell \\ k \end{smallmatrix} \right) \in \hat\S(2,0)$, then
		$$
		k \begin{pmatrix} p_{h+2} \\ q_{h+2} \end{pmatrix} + \ell \begin{pmatrix} p_{h+1} \\ q_{h+1} \end{pmatrix} =
		k \begin{pmatrix} 1 \\ 15 \end{pmatrix} + \ell \begin{pmatrix} 1 \\ 16 \end{pmatrix}
		\equiv
		\begin{pmatrix} 0 \\ 0 \end{pmatrix}
		\pmod
		{\begin{pmatrix} 2 \\ 24 \end{pmatrix}};
		$$
		$$
		(k-1) \theta_{2,0} - 1 \leq \ell \leq (k+1) \theta_{2,0} + 1.
		$$
		We get $\hat\S(2,0) = \left\{ \left( \begin{smallmatrix} 6 \\ 16 \end{smallmatrix} \right) \right\}$ and $E^*(2,0) = 1 + \frac{1+L}{16+6L} = 1 + \frac{\sqrt{2}}{10+6\sqrt{2}} \doteq 1.0765$.
		
		\item $\hat\S(2,1)$:
		We have $\theta_{2,1} = [0, 1, \overline{2}] = \theta_{0,1}$.
		If $\left( \begin{smallmatrix} \ell \\ k \end{smallmatrix} \right) \in \hat\S(2,1)$, then
		$$
		k \begin{pmatrix} p_{h+2} \\ q_{h+2} \end{pmatrix} + \ell \begin{pmatrix} p_{h+2} + p_{h+1} \\ q_{h+2} + q_{h+1} \end{pmatrix} =
		k \begin{pmatrix} 1 \\ 15 \end{pmatrix} + \ell \begin{pmatrix} 0 \\ 7 \end{pmatrix}
		\equiv
		\begin{pmatrix} 0 \\ 0 \end{pmatrix}
		\pmod
		{\begin{pmatrix} 2 \\ 24 \end{pmatrix}};
		$$
		$$
		(k-1) \theta_{2,1} - 1 \leq \ell \leq (k+1) \theta_{2,1} + 1.
		$$
		We get $\hat\S(2,1) = \left\{ \left( \begin{smallmatrix} 6 \\ 10 \end{smallmatrix} \right) \right\}$ and $E^*(2,1) = 1 + \frac{2+L}{16+6L} = 1 + \frac{\sqrt{2}+1}{10+6\sqrt{2}} \doteq 1.1306$.
		
		\item $\hat\S(3,0)$:
		We have $\theta_{3,0} = [0, \overline{2}] = \theta_{0,0}$.
		If $\left( \begin{smallmatrix} \ell \\ k \end{smallmatrix} \right) \in \hat\S(3,0)$, then
		$$
		k \begin{pmatrix} p_{h+3} \\ q_{h+3} \end{pmatrix} + \ell \begin{pmatrix} p_{h+2} \\ q_{h+2} \end{pmatrix} =
		k \begin{pmatrix} 1 \\ 22 \end{pmatrix} + \ell \begin{pmatrix} 1 \\ 15 \end{pmatrix}
		\equiv
		\begin{pmatrix} 0 \\ 0 \end{pmatrix}
		\pmod
		{\begin{pmatrix} 2 \\ 24 \end{pmatrix}};
		$$
		$$
		(k-1) \theta_{3,0} - 1 \leq \ell \leq (k+1) \theta_{3,0} + 1.
		$$
		We get $\hat\S(3,0) = \left\{ \left( \begin{smallmatrix} 12 \\ 30 \end{smallmatrix} \right) \right\}$
		and $E^*(3,0) = 1 + \frac{1+L}{30+12L} = 1 + \frac{\sqrt{2}}{18+12\sqrt{2}} \doteq 1.0404$.
		
		\item $\hat\S(3,1)$:
		We have $\theta_{3,1} = [0, 1, \overline{2}] = \theta_{0,1}$.
		If $\left( \begin{smallmatrix} \ell \\ k \end{smallmatrix} \right) \in \hat\S(3,1)$, then
		$$
		k \begin{pmatrix} p_{h+3} \\ q_{h+3} \end{pmatrix} + \ell \begin{pmatrix} p_{h+3} + p_{h+2} \\ q_{h+3} + q_{h+2} \end{pmatrix} =
		k \begin{pmatrix} 1 \\ 22 \end{pmatrix} + \ell \begin{pmatrix} 0\\ 13 \end{pmatrix}
		\equiv
		\begin{pmatrix} 0 \\ 0 \end{pmatrix}
		\pmod
		{\begin{pmatrix} 2 \\ 24 \end{pmatrix}};
		$$
		$$
		(k-1) \theta_{3,1} - 1 \leq \ell \leq (k+1) \theta_{3,1} + 1.
		$$
		We get $\hat\S(3,1) = \left\{ \left( \begin{smallmatrix} 12 \\ 18 \end{smallmatrix} \right) \right\}$ and $E^*(3,1) = 1 + \frac{2+L}{30+12L} = 1 + \frac{\sqrt{2}+1}{18+12\sqrt{2}} \doteq 1.069$.
	\end{itemize}
	
	Note that the values of $E^*(i,m)$ periodically repeat, i.e., $E^*(i+4,m) = E^*(i,m)$ for $0 \leq i \leq 3$ and $0 \leq m \leq 1$.
	We conclude that $E^*(\xx_9) = 1 + \frac{\sqrt{2}+1}{10+6\sqrt{2}} \doteq 1.1306$.

	\subsection*{Critical exponent of $\xx_9$}
	
	First, let us inspect $E^{\text{short}}(\xx_9)$.
	We know that $\beta(\yy) = 0$ and $\beta(\yy') = 1$.
	Thus, we have the following set of short factors in $\xx_9$:
	$$
	\LL^{\text{short}} = \left \{w \in \LL(\xx_9) : \pi(w) \in \{\tt b, b^2, b^2ab^2, b^2ab^2ab^2, b^3, a, ab, ba \} \right\}\,.
	$$
	The first four projections ${\tt b, b^2, b^2ab^2, b^2ab^2ab^2}$ listed in the specification of $\LL^{\text{short}}$ are bispecial factors in $\uu_9$ with $(N,m)$ satisfying $N < h = 2$.
	The other projections are not bispecial in $\uu_9$.
	By Theorem~\ref{prop:ShortestReturn}, we have to examine $\S(\pi(w))$ for all $w \in \LL^{\text{short}}$.
	In fact, we examine $\hat\S(\pi(w))$ instead (see Remark~\ref{rem:tildeS}).
	
	We will show that $\frac{|w|}{|v|} \le \frac{1}{6}$ for $w \in \LL^{\text{short}}$ and $v \in \R_{\xx_9}(w)$.
	Let us distinguish the following cases according to the projection $\pi(w)$.
	
	\begin{itemize}
		
		\item ${\tt b}$:
		Checking the prefix of $\uu_9$, we can see that $r = {\tt b}$ and $s = {\tt ba}$.
		Let us write down a short prefix of $\dd_\uu({\tt b}) =  {\tt rsrsrsrrsrsr} \cdots$.
		Moreover, $\gap{\yy}{|{\tt b}|_{\tt a}} = \gap{\yy}{0} = \{ 1 \}$ and $\gap{\yy'}{|{\tt b}|_{\tt b}} = \gap{\yy'}{1} = \{ 6,8 \}$.
		If $\left( \begin{smallmatrix} \ell \\ k \end{smallmatrix} \right) \in \hat\S(\tt b)$, then
		$$
		\begin{pmatrix} \ell \\ k \end{pmatrix} \ \text{is a Parikh vector of a factor in} \ \dd_\uu({\tt b});
		$$
		$$
		k \Parickh(r) + \ell \Parickh(s) =
		k \begin{pmatrix} 0 \\ 1 \end{pmatrix} + \ell \begin{pmatrix} 1 \\ 1 \end{pmatrix}
		\equiv
		\begin{pmatrix} 0 \\ 0 \end{pmatrix}
		\pmod
		{\begin{pmatrix} 1 \\ 6 \ \text{or} \ 8 \end{pmatrix}}\,.
		$$
		Examining the above conditions, we get
		$
		\hat\S({\tt b}) =
		\left\{
		\left( \begin{smallmatrix} 3 \\ 3 \end{smallmatrix} \right), \left( \begin{smallmatrix} 2 \\ 4 \end{smallmatrix} \right) \right\}.
		$
		Indeed, ${\tt srsrrs}$ and ${\tt rsrrsr}$ are factors of $\dd_\uu({\tt b})$, thus $\left( \begin{smallmatrix} 3 \\ 3 \end{smallmatrix} \right), \left( \begin{smallmatrix} 2 \\ 4 \end{smallmatrix} \right)$ are their Parikh vectors.
		The shortest return word to $w \in \LL(\xx_9)$ with $\pi(w) = {\tt b}$ satisfies
		$$
		|v| =
		\min \left\{  k + 2\ell : \begin{pmatrix} \ell \\ k \end{pmatrix} \in \hat\S({\tt b}) \right\}
		= 8.
		$$
		
		\item ${\tt b^2}$:
		The pair associated with ${\tt b^2}$ is $(N,m) = (1,0)$, thus we have $\Parickh(r) = \left( \begin{smallmatrix} p_1 \\ q_1 \end{smallmatrix} \right) = \left( \begin{smallmatrix} 1 \\ 2\end{smallmatrix} \right), \ \Parickh(s) = \left( \begin{smallmatrix} p_0 \\ q_0 \end{smallmatrix} \right) = \left( \begin{smallmatrix} 0 \\ 1 \end{smallmatrix} \right)$ and $\theta' = [0, 3, \overline{2}] = \frac{1}{2+\sqrt{2}}$. Moreover, $\gap{\yy}{|{\tt b^2}|_{\tt a}} = \gap{\yy}{0} = \{ 1 \}$ and $\gap{\yy'}{|{\tt b^2}|_{\tt b}} = \gap{\yy'}{2} = \{ 24\}$.
		If $\left( \begin{smallmatrix} \ell \\ k \end{smallmatrix} \right) \in \hat\S(\tt b^2)$, then
		$$
		(k-1) \theta' - 1 < \ell < (k+1) \theta' + 1
		\ \ \text{and} \
		k, \ell \in \N;
		$$
		$$
		k \Parickh(r) + \ell \Parickh(s) =
		k \begin{pmatrix} 1 \\ 2 \end{pmatrix} + \ell \begin{pmatrix} 0 \\ 1 \end{pmatrix}
		\equiv
		\begin{pmatrix} 0 \\ 0 \end{pmatrix}
		\pmod
		{\begin{pmatrix} 1 \\ 24 \end{pmatrix}}\,.
		$$
		Examining the above conditions, we get
		$
		\hat\S({\tt b^2}) =
		\left\{ \left(\begin{smallmatrix} 4 \\ 10 \end{smallmatrix}\right),  \left( \begin{smallmatrix} 2 \\ 11 \end{smallmatrix} \right) \right\}.
		$
		The shortest return word to $w \in \LL(\xx_9)$ with $\pi(w) = {\tt b^2}$ satisfies
		$$
		|v| =
		\min \left\{ 3k + \ell : \begin{pmatrix} \ell \\ k \end{pmatrix} \in \hat\S({\tt b^2}) \right\}
		= 34.
		$$
		
		\item ${\tt b^2ab^2}$:
		The pair associated with ${\tt b^2ab^2}$ is $(N,m) = (1,1)$, thus we have $\Parickh(r) = \left( \begin{smallmatrix} p_1 \\ q_1 \end{smallmatrix} \right) = \left( \begin{smallmatrix} 1 \\ 2\end{smallmatrix} \right), \ \Parickh(s) = \left( \begin{smallmatrix} p_1 + p_0 \\ q_1 + q_0 \end{smallmatrix} \right) = \left( \begin{smallmatrix} 1 \\ 3 \end{smallmatrix} \right)$ and $\theta' = [0, \overline{2}] = \sqrt{2}-1$.
		Moreover, $\gap{\yy}{|{\tt b^2ab^2}|_{\tt a}} = \gap{\yy}{1} = \{ 2\}$ and $\gap{\yy'}{|{\tt b^2ab^2}|_{\tt b}} = \gap{\yy'}{4} = \{ 24\}$.
		If $\left( \begin{smallmatrix} \ell \\ k \end{smallmatrix} \right) \in \hat\S({\tt b^2ab^2})$, then
		$$
		(k-1) \theta' - 1 < \ell < (k+1) \theta' + 1
		\ \ \text{and} \
		k, \ell \in \N;
		$$
		$$
		k \Parickh(r) + \ell \Parickh(s) =
		k \begin{pmatrix} 1 \\ 2 \end{pmatrix} + \ell \begin{pmatrix} 1 \\ 3 \end{pmatrix}
		\equiv
		\begin{pmatrix} 0 \\ 0 \end{pmatrix}
		\pmod
		{\begin{pmatrix} 2 \\24\end{pmatrix}}\,.
		$$
		Examining the above three conditions, we get
		$
		\hat\S({\tt b^2ab^2}) =
		\left\{
		\left( \begin{smallmatrix} 12 \\ 30 \end{smallmatrix} \right)
		\right\}.
		$
		The shortest return word to $w \in \LL(\xx_9)$ with $\pi(w) = {\tt b^2ab^2}$ satisfies
		$$
		|v| =
		\min \left\{ 3k + 4\ell : \begin{pmatrix} \ell \\ k \end{pmatrix} \in \hat\S({\tt b^2ab^2}) \right\}
		= 138.
		$$
		
		\item ${\tt b^2ab^2ab^2}$:
		The pair associated with ${\tt b^2ab^2ab^2}$ is $(N,m) = (1,2)$, thus we have $\Parickh(r) = \left( \begin{smallmatrix} p_1 \\ q_1 \end{smallmatrix} \right) = \left( \begin{smallmatrix} 1 \\ 2 \end{smallmatrix} \right), \ \Parickh(s) = \left( \begin{smallmatrix} 2p_1 + p_0 \\ 2q_1 + q_0 \end{smallmatrix} \right) = \left( \begin{smallmatrix} 2 \\ 5 \end{smallmatrix} \right)$ and $\theta' = [0, 1, \overline{2}] = \frac{1}{\sqrt{2}}$. Moreover, we have $\gap{\yy}{|{\tt b^2ab^2ab^2}|_{\tt a}} = \gap{\yy}{2} = \{ 2\}$ and $\gap{\yy'}{|{\tt b^2ab^2ab^2}|_{\tt b}} = \gap{\yy'}{6} = \{ 24\}$.
		If $\left( \begin{smallmatrix} \ell \\ k \end{smallmatrix} \right) \in \hat\S({\tt b^2ab^2ab^2})$, then
		$$
		(k-1) \theta' - 1 < \ell < (k+1) \theta' + 1
		\ \ \text{and} \
		k, \ell \in \N;
		$$
		$$
		k \Parickh(r) + \ell \Parickh(s) =
		k \begin{pmatrix} 1 \\ 2 \end{pmatrix} + \ell \begin{pmatrix} 2 \\ 5 \end{pmatrix}
		\equiv
		\begin{pmatrix} 0 \\ 0 \end{pmatrix}
		\pmod
		{\begin{pmatrix} 2 \\24\end{pmatrix}}\,.
		$$
		Examining the above three conditions, we get
		$
		\hat\S({\tt b^2ab^2ab^2}) =
		\left\{
		\left( \begin{smallmatrix} 12 \\ 18  \end{smallmatrix} \right)
		\right\}.
		$
		The shortest return word to $w \in \LL(\xx_9)$ with $\pi(w) = {\tt b^2ab^2ab^2}$ satisfies
		$$
		|v| =
		\min \left\{ 3k + 7\ell : \begin{pmatrix} \ell \\ k \end{pmatrix} \in \hat\S({\tt b^2ab^2ab^2}) \right\}
		= 138.
		$$
		
		\item ${\tt b^3}$:
		The shortest bispecial factor containing ${\tt b^3}$ is $b = {\tt b^2 a b^2 a b^2 a b^3 a b^2ab^2ab^2}$.
		By Remark~\ref{rem:retwords_extension_to_BS} the derived sequences satisfy $\dd_\uu({\tt b^3}) = \dd_\uu(b)$ and the Parikh vectors of the corresponding return words coincide.
		It is the $6^{\text th}$ bispecial, hence associated with $(N,m) = (2,1)$ (see Remark~\ref{rem:N,m}).
		The slope of $\dd_\uu({\tt b^3})$ is, according to Proposition~\ref{ParikhRSB1}, equal to $\theta' = [0, 1, \overline{2}] = \frac{1}{\sqrt{2}}$.
		The return words $r$ and $s$ to ${\tt b^3}$ have the Parikh vectors
		$$
		\Parickh(r) =
		\begin{pmatrix} p_2 \\ q_2 \end{pmatrix} =
		\begin{pmatrix} 3 \\ 7 \end{pmatrix}
		\quad \mbox{and} \quad
		\Parickh(s) =
		\begin{pmatrix} p_2+p_1 \\ q_2+q_1 \end{pmatrix} =
		\begin{pmatrix} 4 \\ 9 \end{pmatrix}.
		$$
		Moreover, $\gap{\yy}{|{\tt b^3}|_{\tt a}} = \gap{\yy}{0} = \{ 1 \}$ and $\gap{\yy'}{|{\tt b^3}|_{\tt b}} = \gap{\yy'}{3} = \{ 24 \}$.
		If $\left( \begin{smallmatrix} \ell \\ k \end{smallmatrix} \right) \in \hat\S(\tt b^3)$, then
		$$
		(k-1) \theta' - 1 <
		\ell <
		(k+1) \theta' + 1
		\ \ \text{and} \
		k, \ell \in \N;
		$$
		$$
		k \Parickh(r) + \ell \Parickh(s) =
		k \begin{pmatrix} 3 \\ 7 \end{pmatrix} + \ell \begin{pmatrix} 4 \\ 9 \end{pmatrix}
		\equiv
		\begin{pmatrix} 0 \\ 0 \end{pmatrix}
		\pmod
		{\begin{pmatrix} 1 \\ 24 \end{pmatrix}}\,.
		$$
		Examining the above conditions, we get
		$\hat\S({\tt b^3}) =
		\left\{
		\left( \begin{smallmatrix} 3 \\ 3  \end{smallmatrix} \right) \right\}.$
		
		The shortest return word to $w \in \LL(\xx_9)$ with $\pi(w) = {\tt b^3}$ satisfies
		$$
		|v| =
		\min \left\{ 10k + 13\ell : \begin{pmatrix} \ell \\ k \end{pmatrix} \in \hat\S({\tt b^3}) \right\}
		= 69.
		$$
		
		\item ${\tt a}$:
		We have $r = {\tt ab^2}$ and $s = {\tt ab^3}$.
		Let us write down a short prefix of $\dd_\uu({\tt a}) = {\tt rrsrrsrr} \cdots$.
		Moreover, $\gap{\yy}{|{\tt a}|_{\tt a}} = \gap{\yy}{1} = \{ 2\}$ and $\gap{\yy'}{|{\tt a}|_{\tt b}} = \gap{\yy'}{0} = \{1\}$.
		If $\left( \begin{smallmatrix} \ell \\ k \end{smallmatrix} \right) \in \hat\S(\tt a)$, then
		$$
		\begin{pmatrix} \ell \\ k \end{pmatrix} \ \text{is a Parikh vector of a factor in} \ \dd_\uu({\tt a});
		$$
		$$
		k \Parickh(r) + \ell \Parickh(s) =
		k \begin{pmatrix} 1 \\ 2 \end{pmatrix} + \ell \begin{pmatrix} 1 \\ 3 \end{pmatrix}
		\equiv
		\begin{pmatrix} 0 \\ 0 \end{pmatrix}
		\pmod
		{\begin{pmatrix} 2 \\ 1\end{pmatrix}}\,.
		$$
		Examining the above conditions, we get
		$
		\hat\S({\tt a}) =
		\left\{
		\left( \begin{smallmatrix} 0 \\ 2 \end{smallmatrix} \right),
		\left( \begin{smallmatrix} 1 \\ 1 \end{smallmatrix} \right)
		\right\}.
		$
		Indeed, ${\tt rr}$ and ${\tt rs}$ are factors of $\dd_\uu({\tt a})$, thus $\left( \begin{smallmatrix} 0 \\ 2 \end{smallmatrix} \right), \left( \begin{smallmatrix} 1 \\ 1 \end{smallmatrix} \right)$ are Parikh vectors of some factors in $\dd_\uu({\tt a})$.
		The shortest return word to $w \in \LL(\xx_9)$ with $\pi(w) = {\tt a}$ satisfies
		$$
		|v| =
		\min \left\{ 3k + 4\ell : \begin{pmatrix} \ell \\ k \end{pmatrix} \in \hat\S({\tt a}) \right\}
		= 6.
		$$
		
		\item ${\tt ab}$:
		The shortest bispecial factor containing ${\tt ab}$ is ${\tt b^2 a b^2}$.
		By Remark~\ref{rem:retwords_extension_to_BS} the derived sequences satisfy $\dd_\uu({\tt ab}) = \dd_\uu({\tt b^2ab^2})$ and the Parikh vectors of the corresponding return words coincide.
		The only new parameters we have to determine in order to calculate $\hat\S({\tt ab})$ are the gaps: $\gap{\yy}{|{\tt ab}|_{\tt a}} = \gap{\yy}{1} = \{ 2 \}$ and $\gap{\yy'}{|{\tt ab}|_{\tt b}} = \gap{\yy'}{1} = \{ 6, 8 \}$.
		If $\left( \begin{smallmatrix} \ell \\ k \end{smallmatrix} \right) \in \hat\S({\tt ab})$, then
		$$
		\begin{pmatrix} \ell \\ k \end{pmatrix} \ \text{is a Parikh vector of a factor in} \ \dd_\uu({\tt b^2ab^2});
		$$
		$$
		k \Parickh(r) + \ell \Parickh(s) =
		k \begin{pmatrix} 1 \\ 2 \end{pmatrix} + \ell \begin{pmatrix} 1 \\ 3 \end{pmatrix}
		\equiv
		\begin{pmatrix} 0 \\ 0 \end{pmatrix}
		\pmod
		{\begin{pmatrix} 2 \\ 6 \ \text{or} \ 8\end{pmatrix}}\,.
		$$
		Examining the above conditions, we get
		$
		\hat\S({\tt ab}) =
		\left\{
		\left( \begin{smallmatrix} 2 \\ 6 \end{smallmatrix} \right)
		\right\}.
		$
		The shortest return word to $w \in \LL(\xx_9)$ with $\pi(w) = {\tt ab}$ satisfies
		$$
		|v| =
		\min \left\{ 3k + 4\ell : \begin{pmatrix} \ell \\ k \end{pmatrix} \in \hat\S({\tt ab}) \right\}
		= 26.
		$$
		
		\item ${\tt ba}$:
		Using similar arguments as for $\hat\S({\tt ab})$, we get $\hat\S({\tt ab}) = \hat\S({\tt ba})$.
		The lengths of the shortest return words to factors in $\xx_9$ with projections ${\tt ba}$ and ${\tt ab}$ are the same by 
		Theorem~\ref{prop:ShortestReturn}.
	\end{itemize}
	
	Finally, we have
	$$
	\begin{array}{rcl}
		E^{\text{short}}(\xx_9) & = & 1 + \max \left\{ {|w|}/{|v|}: w \in \LL^{\text{short}} \text{ and } v \in \R_{\xx_9}(w) \right \} \\
		& = & 1 + \max\left\{ \cfrac{5}{138}\,,\ \cfrac{1}{23}\,,\ \cfrac{4}{69}\,,\ \cfrac{1}{17}\,,\ \cfrac{1}{13}\,,\ \cfrac{1}{8}\,,\ \cfrac{1}{6} \right\} = 1+\cfrac{1}{6}\,.
	\end{array}
	$$
	
	\vspace{0.3cm}
	
	\noindent Second, we will describe $E(i,m)$ for $0\leq i \leq 7$ and $0\leq m \leq 1$.
	Let us recall all needed ingredients:
	$L_i = L = \sqrt{2}-1$ and
	$\lambda = 1-\sqrt{2} = -L$ is the non-dominant eigenvalue of the matrix $A^{(0)} = \left( \begin{smallmatrix} 0 & 1 \\ 1 & 2 \end{smallmatrix} \right)$ from Corollary~\ref{cor:limits}.
	The values of $(Q_N)$ are given in Table~\ref{tab:x9_parameters}.
	Let us apply Proposition~\ref{pro:onlySmall} in order to determine which values $I(h+i+NM, m)$ influence $E(i,m)$ besides the value $E^*(i,m)$.
	
	\begin{enumerate}
		\item $i = 0$:
		Since $|\lambda|^{N_0 H/M} |Q_{h-1}-L Q_h| = |\lambda|^{8N_0} |Q_{h-1}-L Q_h| \leq 2 L$ holds for $N_0 = 1$, we have $E(0,m) = \max\{ E^*(0,m), I(h,m) \}$ for $0 \leq m \leq 1$.
		Thus we have to treat separately $I(2,0), I(2,1)$.
		
		\item $i \in \{1,\dots, 7\}$:
		Since $|\lambda|^{8N_0} |Q_{h+i-1}-L Q_{h+i}| \leq 2 L $ holds for $N_0 = 0$, we have $E(i,m) = E^*(i,m)$ for $0 \leq m \leq 1$.
	\end{enumerate}
	
	Combining Formul\ae~\eqref{modified} and \eqref{SetS(im)}, we get
	$$
	I(h, m)=
	1 + \max \left\{ \frac{(1 + m)Q_{h} + Q_{h-1}-2}{(k + \ell m)Q_{h} + \ell Q_{h-1}} : \begin{pmatrix} \ell \\ k \end{pmatrix} \in \hat\S(0,m) \right\}.
	$$
	When calculating $E^*(\xx_9)$ we have determined $\hat\S(i,m)$ and $E^*(i,m)$ for the considered values $(i,m) \in \{(0,0), (0,1)\}$.
	Thus we have all we need to compute $I(2, 0)$ and $I(2,1)$.
	
	\begin{itemize}
		\item $I(2, 0) =
		1 + \max \left\{ \cfrac{Q_{2} + Q_{1}-2}{k Q_{2} + \ell Q_{1}} : \begin{pmatrix} \ell \\ k \end{pmatrix} \in \hat\S(0,0) \right\}
		= 1 + \frac{11}{138} <E^*(0,0).$
		
		\item $I(2, 1) =
		1 + \max \left\{ \cfrac{2 Q_{2} + Q_{1}-2}{(k + \ell )Q_{2} + \ell Q_{1}} : \begin{pmatrix} \ell \\ k \end{pmatrix} \in \hat\S(0,1) \right\}
		= 1 + \frac{21}{448} < E^*(0,1).$
	\end{itemize}
	
	\vspace{0.3cm}
	
	\noindent To conclude,
	$$
	E(\xx_9) =
	\max\{ E^{\text{short}}(\xx_9), E^*(\xx_9) \} =
	\max \left\{ 1 + \frac{1}{6} \,, \ 1 + \frac{\sqrt{2}+1}{10+6\sqrt{2}} \right \} =
	\frac{7}{6} \,.
	$$

\end{document}